\newcommand\versalita[1]{{\normalfont\textsc{#1}}}
\newcommand\ocalc{\ensuremath{\mathcal L_{\mathcal B}}\xspace}
\newcommand\zero{\ensuremath{0}}
\newcommand\one{\ensuremath{\mathbf 1}}
\newcommand\abstr[1]{#1.}
\newcommand\inl{\mbox{\it inl}}
\newcommand\inr{\mbox{\it inr}}
\newcommand\elimone{\delta_{1}}
\newcommand\elimzero{\delta_{\zero}}
\newcommand\elimwith{\delta_{\with}}
\newcommand\elimplus{\delta_{\oplus}}
\newcommand\elimtens{\delta_{\otimes}}
\newcommand\plus{\mathbin{\text{\normalfont\scalebox{0.7}{\faPlus}}}}
\newcommand\ov[1]{\overline{#1}}
\newcommand\irule[3]{\infer[\mbox{\footnotesize $#3$}]{#2}{#1}}
\newcommand\pair[2]{\langle #1, #2 \rangle}
\newcommand\topintro{\langle\rangle}
\newcommand\pqbit{\ensuremath{\mathbf{qubit}}}
\newcommand{\defeq}{\vcentcolon=}
\newcommand\aform[1]{\mathfrak a(#1)}
\newcommand{\eval}{\ensuremath{\mathrm{eval}}}
\newcommand{\lrb}[1]{{\llbracket #1 \rrbracket}}
\newcommand\ketbra[2]{\ket{#1}\!\!\bra{#2}}
\newcommand{\pmultimap}{\boldsymbol{\multimap}}
\newcommand{\pwith}{\boldsymbol{\with}}
\newcommand\p[1]{\boldsymbol{#1}}
\newcommand{\Plus}{\mathbin{\mathord{\begin{tikzpicture}[baseline=0ex, line width=1, scale=0.18]
\draw (1,0) -- (1,2);
\draw (0,1) -- (2,1);
\end{tikzpicture}}}}
\newcommand\sdot{\p \cdot}
\newcommand\s[1]{\scalebox{#1}}  %
\newcommand\sn[1]{\textbf{SN}(#1)}
\newcommand\ul[1]{\underline{#1}}
\newcommand{\secref}[1]{Section~\ref{#1}}
\newcommand{\appref}[1]{Appendix~\ref{#1}}
\newcommand{\id}{\mathrm{id}}
\newcommand{\CC}{{\mathbf C}}
\newcommand{\FHilb}{\ensuremath{\mathbf{FHilb}}}
\newcommand{\FCstar}{\ensuremath{\mathbf{FC^*}}}
\begin{document}

\title{IMALL with a Mixed-State Modality:\texorpdfstring{\\}{} A Logical Approach to Quantum Computation}
\titlerunning{IMALL with a Mixed-State Modality}

\author{
  Kinnari Dave$^{1,2}$
  \and
  Alejandro D\'iaz-Caro$^{2,3}$
  \and
  Vladimir Zamdzhiev$^{1}$
 }

 \institute{
    Université Paris-Saclay, CNRS, ENS Paris-Saclay, Inria, Laboratoire
    Méthodes Formelles, 91190, Gif-sur-Yvette, France
  \and
    Université de Lorraine, CNRS, Inria, LORIA, F-54000 Nancy, France
  \and
    Universidad Nacional de Quilmes, Bernal, Buenos Aires, Argentina
 }

\authorrunning{K. Dave, A. D\'iaz-Caro, V. Zamdzhiev}

\maketitle

\begin{abstract}
  We introduce a proof language for Intuitionistic
  Multiplicative Additive Linear Logic (IMALL), extended with a modality
  $\mathcal B$ to capture mixed-state quantum computation. The language
  supports algebraic constructs such as linear combinations, and embeds pure
  quantum computations within a mixed-state framework via $\mathcal B$,
  interpreted categorically as a functor from a category of Hilbert Spaces to 
  a category of finite-dimensional C*-algebras.
  Measurement arises as a definable term, not as a constant, and the system
  avoids the use of quantum configurations, which are part of the
  theory of the quantum lambda calculus. Cut-elimination is defined via a composite reduction relation, and shown to be sound with respect to the denotational interpretation. We prove that any linear map on $\mathbb C^{2^n}$
  can be represented within the system, and illustrate this expressiveness with
  examples such as quantum teleportation and the quantum switch.
\keywords{quantum computation \and linear logic \and algebraic lambda calculus \and mixed states \and proof theory \and categorical semantics}
\end{abstract}

\section{Introduction}
\label{sec:intro}
Quantum programming languages typically adopt either \emph{classical control}
or \emph{quantum control} as a guiding principle.  Each paradigm offers
advantages but also entails limitations.  The former enables measurement and
branching over classical data, yet typically restricts quantum computation to a
fixed set of unitary operations~\cite{Selinger04,SelingerValironMSCS06}.
The latter, by contrast, supports higher-order manipulation of quantum
data and richer linear-algebraic structure, but often lacks native support for
measurement and, consequently, mixed-state
computation~\cite{DiazcaroGuillermoMiquelValironLICS19,DiazcaroMalherbeLMCS22}.

In parallel with the development of quantum programming languages, a
complementary line of research seeks to understand the logical structure
underlying quantum computation.  Just as intuitionistic logic,
$\lambda$-calculus, and Cartesian closed categories are linked via the
Curry--Howard--Lambek
correspondence~\cite{Howard1980,Lambek1980,SorensenUrzyczyn06}, one may ask
whether there exists a natural logical system whose proof language corresponds
to quantum computation.  This perspective motivates the study of \emph{quantum
computational logic}, where proofs denote quantum processes and the
type system reflects the constraints and structure of quantum theory.

From this viewpoint, the challenge is not to build yet another quantum
programming language, but to construct a logical framework with a well-defined
proof theory that aligns with the operational and denotational semantics of
quantum computation.  Such a framework should ideally support superposition and
linear combinations natively, provide a proof-theoretic foundation for unitary
transformations and entangled systems, and admit a natural account of
measurement and mixed states.

This paper presents a proof language for Intuitionistic
Multiplicative Additive Linear Logic (IMALL), extended with a modality
$\mathcal B$ that captures the transition from pure to mixed states.  Our
system is based on quantum control and supports both pure and mixed-state
computation within a unified logical setting, with a compositional and adequate
semantics.

Our approach offers three major advantages over the typical approach adopted by 
quantum lambda calculi (QLCs)~\cite{SelingerValironMSCS06, 
DaveLemonnierPechouxZamdzhievFoSSaCS25,quantum-game,enriched-presheaf}. First, 
our approach enables a richer linear-algebraic structure: instead of treating 
unitary transformations as fixed constants, they can be defined and composed as 
first-class terms, allowing precise descriptions of quantum operations as linear 
maps. Second, it allows the expression of genuinely quantum control flows, such 
as the \emph{quantum switch}~\cite{OreshkovCostaBrukner2012}, where the order of 
operations is conditioned on the state of a qubit---a behaviour not efficiently 
representable in QLCs.  Third, quantum control dispenses with the need for an 
external global configuration---such as the ``quantum configurations'' in the 
QLC---to carry quantum state alongside the program.  In our system, both state 
and program are encoded in the same proof language, and measurement is not a 
fixed constant with special semantics, but a proof term built from 
the same algebraic structure as the rest of the language.

\paragraph{Contributions.} \label{subsec:contrib}
Our main contributions are as follows:

\begin{itemize}
  \item We define the logic $\mathcal B$-IMALL by extending IMALL with a
    modality $\mathcal B$ in a \emph{transparent} way: every provable formula
    in $\mathcal B$-IMALL remains provable in IMALL when occurrences of
    $\mathcal B$ are erased.
    Our system does not include LL exponentials, because we want to keep it simple while still being able to demonstrate the main point. There can be interaction between the $\mathcal B$ functor and the LL exponentials in the model (see \cite{LZ25} which requires complicated analytic structure), but it is unclear how to do this in a simple way on the logical side. This is interesting future work.

  \item The proof language supports linear combinations of terms through the
    operators $\sdot$ and $\Plus$, which preserve the
    structure of a complex vector space
    (Corollary~\ref{cor:linearity}).
    Our system is thus \emph{algebraically linear} in a precise sense.

  \item Unlike the algebraic $\lambda$-calculi~\cite{ArrighiDowekLMCS17,
    Vaux2009, AssafEtAl2014}, our language 
    integrates both pure and mixed-state computation. The syntax accommodates
    superpositions and measurements within the same framework.

  \item Unlike the Lambda-S family of
    calculi~\cite{DiazcaroDowekRinaldiBIO19,DiazcaroMalherbeACS2020,DiazcaroGuillermoMiquelValironLICS19,
    DiazcaroMalherbeLMCS22}, which introduce a new logic based on a dedicated
    modality $S$ to manage quantum superpositions, our system builds directly
    on the well-established logic IMALL. This ensures that the logical
    foundations are familiar, modular, and reusable, while still being
    expressive enough to represent both pure and mixed-state computation.

  \item The modality $\mathcal B$ embeds the pure fragment into the mixed one,
    allowing transitions from pure-state vectors to density matrices. This
    resolves a limitation of earlier approaches using the $\odot$
    connective~\cite{DiazcaroDowekTCS23a}, which were well suited to
    probabilistic settings but lacked a clear quantum denotational
    semantics~\cite{DiazcaroMalherbe24}.

  \item The cut-elimination process is defined via a composite reduction
    relation, obtained by interleaving a 
    linear-algebraic equivalence ($\leftrightarrows$) with a deterministic
    evaluation ($\rightarrow$), as formalised in~\secref{ssec:syntax}.
    This approach combines ideas from other frameworks: the relation $\rightarrow$
    captures the interaction between algebraic constructors and logical
    connectives, as in the $\odot$ and $\mathcal L\odot$
    calculi~\cite{DiazcaroDowekMSCS24,DiazcaroDowekTCS23a}, and is here extended
    to handle the $\mathcal B$ modality. The symmetric relation
    $\leftrightarrows$ ensures canonical vector forms, in the style of algebraic
    $\lambda$-calculi~\cite{ArrighiDowekLMCS17, Vaux2009, AssafEtAl2014}.

  \item Measurement is expressible as a program, not as a constant: rather than requiring a dedicated syntactic construct with external semantics, it emerges naturally as a proof term within the language.

  \item The language avoids the need for quantum lambda configurations (QLCs)~\cite{SelingerValironMSCS06,DaveLemonnierPechouxZamdzhievFoSSaCS25,quantum-game,enriched-presheaf}, by encoding both program and quantum state uniformly as terms.

  \item We provide a denotational model inspired by~\cite{coecke2016categories}. Our model defines a functor $\mathcal B$ mapping pure-state morphisms to completely positive maps. A similar functor also
	denoted with the same symbol first appeared in~\cite{DaveLemonnierPechouxZamdzhievFoSSaCS25}. This model validates the design of the language and supports an adequacy theorem.

  \item Finally, we show that every linear transformation over $\mathbb
    C^{2^n}$ can be encoded in the pure fragment of our language
    (\secref{sssec:encoding}), and demonstrate the expressiveness of the
    language through examples such as quantum teleportation and the quantum
    switch---highlighting its ability to naturally capture paradigmatic quantum
    phenomena.
\end{itemize}

Taken together, these results yield a novel proof language
that can express both pure and mixed-state quantum computation, grounded
in a transparent logical system with a compositional and adequate semantics.

Compared to previous systems, our approach treats quantum computation not
merely as a programming discipline but as a logical theory with intrinsic
proof-theoretic content.  Rather than relying on external control structures or
dedicated constants, we encode quantum data, operations, and measurements
uniformly as logical terms.  This design bridges foundational insights from
logic with the practical demands of quantum computation.

\paragraph{Structure of the paper.}

 \secref{sec:language} introduces the syntax of the proof terms, the derivation rules for both pure and mixed fragments, and the cut-elimination procedure. We also establish standard meta-theoretic properties such as preservation under reduction, progress, and strong normalisation.
 \secref{sec:model} presents a categorical model based on the categories $\mathbf{FHilb}$ and $\mathbf{FC^*}$, which motivates the interpretation of the $\mathcal B$ modality as a transition from pure to mixed-state computation.
 \secref{sec:denotational} defines the denotational semantics of the language and proves that it is sound and adequate with respect to cut-elimination.
 \secref{sssec:encoding} shows the expressive power of the language by encoding quantum notions such as qubits, gates, measurements, higher-order quantum control, quantum teleportation, and the quantum switch.
 \secref{sec:conclusion} summarises our contributions and gives directions for future work.

\section{The \texorpdfstring{\ocalc}{L-C}-calculus}
\label{sec:language}

\subsection{Syntax, Derivation Rules, and Cut-Elimination}
\label{ssec:syntax}
The calculus $\ocalc$ is an extension of the linear lambda calculus, viewed as a proof system for an extension of IMALL.
This extension consists of adding a new modality, $\mathcal{B}$, to the language, which enables the embedding of pure-state quantum computations into the mixed-state fragment of the calculus.
As a result, the language is divided into two fragments: one for pure-state quantum computation and another for mixed-state quantum computation.
Accordingly, the logic $\mathcal{B}$-IMALL has the following grammar.
\begin{align*}
  \p P & \defeq \p \top \mid \one \mid \p P \pmultimap \p P \mid \p P \pwith \p P \mid \p{P \otimes P} & \s{0.85}{(\versalita{Pure Propositions})}\\
  A    & \defeq \zero \mid 1 \mid A\multimap A\mid A\oplus A\mid A\otimes A\mid  \mathcal B(\p P) & \s{0.85}{(\versalita{Mixed Propositions})}
\end{align*}
We denote pure propositions by bold capital Latin letters, and mixed propositions by regular capital Latin letters.

Notice that, by removing the modality $\mathcal{B}$ and ignoring the distinction between bold and non-bold connectives, the two fragments collapse into IMALL.  
For example, $\p\top \oplus \one$ is not a valid proposition in $\mathcal{B}$-IMALL, whereas $\mathcal{B}(\p\top) \oplus 1$ is.  
Removing the modality from the latter, and identifying $\one$ with $1$, yields the syntactically valid IMALL proposition $\top \oplus 1$.

The proposition $\mathcal{B}(\p P)$ represents a mixed-state quantum computation on the Hilbert space represented by the proposition $\p P$.
The notation $\mathcal{B}(\cdot)$ is inspired by our denotational model, where $\mathcal{B}(H)$ denotes the $C^*$-algebra of linear operators on a finite-dimensional Hilbert space $H$.
Readers may consult the model (Section \ref{sec:model}) to get better intuition and understanding of this.

Note that the split between pure and mixed propositions is not justified by logical polarities or any other well-established logical criteria. Indeed, the status of logical polarities in relation to quantum theory was recently investigated in \cite{LZ25}, but only in the sense of mixed-state quantum information and entirely done in a semantic model of LL. In contrast, our work focuses on the combination of both pure and mixed quantum primitives and a large part of our work is focused on the logical treatment, not just the semantic model. 
Thus, the split between pure/mixed propositions used here is novel and inspired by the specifics of the quantum model.

The proof terms of the calculus are also divided into two fragments: one for pure-state quantum computation, corresponding to pure propositions, and another for mixed-state quantum computation, corresponding to mixed ones.
  \begin{align*}
    \p m    
    & \defeq  \p x  
    \mid \p{m \Plus m} 
    \mid \p{a} \sdot \p m 
    \mid \p{\topintro} 
    \mid \p{\ast} 
    \mid \p{\elimone( m, m)} 
    \mid \p{\lambda \abstr{x} m} 
    \mid \p{m~m} &
    \hspace{-25mm} \s{0.85}{(\versalita{Pure Terms})}\\
    &\phantom{\defeq\ } 
    \mid \p{\pair{m}{m}} 
    \mid \p{\elimwith^1(m,\abstr{x}m)} 
    \mid \p{\elimwith^2(m,\abstr{x}m)}
    \mid \p{m \otimes m} 
    \mid \p{\elimtens(m, \abstr{x y}m)} &
    \hspace{-25mm} \phantom{A}\\
    m    
    & \defeq x 
    \mid m \Plus m 
    \mid p \sdot m 
    \mid \elimzero(m)
    \mid \ast  
    \mid \elimone(m, m) 
    \mid \lambda \abstr{x}m 
    \mid m~m &\hspace{-25mm}  \s{0.85}{(\versalita{Mixed Terms})}\\
    &\phantom{\defeq\ }  
    \mid \inl(m) 
    \mid \inr(m)
    \mid \elimplus(m,\abstr{x}m,\abstr{y}m) 
    \mid m \otimes m 
    \mid \elimtens(m, \abstr{x y} m) 
    \mid \mathcal B(\p m) 
    \mid \tau(m) &  
    \hspace{-25mm} \phantom{A}
  \end{align*}
  Pure and mixed variables--denoted by $\p x, \p y, \p z$ and $x, y, z$, respectively--belong to an enumerable set $\s{0.85}{\versalita{Vars}}$.
The $\alpha$-equivalence relation and the free variables of a term are defined as usual. Terms are defined modulo $\alpha$-equivalence. 
Closed terms are the ones that do not contain free variables.  

Most terms in $\ocalc$ are standard for the linear lambda calculus. As before, to highlight the distinction between pure and mixed terms, pure terms are represented by bold letters $\p m, \p n, \p o$, while mixed terms are represented by the letters $m, n, o$.
Each construction corresponds to a rule of the logic.  

Starting with the pure fragment, the symbols $\Plus$ and $\sdot$ denote sums and scalar multiplication, which correspond to the following rules introduced in~\cite{DiazcaroDowekMSCS24}.
\[
  \irule{\p\Gamma\vdash\p P \quad \p\Gamma\vdash \p P}{\p\Gamma\vdash \p P}{}
  \qquad\qquad
  \irule{\p\Gamma\vdash\p P}{\p\Gamma\vdash \p P}{}
\]
Scalars in the pure fragment range over the field of complex numbers and are denoted by the letters $\p a, \p b, \p c$.  
We work with unnormalized vectors, since the techniques required to restrict the calculus to normalized vectors are similar to those used in~\cite{DiazcaroGuillermoMiquelValironLICS19}, and would make the calculus even more complex.

The other constructions are standard:
 The connective $\p \top$ has only an introduction rule, as usual, which is $\p \topintro$.
 The connective $\p \one$ has an introduction rule, given by $\p \ast$, and also an elimination rule, $\p{\elimone(m,n)}$, intended to be understood as sequencing. While not always necessary, it is useful in the presence of scalars.
 The connective $\p \multimap$ has the standard introduction and elimination rules: $\p{\lambda \abstr{x} m}$ and $\p{m~n}$, respectively.
 The connective $\p \with$ has an introduction rule given by $\p{\pair{m}{n}}$, and, since we present all rules in their extended version, it has the extended elimination rule $\p{\elimwith^i(m,\abstr{x}n)}$ for $i \in \{1,2\}$.
 The connective $\p \otimes$ has an introduction rule given by $\p{m \otimes n}$, and an elimination rule given by $\p{\elimtens(m,\abstr{xy}n)}$.

Similarly, the symbols $\Plus$ and $\sdot$ in the mixed fragment denote sums
and scalar multiplication. However, scalars in the mixed fragment range over
the non-negative real numbers and are denoted by the letters $p, q$. Again, we
work with unnormalized vectors, thus a generalization of probability
distributions.

The connectives $1$, $\multimap$, and $\otimes$ are completely analogous to those in the pure fragment.  
The connectives $\zero$ and $\oplus$ are standard: $\zero$ has only an elimination rule, as usual, given by $\elimzero(m)$;  
$\oplus$ has two introduction rules, $\inl(m)$ and $\inr(n)$, and an elimination rule given by $\elimplus(m,\abstr{x}n,\abstr{y}o)$.

Finally, there are two other constructions specific to the mixed fragment, which do not correspond to any logical connective:
\begin{itemize}
\item
 The modality $\mathcal B(\p m)$, which embeds pure terms into the mixed fragment. It has a rule that allows the inclusion of closed pure terms. Thus, a pure term $\p m$ can be embedded as $\mathcal B(\p m)$.
 The rule $\mathcal B^m$ behaves analogously to a modal introduction in necessity logic: from a closed derivation of $\p P$ one derives $\mathcal B(\p P)$. The other rules involving $\mathcal B$ ($\mathcal B(\pmultimap)^m_e$ and $\mathcal B(\p\otimes)^m$) are not eliminations of the modality, but rather structural rules describing how logical connectives behave under the embedding. This reflects the behaviour of the categorical functor $\mathcal B$, which maps between two distinct semantic domains: from pure-state morphisms in $\mathbf{FHilb}$ to completely positive maps in $\mathbf{FC}^*$ (see Section~\ref{sec:model}).
$\mathcal B$ can be understood as a structural operation that reflects a change of logical layer. In this sense, $\mathcal B(\p P)$ should be read as ``the mixed version of $\p P$''. Its behaviour is better captured by its categorical semantics, where $\mathcal B$ corresponds to a functor between categories representing different computational paradigms (from pure to mixed quantum computation).
\item
 The construction $\tau(m)$ is a \emph{coercion} operator that enables switching from the proposition $\mathcal B(\p P) \otimes \mathcal B(\p Q)$ to the proposition $\mathcal B(\p{P \otimes Q})$. This operator is crucial for the interaction between the pure and mixed fragments.
\end{itemize}

The pure and mixed values in this calculus are defined as follows:
\begin{align*}
  \p{v_b} & \defeq  \p{\ast} \mid \p x \mid \p{v_n \otimes v_n} \\
  \p{v_n} & \defeq  \p{\topintro} \mid \p{\lambda \abstr{x}m} \mid \p{\pair{v}{v}} \mid \p{v_b}\\ 
	\p v    & \defeq  \p{v_n} \mid  \textstyle\sum_{i=1}^n \p{a_i} \sdot \p{v_{bi}} \Plus \sum_{j=1}^l \p{v_{bj}} \mid \textstyle\sum_{i=1}^n \p 0 \sdot \p{v_{bi}} & \s{0.85}{(\versalita{Pure Values})}\\[1.2ex] 
  v_b     & \defeq  \ast \mid x \mid \inl(v_n) \mid \inr(v_n) \mid v_n \otimes v_n \mid \mathcal B(\p v)\\
  v_n     & \defeq  \lambda \abstr{x}m \mid v_b\\
	v       & \defeq  v_n \mid \textstyle\sum_{i=1}^n p_i \sdot v_{bi} \Plus \sum_{j=1}^l v_{bj}\mid \textstyle\sum_{i=1}^n 0 \sdot v_{bi} & \s{0.85}{(\versalita{Mixed Values})}
\end{align*}
where, in the summations, each $\p{v_{bi}}, \p{v_{bj}}$ (resp. $v_{bi}, v_{bj}$) is assumed to be a distinct value, and $\p{a_i} \neq \p{0},\p{1}$ (resp. $p_i > 0, p_i \neq 1$).

From now on, we will always assume that the summation symbol $\sum$ is used in the context of values, and that the summands are distinct.

Pure values $\p v$ are composed of linear combinations of the values $\p{v_b}$, and of values $\p{v_n}$, which are not linear combinations.  
The values $\p{v_n}$ correspond to the connectives $\p\top$, $\p\multimap$, and $\p\with$, for which linear combinations commute with the introduction rules and can be distributed internally. For example, the term $\pair{\p m_1}{\p n_1}\Plus\pair{\p m_2}{\p n_2}$ can be rewritten as $\pair{\p m_1\Plus\p m_2}{\p n_1\Plus\p n_2}$.  
In contrast, the values $\p{v_b}$ correspond to the connectives $\p\one$ and $\p\otimes$, where such commutation is not possible. For instance, the term $(\p m_1 \p\otimes \p n_1) \Plus (\p m_2 \p\otimes \p n_2)$ cannot be rewritten by commuting $\Plus$ with $\p\otimes$.

Similarly, mixed values $v$ are composed of linear combinations of the values $v_b$, and of values $v_n$, which are not linear combinations.

Henceforth, we may use the letters $M, N, O$ to denote \emph{both} mixed and pure terms, the letters $U, V, W$ to denote both mixed and pure values, and $S, T$ to denote both mixed and pure propositions.  
Additionally, we use the Greek letters $\alpha, \beta$ to denote scalars from either fragment.

For simplicity, we may use the same symbols for connectives in both fragments when using capital letters. For example, we may write $\otimes$ to refer to either $\p\otimes$ or $\otimes$, depending on context.  
Similarly, we may write $x$ instead of $\p x$, $\ast$ instead of $\p\ast$, and so on, when the distinction is clear.

Whenever we use metavariables like $M$, $N$, or $S$ in place of concrete terms or propositions, we assume that all components belong to the same fragment.  
For instance, since $\p{\elimone}(\p t,r)$ is not a valid term---because $\p{\elimone}$ and $\p t$ belong to the pure fragment and $r$ to the mixed one---expressions like $\elimone(M,N)$ implicitly assume that $M$, $N$, and the connective itself all belong to the same fragment.
\medskip

The logic $\mathcal{B}$-IMALL is defined by the syntax given above, together with a set of derivation rules, presented in Figure~\ref{fig:derivation_rules}.  
The rules are formulated in natural deduction style, with proof terms annotating the derivations.  

A context is a list of variable declarations, where each variable is associated with a proposition. A pure context is denoted by $\p\Gamma$ or $\p\Delta$, while a mixed context is denoted by $\Gamma$ or $\Delta$.  
As before, we simply write $\Gamma$ or $\Delta$ when the distinction is clear or when referring to both fragments.  
The notation $\Gamma, \Delta$ denotes the concatenation of the two disjoint contexts $\Gamma$ and $\Delta$.

The names of the rules include a superscript $^p$ indicating it belongs to the pure fragment or a superscript $^m$ for the mixed fragment. The absence of a superscript indicates that the rule applies to both fragments.  
In such cases, we may optionally add the superscript later when specifying one of the fragments.

\begin{figure}[t]
\centering
\[
  \irule{}
  {x : S \vdash x : S}
  {\mbox{ax}}
  \qquad
  \irule{{\Gamma \vdash M : S} & {\Gamma \vdash N : S}}
  {{\Gamma \vdash M \Plus N : S}}
  {\mbox{sum}}
  \qquad
  \irule{\Gamma \vdash M : S}
  {\Gamma \vdash \alpha \sdot M : S}
  {\mbox{prod($\alpha$)}}
\]
\[
  \irule{}
  {\p{\Gamma \vdash \topintro : \top}}
  {\mbox{$\top_i^p$}}
  \quad
  \irule{\Gamma \vdash m : \zero}
  {\Gamma , \Delta \vdash \elimzero(m) : A}
  {\mbox{$\zero_e^m$}}
  \quad
  \irule{}
  {\vdash \ast : 1}
  {\mbox{$1_i$}}
  \quad
  \irule{\Gamma \vdash M : 1 & \Delta \vdash N : S}
  {\Gamma, \Delta \vdash \elimone(M, N) : S}
  {\mbox{$1_e$}}
\]
\[
  \irule{\Gamma, x : S \vdash M : T}
  {\Gamma \vdash \lambda \abstr{x}M : S \multimap T}
  {\mbox{$\multimap_i$}}
\qquad
  \irule{\Gamma \vdash M : S \multimap T & \Delta \vdash N : S}
  {\Gamma, \Delta \vdash M~N : T}
  {\mbox{$\multimap_e$}}
\]
\[
  \irule{\p{\Gamma \vdash m : P} & \p{\Gamma \vdash n : Q}}
  {\p{\Gamma \vdash \pair{m}{n} : P \with Q}}
  {\mbox{$\pwith_i^p$}}
\]
\[
  \irule{\p{\Gamma \vdash m : P \with Q} & \p{\Delta, x : P \vdash n : R}}
  {\p{\Gamma, \Delta \vdash \elimwith^1(m,\abstr{x}n) : R}}
  {\mbox{$\pwith_{e_1}^p$}}
  \quad
  \irule{\p{\Gamma \vdash m : P \with Q} & \p{\Delta, x : Q \vdash n : R}}
  {\p{\Gamma, \Delta \vdash \elimwith^2(m,\abstr{x}n) : R}}
  {\mbox{$\pwith_{e_2}^p$}}
\]
\[
  \irule{\Delta \vdash m : A}
  {\Delta \vdash \inl(m) : A \oplus B}
  {\mbox{$\oplus_{i1}^m$}}
  \qquad
  \irule{\Delta \vdash n : B}
  {\Delta \vdash \inr(n) : A \oplus B}
  {\mbox{$\oplus_{i2}^m$}}
\]
\[
  \irule{\Delta \vdash m : A \oplus B & \Gamma , x : A \vdash n : C &  \Gamma , y : B \vdash o : C}
  {\Delta , \Gamma \vdash \elimplus(m,\abstr{x}n,\abstr{y}o) : C}
  {\mbox{$\oplus_e^m$}}
\]
\[
  \irule{\Delta \vdash M : S_1 &  \Gamma \vdash N : S_2}
  {\Delta, \Gamma \vdash M \otimes N : S_1 \otimes S_2}
  {\mbox{$\otimes_i$}}
  \qquad
  \irule{\Delta \vdash M : S_1 \otimes S_2 & \Gamma , x : S_1, y : S_2 \vdash N : T}
  {\Delta, \Gamma \vdash \elimtens(M,\abstr{x y}N) : T}
  {\mbox{$\otimes_e$}}
\]
\[
  \irule{\p{\vdash m : P}}
  {\vdash \mathcal B(\p m): \mathcal B(\p P)}
  {\mbox{$\mathcal B^m$}}
\]
\[
  \irule{\Gamma\vdash m : \mathcal B (\p P \pmultimap \p Q) & \Delta\vdash n : \mathcal B(\p P)}
  {\Gamma,\Delta\vdash m~n : \mathcal B(\p Q)}
  {\mbox{$\mathcal B(\pmultimap)_e^m$}}
  \qquad
  \irule{\Gamma\vdash m: \mathcal B(\p P) \otimes \mathcal B(\p Q)}
  {\Gamma\vdash \tau(m): \mathcal B(\p P \p \otimes \p Q)}
  {\mbox{$\mathcal B(\otimes)^m$}}
\]
\[
  \irule{\Delta , y : S_1, x : S_2 , \Gamma \vdash M : T}
  {\Delta , x : S_2, y : S_1, \Gamma \vdash M : T}
  {\mbox{ex}}
\]
\caption{Derivation rules.}
\label{fig:derivation_rules}
\end{figure}

All the rules are standard, except for the rules sum and prod$(\alpha)$, which correspond to sums and scalar multiplication, respectively, and the rules related to the $\mathcal{B}$ modality.  
Notice that the rules sum and prod$(\alpha)$ do not change the provability associated with formulae.

The rules related to $\mathcal{B}$ are the following:
\begin{itemize}
  \item The rule $\mathcal{B}^m$ allows the introduction of \emph{closed} pure terms into the mixed fragment. There are two reasons for this being restricted to closed terms.  
    First, the design and construction of proof terms for propositions $\mathcal{B}(\p P)$ are inspired by the denotational model, where $\mathcal{B}$ is interpreted as a functor between two categories. As such, it cannot be composed directly with morphisms from the domain category.
    Second, allowing open proof terms of propositions $\p P$ to be promoted to proof terms of $\mathcal{B}(\p P)$ would permit arbitrary transformations from pure to mixed propositions to be expressed as valid terms in $\ocalc$.
    This goes against the intended design, which views the pure fragment as a subsystem of $\ocalc$ dedicated to pure-state quantum computation.  
    In contrast, the mixed fragment is regarded as the ``main'' language of the calculus, with the pure fragment \emph{included} into it to provide expressive power for encoding pure-state quantum computation.

  \item The rule $\mathcal{B}(\pmultimap)_e^m$ allows the elimination of an embedded pure arrow.  

  \item The rule $\mathcal{B}(\otimes)^m$ is a coercion rule. It allows one to transition from the proposition $\mathcal{B}(\p P) \otimes \mathcal{B}(\p Q)$ to the less informative proposition $\mathcal{B}(\p{P \otimes Q})$---just as type-casting an integer to a real ``forgets'' its integer nature.  
\end{itemize}

Cut-elimination in $\ocalc$ is defined via a reduction relation $\hookrightarrow$ (Definition~\ref{def:hook}), formulated in terms of two auxiliary relations: an equivalence relation $\leftrightarrows$, capturing basic algebraic identities between proof terms, and a cut-reduction relation $\rightarrow$. Contexts are defined by the following grammar:
  \begin{align*}
    K & \defeq  [.] 
    \mid K \Plus M 
    \mid \alpha \sdot K 
    \mid \elimzero(K)
    \mid \elimone(K,M) 
    \mid K~M 
    \mid V~K \\ 
    &\phantom{\defeq\ } 
    \mid \p \langle K\p{,m\rangle} 
    \mid \p{\langle v,}K \p \rangle 
    \mid \p{\elimwith^1(}K \p{, \abstr{x}m)}
    \mid \p{\elimwith^2(}K\p{, \abstr{x}t)} \\
    &\phantom{\defeq\ }
    \mid \inl(K) 
    \mid \inr(K) 
    \mid \elimplus(K,x.m, y.m)\\
    &\phantom{\defeq\ }
    \mid K \otimes M 
    \mid V \otimes K 
    \mid \elimtens(K, \abstr{xy}M) 
    \mid \mathcal B(K)
    \mid \tau(K) 
  \end{align*}
 
  The equivalence relation $\leftrightarrows$ on terms is defined in Figure~\ref{fig:bidir_rules} (top part).  The rules
are divided into two groups: the left column corresponds to the rules usually
seen in algebraic lambda
calculi~\cite{AssafEtAl2014,ArrighiDowekLMCS17,Vaux2009,DiazcaroGuillermoMiquelValironLICS19},
while the right column corresponds to rules involving the $\otimes$ connective.

\begin{figure}[!ht]
  \centering
    \parbox{0.55\textwidth}{
      \begin{align*}
	1 \sdot M &\leftrightarrows M\\
	M_1 \Plus M_2 &\leftrightarrows M_2 \Plus M_1\\
	M_1 \Plus (M_2 \Plus M_3) &\leftrightarrows (M_1 \Plus M_2) \Plus M_3\\
	\alpha \sdot M \Plus \beta \sdot M &\leftrightarrows (\alpha + \beta) \sdot M\\
	\alpha \sdot (\beta \sdot M) &\leftrightarrows \alpha\beta \sdot M\\
	\alpha \sdot (M_1 \Plus M_2) &\leftrightarrows \alpha \sdot M_1 \Plus \alpha \sdot M_2\\
	\text{If }M \neq 0 \sdot M',\ M \Plus 0 \sdot N &\leftrightarrows M
      \end{align*}
    }
    \parbox{0.44\textwidth}{
      \begin{align*}
	M_1 \otimes (\alpha\sdot M_2) &\leftrightarrows \alpha \sdot M_1 \otimes M_2\\
	(\alpha \sdot M_1) \otimes M_2 &\leftrightarrows \alpha \sdot M_1 \otimes M_2 \\
	(M_1 \Plus M_2) \otimes N &\leftrightarrows M_1 \otimes N \Plus M_2 \otimes N\\
	M \otimes (N_1 \Plus N_2) &\leftrightarrows M \otimes N_1 \Plus M \otimes N_2
      \end{align*}
    }
    \begin{align} 
      \elimone(\alpha\sdot \ast,M) &\rightarrow \alpha \sdot M\label{rule:one} \\
      (\lambda \abstr{x}M)~V &\rightarrow (V/x)M\label{rule:arrow} \\
      \p{\elimwith^1(\pair{v}{w}, {\abstr{x}t})} &\rightarrow \p{( v/ x) t}\label{rule:withleft}\\
      \p{\elimwith^2(\pair{v}{w}, {\abstr{y}t})} &\rightarrow \p{( w/ y) t}\label{rule:withright}\\
      \elimplus(\inl(v),\abstr{x}m,\abstr{y}n) & \rightarrow  (v/x)m\label{rule:oplusleft}\\
      \elimplus(\inr(v),\abstr{x}m,\abstr{y}n) & \rightarrow  (v/y)n\label{rule:oplusright}\\
      \elimtens(V_{n} \otimes W_{n},\abstr{x y}M) & \rightarrow (V_{n}/x,W_{n}/y)M \label{rule:tensor}
      \\[1.2ex]
      \p a \sdot \p \topintro &\rightarrow \p \topintro \label{rule:commtop}\\
      \textstyle\sum_{i=1}^n \alpha_i \sdot \lambda \abstr{x}M_i &\rightarrow \lambda \abstr{x}(\textstyle\sum_{i=1}^n\alpha_i \sdot M_i) \label{rule:commarrow}\\
      \textstyle\sum_{i=1}^n \p{a_i} \sdot \p{\pair{v_{i}}{w_{i}}} &\rightarrow \p{\langle}\textstyle\sum_{i=1}^n \p{a_i} \sdot \p{v_{i},}\textstyle\sum_{i=1}^n \p{a_i} \sdot \p{w_{i}\rangle} \label{rule:commwith}\\
      \elimplus(\textstyle\sum_{i=1}^n p_i \sdot v_{bi},\abstr{x}o,\abstr{y}s) &\rightarrow \textstyle\sum_{i=1}^n p_i \sdot \elimplus(v_{bi},\abstr{x}o,\abstr{y}s) \label{rule:commoplus}\\
      \elimtens(\textstyle\sum_{i=1}^n\alpha_i \sdot V_{bi},\abstr{x y}M) &\rightarrow \textstyle\sum_{i=1}^n \alpha_i \sdot \elimtens(V_{bi},\abstr{x y}M) \label{rule:commtensor}\\
      \inl(\textstyle\sum_{i=1}^n p_i \sdot v_{bi}) &\rightarrow \textstyle\sum_{i=1}^n p_i \sdot \inl(v_{bi}) \label{rule:comminl}\\
      \inr(\textstyle\sum_{i=1}^n p_i \sdot v_{bi}) &\rightarrow \textstyle\sum_{i=1}^n p_i \sdot \inr(v_{bi}) \label{rule:comminr}
      \\[1.2ex]
      \tau(\mathcal B(\p{v}) \otimes \mathcal B(\p{w})) &\rightarrow \mathcal B(\p{v \otimes w})\label{rule:casting}\\
      \tau(\textstyle\sum_{i=1}^n p_i \sdot v_{bi}) &\rightarrow \textstyle\sum_{i=1}^n p_i \sdot \tau(v_{bi}) \label{rule:commtau}\\
      \text{If }\p a \neq \p 1,\ \mathcal B(\p a \sdot \p{v_b}) &\rightarrow |\p a|^2 \sdot \mathcal B(\p{v_b})\label{rule:commB}\\ 
      (\textstyle\sum_{i=1}^np_i \sdot \mathcal B(\p{v_i}))~(\textstyle\sum_{j=1}^kq_j \sdot \mathcal B(\p{w_j})) &\rightarrow \textstyle\sum_{i,j}p_iq_j \sdot \mathcal B(\p{v_i~w_j}) \label{rule:commBapp}
    \end{align}
    \[\kern3.6em
      \infer[]
      {K[M] \leftrightarrows K[N]}
      {M \leftrightarrows N}
      \qquad
      \infer[]
      {K[M] \rightarrow K[N]}
      {M  \rightarrow N}
    \]
  \caption{Reduction relations $\leftrightarrows$ and $\rightarrow$.}
  \label{fig:red_rightarrow}
  \label{fig:bidir_rules}
\end{figure}

\begin{definition}[$\hookrightarrow$]
  \label{def:hook}
  The relation $\hookrightarrow$ 
  is defined as:
  \begin{center}
    $\vcenter{
    \infer[]
    {M \hookrightarrow N}
    {M \rightarrow N}
  }
    \qquad
    \vcenter{
    \infer[]
    {M \hookrightarrow V}
    {M \leftrightarrows V}
  }
    \qquad
  \vcenter{
    \infer[]
    {M \hookrightarrow N} 
    {M \not\rightarrow \quad M\leftrightarrows \aform{M} \quad \aform{M} \rightarrow N}
  }$
  \end{center}

  We denote by $\equiv$ the reflexive, transitive, and symmetric closure of $\hookrightarrow$.
\end{definition}


The relation $\leftrightarrows$ gives rise to the possibility of defining a canonical form for terms.  
To this end, we introduce the notion of \emph{algebraic form} of a term, which is a linear combination of distinct \emph{base forms} of terms.  
These concepts are defined by the following grammar. As before, the symbols correspond to both fragments unless otherwise specified.  
We recall that, as per our earlier convention, all summations are assumed to involve pairwise distinct summands.
\begin{align*}
  \mathfrak b &
  \defeq x 
  \mid \p{\topintro} 
  \mid \elimzero(\mathfrak a) 
  \mid \ast 
  \mid \elimone(\mathfrak a,\mathfrak a) 
  \mid \lambda \abstr{x} \mathfrak{a} 
  \mid \mathfrak{a}~\mathfrak{a} 
  \mid \p{\langle}\mathfrak{a}\p ,\mathfrak{a}\p{\rangle} 
  \mid \p{\elimwith^i(}\mathfrak{a}\p{, \abstr{x}}\mathfrak{a}\p)
  &\hspace{-12mm} \s{0.85}{(\versalita{Base Form})}
  \\
  &\phantom{\defeq}
  \mid \inl(\mathfrak{a}) 
  \mid \inr(\mathfrak{a}) 
  \mid \elimplus(\mathfrak{a}, \abstr{x}\mathfrak{a}, \abstr{y}\mathfrak{a}) 
  \mid \mathfrak{b} \otimes \mathfrak{b} 
  \mid \elimtens(\mathfrak{a}, \abstr{xy}\mathfrak{a}) 
  \mid \mathcal B(\mathfrak{a}) 
  \mid \tau(\mathfrak{a})
  \\
  \mathfrak a &
  \defeq \textstyle\sum_{i=1}^n 0 \sdot \mathfrak{b}_i 
  \mid \textstyle\sum_{i=1}^n \alpha_i \sdot \mathfrak{b}_i~(\alpha_i \neq 0)
  &\hspace{-12mm}\s{0.85}{(\versalita{Algebraic Form})}
\end{align*}
Note that the choice of a symmetric relation and the use of algebraic forms are made purely for convenience.  
The rules could be oriented, following~\cite{AssafEtAl2014}, so that canonical forms coincide with normal forms.
\begin{restatable}[Uniqueness]{theorem}{Uniqueness}
  \label{lem:unique_alg}
  If $\Gamma \vdash M : T$, then there is a unique algebraic form $\mathfrak a$
  such that $M \leftrightarrows \mathfrak a$.  We denote the algebraic form of
  $M$ by $\aform{M}$.
\end{restatable}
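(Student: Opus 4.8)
The plan is to prove both existence and uniqueness by induction on the structure of the term $M$ (equivalently, on the derivation of $\Gamma \vdash M : T$), exploiting the fact that the equivalence $\leftrightarrows$ is generated by the rules in Figure~\ref{fig:bidir_rules}, which are local rewrites that either distribute scalars and sums through the $v_n$-style constructors ($\lambda$, $\pair{-}{-}$, $\topintro$), or collect summands, or normalise scalar coefficients (combining $\alpha \sdot (\beta \sdot M)$ into $(\alpha\beta)\sdot M$, absorbing $1\sdot M$, etc.). For existence, I would show that every term can be rewritten by $\leftrightarrows$ into something matching the grammar of algebraic forms: for each syntactic constructor, apply the induction hypothesis to the immediate subterms to put them in algebraic form, then push the outermost constructor inside. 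For the ``base-form'' constructors — application, $\elimtens$, $\elimzero$, $\inl$, $\inr$, $\mathcal B$, $\tau$, $\otimes$ — the constructor wrapping a linear combination does not itself distribute (e.g.\ $\mathfrak a \otimes \mathfrak a$ stays a single base form built from the algebraic forms of its parts), so this is immediate once the subterms are in algebraic form. For the ``linear'' constructors $\lambda$, $\pair{-}{-}$, $\topintro$, one uses the distributivity rules of $\leftrightarrows$ to turn, say, $\lambda x.\sum_i \alpha_i \sdot \mathfrak b_i$ into $\sum_i \alpha_i \sdot \lambda x.\mathfrak b_i$, and for $\pair{-}{-}$ one must merge the two lists of summands coordinatewise (this is the content of the example $\pair{\p m_1}{\p n_1}\Plus\pair{\p m_2}{\p n_2} \leftrightarrows \pair{\p m_1\Plus\p m_2}{\p n_1\Plus\p n_2}$ in the text). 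Finally, the outermost $\Plus$ and $\sdot$ of the term itself are handled by concatenating summand lists and multiplying through scalars, then using the collecting rules of $\leftrightarrows$ to merge any now-equal base forms and to delete zero-coefficient summands (or, if everything cancels, land in the $\sum 0 \sdot \mathfrak b_i$ clause).

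For uniqueness, the cleanest route is to define a \emph{normalising function} $\aform{-}$ directly by recursion on terms — essentially reading off the existence argument as an algorithm — and then prove two lemmas: (i) $M \leftrightarrows \aform{M}$ for every well-typed $M$ (this is existence, reorganised), and (ii) $\aform{-}$ is invariant under each generating step of $\leftrightarrows$, i.e.\ if $M \leftrightarrows N$ by one rule application then $\aform{M} = \aform{N}$ as (syntactic) algebraic forms, where equality of algebraic forms is taken modulo reordering of summands and modulo recursively-equal base forms. From (i) and (ii), if $M \leftrightarrows \mathfrak a$ and $\mathfrak a$ is already an algebraic form, then $\aform{M} = \aform{\mathfrak a} = \mathfrak a$ (the last equality because $\aform{-}$ is the identity on things already in algebraic form, which is a straightforward structural check), giving uniqueness. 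Lemma (ii) is proved by cases on the generating rule; each case is a finite computation showing the rule does not change the normal form — for the distributivity rules one checks that $\aform{-}$ distributes the same way, and for the associativity/commutativity/collection rules on $\Plus$ and $\sdot$ one checks that $\aform{-}$ already quotients by exactly those identities.

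The main obstacle I expect is bookkeeping around the treatment of sums of base forms: the grammar insists that summands be \emph{pairwise distinct} base forms, so the recursion must, at every $\Plus$-node and at every linear constructor that merges coordinatewise, detect and combine equal base forms — and ``equal'' here is itself a recursive notion (two base forms are equal when they are built by the same constructor from recursively-equal algebraic-form arguments). One has to be careful that this comparison is well-defined and that it interacts correctly with the distributivity rules (e.g.\ after pushing $\lambda$ inside, two previously distinct summands $\lambda x.\mathfrak a_1$ and $\lambda x.\mathfrak a_2$ might need to be re-merged if $\mathfrak a_1 = \mathfrak a_2$, but they can't be, since they came from distinct summands of a single algebraic form — this kind of invariant needs to be stated and maintained). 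A secondary subtlety is making sure the degenerate ``all-zero'' clause $\sum_{i=1}^n 0 \sdot \mathfrak b_i$ is handled consistently: the normal form of, say, $0 \sdot M$ should depend only on the base forms appearing in $\aform{M}$ and not on their coefficients, and one must check $\leftrightarrows$ validates exactly this. Once these invariants are pinned down, every individual case of both the existence induction and the invariance lemma (ii) is routine.
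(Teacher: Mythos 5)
There is a genuine gap, and it is about which rules actually generate $\leftrightarrows$. Your proposal treats the commutations of scalars and sums through the ``$v_n$-style'' constructors ($\lambda$, $\pair{\cdot}{\cdot}$, $\p\topintro$) as part of the equivalence $\leftrightarrows$, and your normalising function pulls linear combinations out of these constructors (e.g.\ turning $\lambda \abstr{x}\sum_i \alpha_i\sdot\mathfrak b_i$ into $\sum_i \alpha_i\sdot\lambda \abstr{x}\mathfrak b_i$, and merging pairs coordinatewise). In the paper those commutations are \emph{not} $\leftrightarrows$-rules: they are the directed rules \eqref{rule:commtop}, \eqref{rule:commarrow}, \eqref{rule:commwith} of $\rightarrow$, and Figure~\ref{fig:bidir_rules} gives $\leftrightarrows$ only the vector-space identities for $\Plus$/$\sdot$ together with bilinearity of $\otimes$. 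This is not mere bookkeeping: the grammar of base forms deliberately keeps linear combinations \emph{inside} $\lambda$, $\p{\langle\cdot,\cdot\rangle}$, the eliminators, $\mathcal B$ and $\tau$ (a base form is $\lambda\abstr{x}\mathfrak a$, $\p\langle\mathfrak a\p,\mathfrak a\p\rangle$, etc.), and uniqueness holds precisely because $\leftrightarrows$ cannot move scalars across these constructors --- if it could, $1\sdot\lambda\abstr{x}(2\sdot x)$ and $2\sdot\lambda\abstr{x}(1\sdot x)$ would be two distinct algebraic forms equivalent to the same term. Concretely, with your definition of $\aform{-}$, lemma (i) ($M\leftrightarrows\aform M$) fails because the steps you use are not in $\leftrightarrows$, and your check that $\aform{-}$ is the identity on algebraic forms also fails, since $1\sdot\lambda\abstr{x}(2\sdot x)$ is already an algebraic form in the paper's sense but your function would rewrite it.

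The fix is to align the recursion with the actual generators: at every constructor other than $\Plus$, $\sdot$ and $\otimes$, simply wrap the algebraic forms of the subterms into a single base form with coefficient $1$ (so $\aform{\lambda\abstr x m}=1\sdot\lambda\abstr x\aform m$, $\aform{\p{\pair{m}{n}}}=1\sdot\p\langle\aform{\p m}\p,\aform{\p n}\p\rangle$, and likewise for applications, eliminators, $\mathcal B$, $\tau$); distribute only at $\otimes$, and collect/merge summands and scalars only at $\Plus$- and $\sdot$-nodes. This is exactly what the paper's proof does, by structural induction on $M$. Apart from this misidentification, your overall skeleton --- a recursively defined $\aform{-}$, plus invariance of $\aform{-}$ under each generating rule of $\leftrightarrows$ to obtain uniqueness --- is sound and in fact spells out the uniqueness half more explicitly than the paper's own argument, and your concerns about pairwise-distinct summands and the all-zero clause are legitimate details that the paper handles by convention; but the invariance lemma must be proved for the actual $\leftrightarrows$-rules only, not for the $\rightarrow$-commutations.
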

\begin{proof}
  Proof is by induction on $M$. See \appref{app:proofs} for details.
\end{proof}
Note that the values given in \secref{ssec:syntax} are in their algebraic forms.

The rewrite relation $\rightarrow$ is also defined in Figure~\ref{fig:red_rightarrow}.  
As usual, we write $(N/x)M$ for the substitution of $N$ for $x$ in $M$.  
In the rules involving summations, we assume---as per our earlier convention---that all summands are pairwise distinct.  
The rules are divided into three groups:
\begin{itemize}
  \item Rules~\eqref{rule:one} to~\eqref{rule:tensor}
    correspond to cut-elimination steps: an
    introduction rule followed by an elimination rule for each connective.  The
    only non-standard case is the rule for $1$ (and $\one$), which carries a
    scalar $\alpha$ in the elimination.

  \item Rules~\eqref{rule:commtop}
    to~\eqref{rule:commtensor},  consist of
    commutation rules that allow the proof constructors $\Plus$ and $\sdot$ to commute
    with each connective.  In some cases, commutation applies to introduction
    rules (for $\top$, $\multimap$, and $\with$); in others, to elimination
    rules (for $\oplus$ and $\otimes$).  There is no commutation rule for
    $\zero$, and no need for one for $1$, which is already handled by
    Rule~\eqref{rule:one}.  The last two rules of this group,
    Rules~\eqref{rule:comminl}
    and~\eqref{rule:comminr}, are not strictly
    necessary, as they are subsumed by
    Rule~\eqref{rule:commoplus}.  However, they are useful for
    obtaining better normal forms.  For instance, the term $\inl(p \sdot \ast)
    \Plus \inl(q \sdot \ast)$ reduces, using $\rightarrow$ and
    $\leftrightarrows$, to $(p + q) \sdot \inl(\ast)$.

  \item Rule~\eqref{rule:casting}
    is a coercion rule to transition from
    proposition $\mathcal{B}(\p P) \otimes \mathcal{B}(\p Q)$  to 
    proposition $\mathcal{B}(\p{P \otimes Q})$.  Rule~\eqref{rule:commtau}
    is a commutation rule for the coercion operator $\tau$.
    Rule~\eqref{rule:commB} allows a scalar to commute with the
    $\mathcal{B}$ operator.  Note that $\mathcal{B}$ is not a linear operator.
    Intuitively, if $\p{v_b}$ represents a quantum state in vector form, then
    $\mathcal{B}(\p{v_b})$ represents the same state in its density matrix
    form.  As a consequence, the scalar $\p a$ is multiplied by $\overline{\p a}$ when commuting with
    $\mathcal{B}$.  Finally, Rule~\eqref{rule:commBapp} distributes
    applications over the $\mathcal{B}$ operator.
\end{itemize}

Intuitively, $\hookrightarrow$ interleaves $\leftrightarrows$ and
$\rightarrow$: it rewrites any term via $\leftrightarrows$ until it either
reduces under $\rightarrow$ or reaches normal form. For instance, the term
$\elimplus(\inl(\ast) \Plus \inl(\ast), \abstr{x}m, \abstr{y}n)$ does not
reduce under $\rightarrow$ alone (because $\inl(\ast)\Plus\inl(\ast)$ is not a value), but rewriting it via $\leftrightarrows$ to
$\elimplus(2 \sdot \inl(\ast), \abstr{x}m, \abstr{y}n)$ makes it reducible.
Each step of $\leftrightarrows$ is a purely algebraic manipulation. In the
following section, we show that $\hookrightarrow$ satisfies the standard
properties of a well-behaved reduction system.

\subsection{Correctness}
We state the standard correctness theorems. See \appref{app:proofs} for omitted proofs.

\begin{restatable}[Progress]{theorem}{progress}
  \label{thm:prog}
  $\vdash M : T$ implies $T \neq \zero$, and either $M$ is a value $V$, or there exists $N$ such that $M \hookrightarrow N$.
  \qed
\end{restatable}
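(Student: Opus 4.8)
The plan is to establish the two conjuncts by essentially independent arguments: a ``consistency'' argument for $T\neq\zero$ and a structural argument for the dichotomy. For $T\neq\zero$ I would not induct on the typing derivation directly (the rule $\multimap_e$, and the rules that bind a variable into the context, do not cooperate), but instead prove soundness of the degenerate two-valued interpretation sending $\zero$ to $\bot$, $1$ and $\p\top$ to $\top$, $\multimap$ to Boolean implication, $\pwith$ and $\otimes$ to $\wedge$, $\oplus$ to $\vee$, and $\mathcal B(\p P)$ to the value of $\p P$. A routine induction on derivations shows that $\Gamma\vdash M:T$ implies $\llbracket\Gamma\rrbracket\leq\llbracket T\rrbracket$ in the two-element Boolean algebra, where $\llbracket\Gamma\rrbracket$ is the meet of the declared types: every rule, including the structural rules and $\zero_e$, is sound there, essentially because each rule of $\mathcal B$-IMALL becomes a valid classical inference once the $\mathcal B$'s and the bold/non-bold distinction are forgotten. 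For closed $M$ this forces $\llbracket T\rrbracket=\top$, hence $T\neq\zero$. (Alternatively one can invoke transparency of $\mathcal B$-IMALL together with consistency of IMALL.)

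For the dichotomy I would first record a \emph{canonical forms} lemma read off the value grammar: a closed value of type $1$ is $\alpha\sdot\ast$; of $S\multimap T$ a $\lambda$-abstraction; of $\p P\pwith\p Q$ a pair; of $A\oplus B$ an $\inl(v_n)$, an $\inr(v_n)$, or a linear combination of such base values; of $S_1\otimes S_2$ a base value $V_n\otimes W_n$ or a linear combination of $\otimes$-base-values; and of $\mathcal B(\p P)$ a $\mathcal B(\p v)$ or a linear combination of such. The key point is that at the ``negative'' connectives $\p\top,\multimap,\pwith$ a sum is never in value form, since $\p\topintro$, $\lambda$-abstractions and pairs are classified as $v_n$ and not $v_b$; this is exactly why the commutation rules \eqref{rule:commtop}, \eqref{rule:commarrow}, \eqref{rule:commwith} are present. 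The technical heart is then the claim that \emph{every closed well-typed term in algebraic form admitting no $\rightarrow$-step is a value}. I would prove it by mutual structural induction on base forms $\mathfrak b$ and algebraic forms $\mathfrak a=\sum_i\alpha_i\sdot\mathfrak b_i$: the immediate subterms of each $\mathfrak b_i$ are $\rightarrow$-normal algebraic (respectively base) forms, hence values by the induction hypothesis, so if some $\mathfrak b_i$ were an \emph{elimination} form its principal subterm, being a value of the matching type by canonical forms, would expose a cut redex (\eqref{rule:one}--\eqref{rule:tensor}, \eqref{rule:casting}, \eqref{rule:commBapp}) or a commutation redex (\eqref{rule:commoplus}, \eqref{rule:commtensor}, \eqref{rule:commtau}), contradicting normality; and $\elimzero(-)$ is ruled out because its argument would be a closed term of type $\zero$. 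Hence every $\mathfrak b_i$ is an introduction form with value subterms, and rules \eqref{rule:comminl}--\eqref{rule:commB}, together with \eqref{rule:commarrow} and \eqref{rule:commwith}, rule out precisely the sum-shapes that fail to be values, so $\mathfrak a$ itself is a value.

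Progress then follows from the three clauses of Definition~\ref{def:hook}. Let $M$ be closed, well-typed, and not a value. If $M\rightarrow N$, the first clause gives $M\hookrightarrow N$. Otherwise $M\not\rightarrow$; put $\mathfrak a=\aform{M}$, which exists and is unique by Theorem~\ref{lem:unique_alg} and is again closed and well-typed since $\leftrightarrows$ preserves typing. If $\mathfrak a$ is a value then $M\leftrightarrows\mathfrak a$ and the second clause gives $M\hookrightarrow\mathfrak a$; if $\mathfrak a$ is not a value, the heart claim in contrapositive gives $\mathfrak a\rightarrow N$, and the third clause applies since $M\not\rightarrow$ and $M\leftrightarrows\mathfrak a$. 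Combined with $T\neq\zero$, this is the theorem.

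I expect the main obstacle to be the heart claim above. The case analysis is long, and two points require care. First, one must verify that the immediate subterms genuinely lie in algebraic form so that the induction hypothesis applies; the $\mathfrak b\otimes\mathfrak b$ clause, whose subterms are base forms rather than full algebraic forms, is what forces the induction to be mutual. Second, ``admits no $\rightarrow$-step'' has to be read modulo the $\leftrightarrows$-reordering and re-association of a summation, because the evaluation-context grammar exposes only the leading summand of a $\Plus$; consequently, exposing a redex buried in a later summand of $\aform{M}$ may require a preliminary $\leftrightarrows$ step, which is precisely what the third clause of $\hookrightarrow$ is designed to accommodate. By contrast the $T\neq\zero$ half, while carrying the genuine ``consistency'' content, is technically almost free.
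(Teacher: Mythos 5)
Your proposal is sound in substance but takes a genuinely different route from the paper. The paper proves progress by a single structural induction on the term $M$: in each case it applies the induction hypothesis to the subterms in evaluation position and then argues inline, from the type of the resulting values, that either a $\rightarrow$-rule fires or a $\leftrightarrows$-rewrite yields a value; there is no separate canonical-forms lemma, no lemma about algebraic forms, and the conjunct $T \neq \zero$ is never argued independently (it is only used, through the induction hypothesis, to dispose of the $\elimzero$ case). You instead factor the theorem into (i) a consistency argument for $T\neq\zero$ via the two-valued collapse (or transparency plus consistency of IMALL), (ii) an explicit canonical-forms lemma, (iii) a lemma that closed, well-typed, $\rightarrow$-normal algebraic forms are values, proved by mutual induction on base and algebraic forms, and (iv) an assembly through the three clauses of Definition~\ref{def:hook}, making the role of Theorem~\ref{lem:unique_alg} explicit. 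Both routes rest on the same ingredients; yours buys a cleaner separation of concerns and, in particular, an honest proof of $T\neq\zero$, which the paper leaves implicit, at the cost of a long case analysis concentrated in the key lemma.

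One caveat: as literally stated, your key lemma clashes with the strict grammars. Algebraic forms carry explicit scalars (the paper's own uniqueness proof sets $\aform{\ast} = 1\sdot\ast$), whereas the value grammar forbids coefficient $1$ and forbids $v_n$-summands under $\sdot$; hence $1\sdot\ast$, or $1\sdot(v_n\otimes v_n)$, is a closed, well-typed, $\rightarrow$-normal algebraic form that is not a value, and your assembly would then look for a $\rightarrow$-step that does not exist, even though clause two of Definition~\ref{def:hook} handles such an $M$ directly. Likewise, a redex sitting in a non-leading summand is not exposed by the context grammar ($K\Plus M$ but no $M\Plus K$), a point you flag yourself. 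Both issues are repaired by weakening the lemma's conclusion to ``is $\leftrightarrows$-equivalent to a value, or some $\leftrightarrows$-rearrangement of it admits a $\rightarrow$-step,'' which your assembly already accommodates; the paper quietly makes the same identifications (``the values \ldots are in their algebraic forms,'' ``after potential rewrites under $\leftrightarrows$''), so this is bookkeeping to be fixed in the statement of your key lemma, not a flaw in the approach.
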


\begin{restatable}[Preservation]{theorem}{subred}
  \label{thm:sub-red}
  $\Gamma \vdash M : T$ and $M \hookrightarrow N$ imply $\Gamma \vdash N : T$.
  \qed
\end{restatable}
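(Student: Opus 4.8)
The plan is to deduce preservation for $\hookrightarrow$ from preservation for its two constituents, $\rightarrow$ and $\leftrightarrows$, glued together by the uniqueness of algebraic forms (Theorem~\ref{lem:unique_alg}). First I would set up the usual infrastructure. (i) \emph{Inversion lemmas} for each term former, stated modulo the structural rules $\mathrm{ex}$, $\mathrm{sum}$, $\mathrm{prod}(\alpha)$: for instance $\Gamma \vdash \textstyle\sum_i \alpha_i \sdot \mathfrak b_i : S$ forces $\Gamma \vdash \mathfrak b_i : S$ for every $i$, and $\Gamma \vdash \lambda\abstr x M : U$ forces $U = S \multimap T$ with $\Gamma, x : S \vdash M : T$ up to a permutation of $\Gamma$. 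These hold by routine induction, since $\mathrm{sum}$, $\mathrm{prod}$ and $\mathrm{ex}$ never change the head constructor of the subject. (ii) A \emph{linear substitution lemma}: $\Gamma, x : S \vdash M : T$ and $\Delta \vdash N : S$ imply $\Gamma, \Delta \vdash (N/x)M : T$, by induction on the first derivation, tracking which branch of each context split absorbs $x$. (iii) A \emph{context-replacement lemma} for the contexts $K$: a derivation of $\Gamma \vdash K[M] : T$ exposes, at the hole, a sub-context $\Gamma'$ and type $S$ with $\Gamma' \vdash M : S$, and re-grafting any derivation of $\Gamma' \vdash N : S$ there yields $\Gamma \vdash K[N] : T$ (the linear splits along the spine of $K$ are fixed by the derivation, so this is well defined); this is what handles the congruence rules of $\rightarrow$ and $\leftrightarrows$.

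Next I would establish preservation for $\leftrightarrows$ in exactly the shape needed downstream, namely $\Gamma \vdash M : T \implies \Gamma \vdash \aform M : T$. The cleanest route is to piggy-back on the induction already carried out for Theorem~\ref{lem:unique_alg}: each defining equation of $\leftrightarrows$, read in the orientation that drives a term towards its algebraic form, preserves types — $1\sdot M \leftrightarrows M$, associativity and commutativity of $\Plus$, $\alpha\sdot M \Plus \beta\sdot M \leftrightarrows (\alpha+\beta)\sdot M$ (and $\alpha+\beta$ stays in the relevant scalar set), $M \Plus 0\sdot N \leftrightarrows M$ (used left-to-right, discarding a subterm), and the four $\otimes$-distributivity equations — so the compositionally-built $\aform M$ inherits the type of $M$. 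Since every value is already in algebraic form, uniqueness then gives: if $M \leftrightarrows V$ with $V$ a value, then $V = \aform M$, hence $\Gamma \vdash V : T$.

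The heart of the argument is preservation for $\rightarrow$, i.e.\ $\Gamma \vdash M : T$ and $M \rightarrow N$ imply $\Gamma \vdash N : T$, by induction on the derivation of $M \rightarrow N$; the congruence case is immediate from (iii). For the base rules I would do a case analysis: the cut rules \eqref{rule:one}--\eqref{rule:tensor} follow from inversion plus the substitution lemma (with \eqref{rule:one} additionally using $\mathrm{prod}(\alpha)$ to re-attach the scalar carried by the $1$-eliminator); the commutation rules \eqref{rule:commtop}--\eqref{rule:comminr} follow from inversion — peel off the $\mathrm{sum}$/$\mathrm{prod}$ layer, re-type each summand's components under the \emph{common} context (which is additive for $\pwith$-introduction and for $\oplus$-elimination, precisely the property that makes these commutations sound) and reassemble; and the $\mathcal B$-rules \eqref{rule:casting}--\eqref{rule:commBapp} follow from inversion on $\mathcal B_i^m$, $\mathcal B(\otimes)^m$, $\mathcal B(\pmultimap)_e^m$ together with the pure introduction/elimination rules, using that the premise of $\mathcal B_i^m$ is closed, so all contexts in play are empty. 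Preservation for $\hookrightarrow$ is then immediate from its three defining cases: $M \rightarrow N$ uses the $\rightarrow$-case; $M \leftrightarrows V$ uses $V = \aform M$ and the $\leftrightarrows$-case; and $M \not\rightarrow$, $M \leftrightarrows \aform M$, $\aform M \rightarrow N$ chains the $\leftrightarrows$-case with the $\rightarrow$-case.

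I expect the main obstacle to be bookkeeping rather than anything conceptual. Among the base rules of $\rightarrow$ the delicate ones are the $\mathcal B$-rules: \eqref{rule:commB}, where the pure scalar $\p a$ is replaced by $|\p a|^2$ and one must note that this is a legitimate \emph{mixed} scalar (a non-negative real), so that $\mathrm{prod}(|\p a|^2)$ applies; and \eqref{rule:commBapp}, where the products $p_i q_j$ of non-negative reals stay non-negative and the resulting double sum must be re-typed under the empty context inherited from the closed $\mathcal B$-arguments. The other point requiring care is not to over-state preservation for $\leftrightarrows$: two-sided preservation for $M \Plus 0\sdot N \leftrightarrows M$ fails for an ill-typed $N$, so one must rely on the fact that $\hookrightarrow$ only ever invokes $\leftrightarrows$ to reach algebraic forms or values — the orientation in which it is type-preserving — which is exactly why phrasing the $\leftrightarrows$-lemma as ``$\aform M$ inherits the type of $M$'' is the right move.
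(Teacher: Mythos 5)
Your proposal is correct and follows essentially the same route as the paper: reduce preservation for $\hookrightarrow$ to type preservation along $\leftrightarrows$ plus preservation for $\rightarrow$, prove the latter by induction on the reduction with a linear substitution lemma for the cut rules \eqref{rule:one}--\eqref{rule:tensor} and inversion together with the sum/prod($\alpha$) rules for the commutation and $\mathcal B$ rules. Your handling of the $\leftrightarrows$ step via $\aform{M}$ and Theorem~\ref{lem:unique_alg} (noting that the rule $M \Plus 0\sdot N \leftrightarrows M$ is only type-preserving in the orientation actually used by $\hookrightarrow$) is a more careful rendering of what the paper dispatches in a single remark, but it is the same argument, not a different one.
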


\begin{restatable}[Strong Normalisation]{theorem}{strongnorm}
  \label{cor:strong-norm}
  $\Gamma \vdash M : T$ implies $M$ terminates.
\end{restatable}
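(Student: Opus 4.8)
The plan is to prove Theorem~\ref{cor:strong-norm} by a reducibility (Tait--Girard) argument, adapted to the algebraic and modal structure of \ocalc\ in the style of the algebraic $\lambda$-calculi~\cite{ArrighiDowekLMCS17,Vaux2009} and of the $\mathcal L\odot$-calculus~\cite{DiazcaroDowekMSCS24}. The first step is to reduce the claim to strong normalisation of $\rightarrow$ \emph{modulo} $\leftrightarrows$. Inspecting Definition~\ref{def:hook}, every $\hookrightarrow$-step is either a single $\rightarrow$-step, or a step $M \leftrightarrows V$ into a value (and values are $\hookrightarrow$-normal, a routine sub-lemma built on Theorem~\ref{thm:prog}), or a rewriting of $M$ to its algebraic form $\aform M$ followed by a single $\rightarrow$-step. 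Since $\aform M$ is unique by Theorem~\ref{lem:unique_alg} and, as one first checks, is reached from $M$ in finitely many $\leftrightarrows$-steps, an infinite $\hookrightarrow$-chain would yield an infinite sequence $M_0 \leftrightarrows^* M_0' \rightarrow M_1 \leftrightarrows^* M_1' \rightarrow \cdots$ with infinitely many $\rightarrow$-steps. So it suffices to forbid the latter; using that $\leftrightarrows$ preserves typing (as in Theorem~\ref{thm:sub-red}) and is compatible with the reducibility sets defined next, it is enough to exhibit reducibility sets whose members are strongly normalising for $\rightarrow$ and closed under $\leftrightarrows$.

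For the core argument, I would define, by induction on propositions, reducibility sets $\mathrm{Red}_T$: for $\zero$ and $1$, the $\rightarrow$-strongly normalising terms of that type; for $S \multimap T$, the $M$ with $M~N \in \mathrm{Red}_T$ for every $N \in \mathrm{Red}_S$; for $S \otimes T$, $A \oplus B$ and $\p P \pwith \p Q$, the usual definitions through the corresponding eliminators. In every case $\mathrm{Red}_T$ is moreover required to be closed under $\Plus$ and $\sdot$ and under $\leftrightarrows$, and to contain every neutral term all of whose one-step reducts it contains --- the analogues of Girard's (CR1)--(CR3) together with the algebraic closure conditions of~\cite{ArrighiDowekLMCS17}; one checks these requirements are mutually consistent and that $\mathrm{Red}_T$ consists of $\rightarrow$-strongly normalising terms. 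For the modality, I would define $\mathrm{Red}_{\mathcal B(\p P)}$ from the pure reducibility sets $\mathrm{Red}^{\mathrm p}_{\p P}$ --- defined by the same recipe on pure propositions, and \emph{independently} of the mixed ones since $\mathcal B_i^m$ embeds only closed pure terms --- roughly as the algebraic and $\leftrightarrows$-closure of $\{\mathcal B(\p v) : \p v \in \mathrm{Red}^{\mathrm p}_{\p P}\}$ augmented with the relevant neutral terms. The adequacy lemma --- if $x_1:S_1,\dots,x_k:S_k \vdash M : T$ and $N_i \in \mathrm{Red}_{S_i}$, then $(N_1/x_1,\dots,N_k/x_k)M \in \mathrm{Red}_T$ --- is then proved by induction on the derivation: the logical rules are standard, sum and prod$(\alpha)$ use algebraic closure, $\mathcal B_i^m$ uses that pure reducibility implies pure strong normalisation, and $\mathcal B(\pmultimap)_e^m$, $\mathcal B(\otimes)^m$ and $\tau$ are handled via Rules~\eqref{rule:casting}--\eqref{rule:commBapp}. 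Instantiating $N_i := x_i$ (variables are reducible, being neutral) gives $M \in \mathrm{Red}_T$, hence $M$ is $\rightarrow$-strongly normalising, and with the first step $M$ terminates under $\hookrightarrow$.

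I expect the main obstacle to be twofold. On the algebraic side, the bookkeeping around $\leftrightarrows$ and algebraic forms must be done carefully --- the side-conditions ``$M \neq 0\sdot M'$'', and the commutation rules~\eqref{rule:commarrow}, \eqref{rule:commwith}, \eqref{rule:commoplus} and \eqref{rule:commtensor}, which \emph{duplicate} subterms (for instance~\eqref{rule:commoplus} copies the two branches once per summand) --- so that $\mathrm{Red}_T$ is genuinely $\leftrightarrows$-closed and the $\rightarrow$-reduction of algebraic forms is well-founded; this is exactly the delicate point already present in~\cite{ArrighiDowekLMCS17,DiazcaroDowekTCS23a}, whose treatment I would follow. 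On the modal side, the new difficulty is Rule~\eqref{rule:commBapp}, which rewrites $(\sum_i p_i \sdot \mathcal B(\p{v_i}))~(\sum_j q_j \sdot \mathcal B(\p{w_j}))$ into a sum of $n\cdot k$ terms, each containing a fresh pure application $\p{v_i}~\p{w_j}$: it is the only rule that creates genuinely new redexes. The point is that each $\p{v_i}$ has pure arrow type, hence is a $\lambda$-abstraction applied to a pure value, so the new redexes lie entirely within the (independently) strongly normalising pure fragment, and $\mathrm{Red}_{\mathcal B(\p P)}$, being built from pure reducibility, absorbs them; making this interaction precise, and checking that the pure-fragment development is genuinely free of circular dependence on the mixed one, is the crux.
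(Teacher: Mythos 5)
Your plan follows the same broad Tait--Girard strategy as the paper (reducibility predicates indexed by propositions, an adequacy lemma proved by induction on derivations, conclusion via the identity substitution), but the technical machinery is genuinely different. The paper's proof does \emph{not} build algebraic closure into the candidates: it uses Girard's ultra-reduction, extending $\hookrightarrow$ with the erasure rules $M \Plus N \hookrightarrow M$, $M \Plus N \hookrightarrow N$ and $\alpha \sdot M \hookrightarrow M$, which makes ``if $M,N$ are SN then $M \Plus N$ is SN'' and the corresponding fact for $\sdot$ provable directly by induction on reduction lengths (Lemmas~\ref{lem:plus_sn} and~\ref{lem:dot_sn}); SN for the extended relation then trivially implies SN for the original. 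Moreover, the paper's interpretation $\sn{T}$ is not the applicative candidate you propose (``$M~N$ reducible for all reducible $N$''); it is the set of strongly normalising terms whose value reducts have reducible components (e.g.\ $\p m \in \sn{\p P \pmultimap \p Q}$ iff $\p m$ is SN and whenever $\p m \hookrightarrow^* \lambda\abstr{x}n$, all substitution instances by reducible values land in $\sn{\p Q}$), with CR3 for neutral terms and a per-rule adequacy lemma for each connective, including $\mathcal B_i^m$, $\mathcal B(\pmultimap)_e^m$ and $\tau$. Your approach buys a more standard-looking candidate structure; the paper's buys a much lighter treatment of the algebraic layer, because the closure properties you \emph{impose} on $\mathrm{Red}_T$ (closure under $\Plus$, $\sdot$, $\leftrightarrows$, consistently with CR1) are exactly the delicate facts that ultra-reduction lets one \emph{prove} outright --- your proposal defers precisely this point (``one checks these requirements are mutually consistent'') to the Arrighi--Dowek-style literature, so the hardest part of the argument is not actually carried out.

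Two further cautions on your preliminary reduction to ``$\rightarrow$ modulo $\leftrightarrows$''. First, the sub-lemma that values are $\hookrightarrow$-normal is false as stated: by the second clause of Definition~\ref{def:hook}, any value that is a sum admits $\hookrightarrow$-steps to other values via the symmetric rules of $\leftrightarrows$ (e.g.\ commutativity of $\Plus$), so an infinite $\hookrightarrow$-chain need not contain infinitely many $\rightarrow$-steps; one must either read ``terminates'' as reaching a value or quotient out such trivial $\leftrightarrows$-loops, which is implicit in the paper but should be made explicit in your reduction step. Second, your treatment of Rule~\eqref{rule:commBapp} (the new pure redexes $\p{v_i~w_j}$ are absorbed by the independently-defined pure reducibility sets) is sound and corresponds to the paper's Lemma~\ref{lem:sn_arrow_l}, so that part of the plan is in good shape.
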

\begin{proof}
  The proof uses Girard's ultra-reduction technique~\cite{Girard72}, extending
  the relation with $\Plus$-elimination and scalar erasure rules. Strong
  normalisation of the extended system implies that of the original. See
  \appref{app:SN}. \qed
\end{proof}

\begin{theorem}[Confluence]
  \label{thm-confluence}
  Let $\Gamma \vdash M : T$.  
  If $M \hookrightarrow^* M_1$ and $M \hookrightarrow^* M_2$,  
  then there exists a value $V$ such that $M_1 \hookrightarrow^* V \ ^*\!\!\hookleftarrow M_2$.
\end{theorem}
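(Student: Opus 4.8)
The plan is to derive confluence from strong normalisation plus local confluence, via Newman's lemma, and then to identify the common normal form as a value using Progress. Since every well-typed term is strongly normalising under $\hookrightarrow$ (Theorem~\ref{cor:strong-norm}) and typing is stable under reduction (Theorem~\ref{thm:sub-red}), $\hookrightarrow$ restricted to well-typed terms is a terminating abstract reduction system, so it suffices to establish \emph{local confluence}: whenever $\Gamma \vdash M : T$, $M \hookrightarrow N_1$ and $M \hookrightarrow N_2$, there is a term $P$ with $N_1 \hookrightarrow^* P$ and $N_2 \hookrightarrow^* P$. Newman's lemma then gives a unique $\hookrightarrow$-normal form for $M$, and since $M_1$ and $M_2$ are well-typed reducts of $M$ they both reduce to it. Finally, a well-typed $\hookrightarrow$-normal term cannot be reducible, so by Progress (Theorem~\ref{thm:prog}) it is a value $V$ — for closed $M$ this is immediate, and the open case follows the same argument, free variables being themselves base values.

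For local confluence I would analyse the shape of $M$ and the finitely many overlapping redexes, organised along the three clauses of Definition~\ref{def:hook}. Overlaps of two $\rightarrow$-steps at disjoint or nested positions close routinely by congruence; the real critical pairs arise when a cut rule~\eqref{rule:one}--\eqref{rule:tensor} overlaps a commutation rule~\eqref{rule:commtop}--\eqref{rule:comminr}, or when the modality rules~\eqref{rule:casting}--\eqref{rule:commBapp} compete (for instance a $\tau(\mathcal B(\p v) \otimes \mathcal B(\p w))$ redex whose subterm is also a linear combination, so that~\eqref{rule:casting} and~\eqref{rule:commtau} both apply), and each such pair is joined by a short computation with $\rightarrow$ and $\leftrightarrows$. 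The delicate point is the interaction with $\leftrightarrows$: because $\hookrightarrow$ may route a non-$\rightarrow$-reducible term through its algebraic form, I would use the Uniqueness theorem (Theorem~\ref{lem:unique_alg}) to treat $\leftrightarrows$ as equality on the canonical representative $\aform{\cdot}$, in effect proving confluence of $\rightarrow$ \emph{modulo} $\leftrightarrows$ (local confluence together with local coherence of $\rightarrow$ with $\leftrightarrows$, in Huet's sense), and then observing that $\aform{\cdot}$ is independent of the order of algebraic rewriting, so that the clause ``$M \leftrightarrows \aform{M} \rightarrow N$'' is always reconcilable with a competing direct step ``$M \rightarrow N'$''.

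The main obstacle is exactly this bookkeeping around the summation-indexed rules together with the algebraic-form detour built into $\hookrightarrow$. Rules~\eqref{rule:commarrow}, \eqref{rule:commwith}, \eqref{rule:commoplus}, \eqref{rule:commtensor}, \eqref{rule:commtau} and~\eqref{rule:commBapp} act on an entire linear combination at once and are only meaningful once the term has been collected into algebraic form, so ``the'' redex is determined only up to $\leftrightarrows$; one must show that all ways of firing such a rule, after the appropriate $\leftrightarrows$-rewriting, land in a common $\hookrightarrow$-descendant. Moreover $\mathcal B$ is not linear — rule~\eqref{rule:commB} turns a scalar $\p a$ into $|\p a|^2$ — so one has to check that scalars are duplicated and squared consistently along any two competing reduction paths, in particular in~\eqref{rule:commBapp} where products $p_i q_j$ are generated. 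Once local confluence modulo $\leftrightarrows$ is secured, the remaining steps (Newman's lemma, uniqueness of normal forms, and the value conclusion via Progress) are immediate.
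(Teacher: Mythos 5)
Your skeleton (strong normalisation from Theorem~\ref{cor:strong-norm} and preservation from Theorem~\ref{thm:sub-red}, Newman's lemma, then Theorem~\ref{thm:prog} to recognise the common normal form as a value) is a genuinely different route from the paper's, and it is viable in principle, but as written it stops exactly where the work is. The paper performs no critical-pair analysis at all: it relies on the fact that $\rightarrow$ is \emph{deterministic} by construction --- the evaluation contexts $K$ are single-holed, left-to-right and value-restricted ($V~K$, $V \otimes K$, $\p{\langle v,}K\p{\rangle}$, \ldots), and no rule fires under a binder --- while $\leftrightarrows$ is invoked only when $\rightarrow$ is blocked and then rewrites to the \emph{unique} algebraic form $\aform{M}$ of Theorem~\ref{lem:unique_alg}. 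Uniqueness of $\hookrightarrow$-normal forms, and hence convergence of any two reduction sequences, follows directly from Theorems~\ref{lem:unique_alg} and~\ref{thm:prog}, with no appeal to Newman's lemma or to confluence modulo an equivalence.

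The gap in your proposal is that local confluence of $\rightarrow$ modulo $\leftrightarrows$ --- the entire technical content of your argument --- is announced but never established: the critical pairs you enumerate (a cut rule \eqref{rule:one}--\eqref{rule:tensor} against a commutation rule \eqref{rule:commtop}--\eqref{rule:comminr}, \eqref{rule:casting} against \eqref{rule:commtau}, the scalar bookkeeping of \eqref{rule:commB} and \eqref{rule:commBapp}) are claimed to ``close by a short computation'' without the computations, so the proposal is a programme rather than a proof. Moreover, most of these overlaps cannot actually arise: redexes fire only on values (or algebraic forms) in evaluation position, and the value/summation conventions separate the shapes to which the competing rules apply, which is precisely why the paper can treat $\rightarrow$ as deterministic and reduce the whole question to the coherence of the detour $M \not\rightarrow$, $M \leftrightarrows \aform{M} \rightarrow N$ in Definition~\ref{def:hook}, settled by Theorem~\ref{lem:unique_alg}. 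Had you used this observation, your case analysis would collapse to that single interaction. A final small point: Theorem~\ref{thm:prog} is stated for closed terms, while the confluence statement allows a context $\Gamma$; your remark that free variables are themselves values points at the right repair, but it should be isolated as a lemma (well-typed open $\hookrightarrow$-normal forms are values) rather than left implicit.
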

\begin{proof}
  The confluence of $\hookrightarrow$ follows from its structure as an interleaving of a deterministic reduction ($\rightarrow$) and the equivalence $\leftrightarrows$. 
The relation $\leftrightarrows$ is applied only when $\rightarrow$ is blocked, serving as an algebraic normalisation step to enable further reduction.
By Theorems~\ref{lem:unique_alg} and~\ref{thm:prog}, the normal forms of $\hookrightarrow$ are unique, so any two reduction sequences from $M$ must converge.
  \qed
\end{proof}

\section{Categorical Model}
\label{sec:model}

This section introduces the categorical model used for our denotational
semantics, structured in two parts. The first, described in
Subsection~\ref{sub:FHilb},
consists of finite-dimensional Hilbert spaces and linear maps---a standard
setting for pure-state quantum computation---and is used to interpret the pure
fragment of $\ocalc$. The second, presented in
Subsection~\ref{sub:FCstar},
consists of finite-dimensional C*-algebras and completely positive maps, which
model the mixed fragment. We describe the relevant categorical structure in 
Subsection \ref{sub:categorical}.

\subsection{Finite-dimensional Hilbert spaces}
\label{sub:FHilb}

Finite-dimensional Hilbert spaces are ubiquitous and well-known in quantum
information theory, so we mostly use this subsection to fix notation.

\begin{definition}
  A \emph{finite-dimensional Hilbert space} is a vector space $H$ equipped with
  an inner-product $\langle \cdot | \cdot \rangle \colon H \times H \to \mathbb
  C$ that is anti-linear in the first argument and linear in the second one.
\end{definition}

If $H_1$ and $H_2$ are two finite-dimensional Hilbert spaces, we write $H_1
\otimes H_2$ for their \emph{Hilbert space tensor product} and we write $H_1
\oplus H_2$ for their \emph{Hilbert space direct sum}, both defined in the
standard way.
The maps that are relevant for our development are the linear maps $f \colon
H_1 \to H_2$ between finite-dimensional Hilbert spaces. Since $H_1$ and $H_2$
are finite-dimen\-sional, every such map $f$ is necessarily bounded, or
equivalently, continuous with respect to the usual topology.
The \emph{adjoint} of $f$ is the (bounded) linear map $f^\dagger \colon H_2 \to H_1$ that is uniquely determined by the property:
\( \langle f(h_1) | h_2 \rangle = \langle h_1 | f^\dagger(h_2) \rangle \) for every  $h_1 \in H_1$ and $h_2 \in H_2$.
If we choose orthonormal bases of $H_1$ and $H_2$, then the matrix
representation of $f^\dagger$ is given by taking the conjugate transpose of the
matrix representation of $f.$ We say that a linear map $f \colon H_1 \to H_2$ is \emph{unitary} if $f \circ f^\dagger = \id_{H_2}$ and $f^\dagger \circ f = \id_{H_1}.$

\subsection{Finite-dimensional C*-algebras}
\label{sub:FCstar}

Finite-dimensional C*-algebras \cite{takesaki} have sufficient structure to allow us to
describe completely-positive maps and to model the quantum operations (also
known as channels) that are used for mixed-state quantum computation in finite
dimensions. We begin by recalling the relevant definitions.

\begin{definition}
  A finite-dimensional $C^*$-algebra is a complex vector space $A$ together with:
  a binary operation $( - \cdot - ) : A \times A \rightarrow A$, called multiplication, and written via juxtaposition, i.e. $ab \defeq a \cdot b$, which is associative and linear in both components;
  a multiplicative unit $1 \in A$, such that $1 a = a 1 = a$ for all $a \in A$;
  a unary operation $(\cdot)^* \colon A \to A$, called involution, such that $(a^*)^* = a$, $(ab)^* = b^*a^*$, $(\lambda a)^* = \bar{\lambda}a^*$,
  and $(a + b)^* = a^* + b^*$, for all $a, b \in A$ and $\lambda \in \mathbb{C}$;
  a norm $\|\cdot\| \colon A \to [0,\infty)$ that satisfies certain additional conditions.
\end{definition}

\begin{remark}
  We do not make essential use of C*-algebra norms in this paper, so we omit the details of this part of the definition.
  In order to avoid repetition, we often say ``C*-algebra'' instead of ``finite-dimensional C*-algebra''.
\end{remark}

Let $A$ be a C*-algebra. An element $a \in A$ is called \emph{self-adjoint} if
$a =a^*.$ We say that $a \in A$ is \emph{positive} if there exists an element
$b \in A$, such that $a = bb^*.$ Note that every positive element is
self-adjoint.

\begin{example}
  One important example of a C*-algebra is given by the matrix algebra $M_n$,
  consisting of the $n \times n$ complex matrices. The vector space structure
  is obvious and multiplication is given by multiplication of matrices. The
  involution is given by the conjugate transpose and the unit element $1$ is the
  identity matrix $1_n$. The norm on $M_n$ is the usual operator norm. The
  complex numbers $\mathbb C$ is a C*-algebra which may be identified with
  $M_1$.
\end{example}

\begin{example}
  \label{ex:B(H)}
  If $H$ is a Hilbert space, then the space
  $\mathcal B(H)$ of bounded linear operators on $H$,
  has the structure of a C*-algebra. The vector space structure is obvious,
  the involution is given by taking adjoints, the multiplication by composition
  of functions, the multiplicative unit is $\id_H$ and the norm is the usual
  operator norm. When $H$ is finite-dimensional, as assumed in this paper, with dimension $n$, then $\mathcal B(H) \cong M_n$ as C*-algebras, so the
  two may be often identified.
\end{example}

\begin{definition}
  Let $A$ and $B$ be C*-algebras. The direct sum of $A$ and $B$ is the C*-algebra $A \oplus B$ defined in the following way:
  the vector space structure is the vector space direct sum $A \oplus B$;
  the involution, multiplication and unit are defined pointwise;
  the norm is defined as $\| (a,b) \| \defeq \mathrm{max}(\| a \|, \| b \|).$
\end{definition}

\begin{definition}
\label{def:tens_alg}
  Given two finite-dimensional C*-algebras $A$ and $B$, we write $A \otimes B$ for their C*-algebra tensor product which is defined
  in the following way:
  the underlying vector space is $A \otimes B$;
  the multiplication map is determined by the assignment $(a_1 \otimes b_1)(a_2 \otimes b_2) = a_1a_2 \otimes b_1b_2$;
  the multiplicative unit is $1_A \otimes 1_B$;
  the involution is determined by the assignment $(a \otimes b)^* = a^* \otimes b^*$;
  there is a unique way to assign a C*-algebra norm \cite[pp. 231]{pisier}, but we elide the details.
\end{definition}

We can now define the C*-algebra morphisms that we use in our semantics.

\begin{definition}
  Given C*-algebras $A$ and $B$, we say that a linear map $f \colon A \to B$ is
  \emph{positive} when $f$ preserves positive elements and
  \emph{completely-positive} when the map $\id_{M_n} \otimes f \colon M_n \otimes
  A \to M_n \otimes B$ is positive for every $n \in \mathbb N.$
\end{definition}

\begin{example}
  \label{ex:cp}
  For every C*-algebra $A$, the identity map $\id_A$ is completely-pos\-itive. If $H, K$ are finite-dimensional Hilbert spaces and
  $f \colon H \to K$ is a linear map, then the map $f(\cdot)f^\dagger \colon B(H) \to B(K)$ is a completely-positive map.
\end{example}

\subsection{Categorical Structure}
\label{sub:categorical}

We write $\FHilb$ for the category of finite-dimensional Hilbert spaces with
linear maps as morphisms and we write $\FCstar$ for the category
of finite-dimensional C*-algebras and completely-positive maps as
morphisms. Both categories enjoy similar categorical properties that we now
recall and for which we use similar notation. See \cite{heunen2019categories,coecke2016categories} for more information.
Throughout the remainder of the subsection, let $X,Y,Z$ be finite-dimensional
Hilbert spaces (C*-algebras).

\emph{Symmetric Monoidal Structure.} The category $\FHilb$ $(\FCstar)$ has a symmetric
monoidal structure with tensor unit given by the Hilbert space (C*-algebra)
$\mathbb C$ and monoidal product given by the Hilbert space (C*-algebra) tensor
product $X \otimes Y.$ The left (right) unitors $\lambda$ $(\rho)$, the
associator $\alpha$, and symmetry $\sigma$ natural isomorphisms
are defined in the same way as for vector spaces.

\emph{Compact Closed Structure.}
The category $\FHilb$ $(\FCstar)$ is also compact closed. We write $\eta_X
\colon \mathbb C \to X^* \otimes X$ for the unit and $\epsilon_X \colon X
\otimes X^* \to \mathbb C$ for the counit of the compact closed structure,
where $X^*$ indicates the dual of $X$ (in the sense of compact closure). It
follows that $\FHilb$ $(\FCstar)$ is a closed symmetric monoidal category with
internal hom given by $[X, Y] \defeq X^* \otimes Y$. We write
$\Phi \colon \CC(X \otimes Y, Z) \cong \CC(X, [Y,Z])$ for the currying natural
isomorphism, where $\CC$ stands for $\FHilb$ or $\FCstar,$ and we write
$\eval_{X,Y} \colon [X,Y] \otimes X \to Y$ for the canonically induced
evaluation morphism of $\CC.$

\emph{Finite Biproducts.}
The category $\FHilb$ $(\FCstar)$ also has finite biproducts. Binary biproducts
are given by the Hilbert space (C*-algebra) direct sum $X \oplus Y$ and the
zero object is given by the zero-dimensional Hilbert space (C*-algebra) 0.
We write $!_{X,Y} \colon X \to Y$ for the unique morphism that factors through $0$.
We write $\pi_1 \colon X \oplus Y \to X$ 
and $\pi_2 \colon X \oplus Y \to Y$ for the canonical projections, which are defined as
$\pi_1(x,y) = x$ and $\pi_2(x,y) = y$. Given morphisms $f \colon Z \to X$ and
$g \colon Z \to Y$ in $\FHilb$ $(\FCstar)$, $\langle f,g \rangle
\colon Z \to X \oplus Y$ is the canonical morphism induced by the categorical
product and defined by $\langle f,g \rangle(z) \defeq (f(z), g(z))$. We write
$i_1 \colon X \to X \oplus Y$ and $i_2 \colon Y \to X \oplus Y$ for the
canonical coproduct injections defined by $i_1(x) \defeq (x, 0)$ and $i_2(y) =
(0,y).$ Given morphisms $f \colon X \to Z$ and $g \colon Y \to Z$ in $\FHilb$
$(\FCstar)$, we write $[f,g] \colon X \oplus Y \to Z$ for the canonical map
induced by the couniversal property of $X \oplus Y$ and defined by $[f,g](x,y)
= f(x) + g(y).$ Since $\FHilb$ $(\FCstar)$ is symmetric monoidal closed, it follows
that the monoidal product distributes over coproducts and we write
$d_{X,Y, Z} \colon X \otimes (Y \oplus Z) \cong (X \otimes Y) \oplus (X \otimes Z)$
for the canonical natural isomorphism.

\emph{Linear combinations.}
The homsets $\FHilb(X, Y)$ are closed under
finite $\mathbb C$-linear combinations, i.e. for $f_i \in \FHilb(X,Y)$,
$\alpha_i \in \mathbb C$ complex scalars, $i \in I$ with
$I$ a finite set, $\sum_{i \in I} \alpha_i f_i \in \FHilb(H,K)$, where the
sum is defined pointwise in the usual way.
However, the homsets $\FCstar(X,Y)$ are \emph{not} closed under such linear combinations,
because if $f \colon X \to Y$ is a completely-positive map, the map $(-1)f$ need not be one.
Instead, homsets in $\FCstar(X,Y)$ are closed under finite $\mathbb R_{\geq 0}$-linear combinations,
again defined pointwise, so that $\sum_{i \in I} r_i f_i \in \FCstar(X,Y)$,
where $r_i \geq 0$ for all $i \in I.$
Linear combinations behave well with these categorical constructions, as the next lemma shows.
\begin{lemma}
\label{lem:pm_eq}
  Given compatible morphisms $f,f' ,g, g', h$ in $\FHilb$ $(\FCstar)$ and a scalar $a \in \mathbb C$ $(a \in \mathbb R_{\geq 0}),$ the following equations hold:
  \[
    \begin{array}{rl@{\qquad}rl}
      \Phi (f \Plus g)                                  &= \Phi(f) \Plus \Phi(g)                       & \Phi (a \sdot f)             &= a \sdot \Phi(f) \\
      \langle f, g\rangle \Plus \langle f' , g' \rangle &= \langle f \Plus f' , g \Plus g' \rangle     & a \sdot \langle f ,g \rangle &= \langle a \sdot f, a \sdot g \rangle \\
      {[} f, g {]} \Plus {[} f' , g' {]}                        &= {[} f \Plus f' , g \Plus g' {]}                 & a \sdot {[} f ,g {]}             &= {[} a \sdot f, a \sdot g {]}  \\
      (f \Plus g) \otimes h                             &= f \otimes h \Plus g \otimes h               &a \sdot (f \otimes h)                             &= (a \sdot f) \otimes h = f \otimes (a \sdot h) \\
      h \otimes (f \Plus g)        &= h \otimes f \Plus h \otimes g  
    \end{array}
  \]
\end{lemma}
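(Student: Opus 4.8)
The plan is to verify each of the nine equations in Lemma~\ref{lem:pm_eq} directly from the pointwise definitions of the categorical operations recalled in Subsection~\ref{sub:categorical}, together with the pointwise definition of the linear combinations on homsets. Since $\FHilb$ and $\FCstar$ share the same categorical structure (up to the distinction between $\mathbb C$-linear and $\mathbb R_{\geq 0}$-linear combinations), the arguments are uniform across both categories: in each case I would unfold both sides as functions of a generic element and check they agree. Because all the relevant morphisms ($\Phi$, pairing $\langle -,-\rangle$, copairing $[-,-]$, tensor of morphisms, and the $\Plus$/$\sdot$ on homsets) are defined pointwise, this reduces to elementary identities in the target vector spaces, which are valid since scalar multiplication and addition distribute over each other and over composition in the ambient (Hilbert space / C*-algebra) setting.

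Concretely, I would proceed equation by equation. For the currying isomorphism, I would use that $\Phi$ sends $f \colon X \otimes Y \to Z$ to the map $x \mapsto (y \mapsto f(x \otimes y))$, so $\Phi(f \Plus g)(x)(y) = (f \Plus g)(x \otimes y) = f(x\otimes y) + g(x \otimes y) = \Phi(f)(x)(y) + \Phi(g)(x)(y)$, and similarly pull the scalar $a$ through; bilinearity of $\otimes$ on elements is what makes this go through. For the product pairing, $\langle f,g\rangle(z) = (f(z), g(z))$, so $(\langle f,g\rangle \Plus \langle f',g'\rangle)(z) = (f(z)+f'(z), g(z)+g'(z)) = \langle f \Plus f', g \Plus g'\rangle(z)$, and the scalar case is analogous using that scalar multiplication on $X \oplus Y$ is componentwise. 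The copairing case is dual: $[f,g](x,y) = f(x) + g(y)$, and the identities follow by commuting the pointwise sum/scalar with addition in $Z$. For the tensor-of-morphisms equations, I would use that $f \otimes h$ acts on simple tensors by $(f \otimes h)(x \otimes w) = f(x) \otimes h(w)$ and extend linearly, then invoke bilinearity of $\otimes$ in the target to get $(f \Plus g) \otimes h = f \otimes h \Plus g \otimes h$ and $h \otimes (f \Plus g) = h \otimes f \Plus h \otimes g$, and similarly $a \sdot (f \otimes h) = (a \sdot f) \otimes h = f \otimes (a \sdot h)$ from the fact that a scalar can be moved across a simple tensor.

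The only genuine subtlety — and the thing I would be careful about rather than the hard part — is well-definedness in $\FCstar$: one must check that the morphisms appearing on both sides of each equation really are completely positive, so that the equations live in the correct homsets. This follows because $\FCstar$ is closed under $\mathbb R_{\geq 0}$-linear combinations (stated at the end of Subsection~\ref{sub:categorical}) and because the structural maps $\Phi$, $\langle -,-\rangle$, $[-,-]$, and $- \otimes -$ all restrict to the completely positive maps (this is part of what it means for $\FCstar$ to have the stated categorical structure), so each composite built from completely positive maps and non-negative scalars is again completely positive. With that bookkeeping in place, the equalities are purely a matter of unwinding definitions; I would present one representative verification in full (say, the $\Phi$ row and one tensor row) and remark that the rest are entirely analogous, deferring any remaining routine details to the appendix.
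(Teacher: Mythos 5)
Your proposal is correct and matches the paper's proof, which is simply stated as ``Straightforward verification'': the intended argument is exactly the element-wise/linearity check you describe, using the pointwise definitions of $\Phi$, pairing, copairing, the tensor of morphisms, and the homset operations $\Plus$ and $\sdot$. Your extra remark about checking that both sides remain completely positive in $\FCstar$ is sensible bookkeeping but not something the paper treats as an obstacle, since $\FCstar$ homsets are closed under $\mathbb R_{\geq 0}$-linear combinations as stated in Subsection~\ref{sub:categorical}.
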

\begin{proof}
Straightforward verification.
\qed
\end{proof}

The assignment $\mathcal B(-)$ from Example \ref{ex:B(H)} can be extended
functorially by defining $\mathcal B(f) \defeq f(\cdot)f^\dagger$ for a linear
map $f \colon H \to K$, see Example \ref{ex:cp}. This allows us to lift
functions acting on pure states to ones acting on mixed states.

\begin{lemma}
  The assignment $\mathcal B(-) \colon \FHilb \to \FCstar$ can be equipped with
  the structure of a strong monoidal functor. We write $\tau_{H,K} \colon \mathcal B(H)
  \otimes \mathcal B(K) \cong \mathcal B(H \otimes K)$ and $\tau_{\mathbb C} \colon \mathbb C \cong
  \mathcal B(\mathbb C)$ for the obvious natural isomorphisms.
  \qed
\end{lemma}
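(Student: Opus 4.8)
The plan is to proceed in three stages: first establish functoriality, then exhibit the coherence data (the natural isomorphisms $\tau_{H,K}$ and $\tau_{\mathbb C}$), and finally verify the strong-monoidal coherence axioms.

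For functoriality, I would first check that $\mathcal{B}(-)$ is well-defined on morphisms: given a linear map $f\colon H \to K$, the assignment $\mathcal{B}(f) = f(\cdot)f^\dagger$ sends a bounded operator $a \in \mathcal{B}(H)$ to $faf^\dagger \in \mathcal{B}(K)$, and by Example~\ref{ex:cp} this is a completely-positive map, hence a morphism of $\FCstar$. Preservation of identities is immediate since $\id_H^\dagger = \id_H$, so $\mathcal{B}(\id_H)(a) = \id_H\, a\, \id_H = a$. For composition, given $f\colon H \to K$ and $g\colon K \to L$, one uses the contravariant behaviour of the adjoint, $(g\circ f)^\dagger = f^\dagger \circ g^\dagger$, to compute $\mathcal{B}(g\circ f)(a) = (gf)\,a\,(gf)^\dagger = g\,(f a f^\dagger)\, g^\dagger = \mathcal{B}(g)(\mathcal{B}(f)(a))$, so $\mathcal{B}(g\circ f) = \mathcal{B}(g)\circ \mathcal{B}(f)$.

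Next I would produce the monoidal structure maps. By Example~\ref{ex:B(H)}, $\mathcal{B}(H)\cong M_{\dim H}$, and under the C*-algebra tensor product of Definition~\ref{def:tens_alg} there is a canonical isomorphism $\mathcal{B}(H)\otimes \mathcal{B}(K)\cong \mathcal{B}(H\otimes K)$, induced at the level of generators by $(a\otimes b)\mapsto (h\otimes h' \mapsto a(h)\otimes b(h'))$; call this $\tau_{H,K}$. This map is a C*-algebra isomorphism (it respects multiplication since $(a_1\otimes b_1)(a_2\otimes b_2)=a_1a_2\otimes b_1b_2$ on both sides, respects the involution $(a\otimes b)^* = a^*\otimes b^*$, and sends units to units), hence it and its inverse are completely positive, so it is an iso in $\FCstar$. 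Similarly, the unitor $\tau_{\mathbb C}\colon \mathbb C \cong \mathcal{B}(\mathbb C) = M_1$ is the evident scalar identification. Naturality of $\tau_{H,K}$ in both arguments is checked by chasing the generators through: for linear maps $f\colon H\to H'$, $g\colon K\to K'$, both $\tau_{H',K'}\circ(\mathcal{B}(f)\otimes\mathcal{B}(g))$ and $\mathcal{B}(f\otimes g)\circ \tau_{H,K}$ send $a\otimes b$ to $(f a f^\dagger)\otimes(g b g^\dagger)$, using $(f\otimes g)^\dagger = f^\dagger \otimes g^\dagger$.

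Finally, I would verify the three strong-monoidal coherence diagrams: compatibility of $\tau$ with the associators of $\FHilb$ and $\FCstar$, and with the left and right unitors via $\tau_{\mathbb C}$. Since all the structure isomorphisms on both sides ($\alpha$, $\lambda$, $\rho$) are, by the description in Subsection~\ref{sub:categorical}, defined ``in the same way as for vector spaces'', these diagrams reduce to standard identities about the tensor product of vector spaces and operators, so each is a routine generator-level computation. The main obstacle—more a point requiring care than a genuine difficulty—is confirming that $\tau_{H,K}$ is genuinely an isomorphism \emph{in $\FCstar$}, i.e. that both it and its inverse are completely positive; this follows because any unital $*$-isomorphism of C*-algebras is automatically completely positive (it is a $*$-homomorphism, hence positive, and remains so after tensoring with $\id_{M_n}$ since that is again a $*$-isomorphism), but it is worth stating explicitly rather than treating as obvious, since $\FCstar$ has completely-positive maps—not $*$-homomorphisms—as morphisms.
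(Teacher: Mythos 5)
Your proposal is correct: functoriality via $(gf)^\dagger = f^\dagger g^\dagger$, the generator-level definition of $\tau_{H,K}$ as the canonical $*$-isomorphism $\mathcal B(H)\otimes\mathcal B(K)\cong\mathcal B(H\otimes K)$, the naturality check using $(f\otimes g)^\dagger = f^\dagger\otimes g^\dagger$, and the reduction of the coherence diagrams to vector-space identities are exactly the standard verification. The paper itself omits the proof entirely (the lemma is stated as obvious, with pointers to the literature), so your plan simply fills in that routine argument; your explicit remark that a unital $*$-isomorphism is automatically completely positive, and hence genuinely an isomorphism in $\FCstar$ even though morphisms there are only completely-positive maps, is a worthwhile point of care that the paper leaves implicit.
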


\section{Denotational semantics}
\label{sec:denotational}

With the model in place, we now describe the denotational semantics of $\ocalc$. Pure and mixed propositions are interpreted as objects in $\textbf{FHilb}$ and $\textbf{FC*}$, respectively.
\begin{gather*}
\lrb{\p \top} \defeq 0_{\textbf{FHilb}}, \hspace{3.4mm} \lrb{\one} \defeq \mathbb{C}, \hspace{3.4mm} \lrb{\p P \p \multimap \p Q} \defeq [\lrb{\p P},\lrb{\p Q}], \hspace{3.4mm} 
	\lrb{\p P \pwith \p Q} \defeq \lrb{\p P} \oplus \lrb{\p Q},  \notag \\
\lrb{\p P \p \otimes \p Q} \defeq \lrb{\p P} \otimes \lrb{\p Q}, \quad \lrb{\zero} \defeq 0_{\textbf{FC*}}, \quad \lrb{1} \defeq \mathbb{C}, 
	\quad \lrb{A \multimap B} \defeq [\lrb{A},\lrb{B}], \notag \\
\lrb{A \oplus B} \defeq \lrb{A} \oplus \lrb{B}, \quad \lrb{A \otimes B} \defeq \lrb{A} \otimes \lrb{B}, \quad \lrb{\mathcal B(\p P)} \defeq \mathcal B(\lrb{\p P}). \notag
\end{gather*}
A pure context $\p \Gamma = \p{x_1 : P_1, \ldots, x_n : P_n}$ is
interpreted as $\lrb{\p \Gamma} \defeq \lrb{\p P_1} \otimes \lrb{\p P_2} \otimes \cdots \otimes \lrb{\p P_n}$.  
Similarly, a mixed context $\Gamma = x_1 : A_1, \ldots, x_n : A_n$ is interpreted as 
$\lrb{\Gamma} \defeq \lrb{A_1} \otimes \lrb{A_2} \otimes \cdots \otimes \lrb{A_n}$.
Pure judgements $\p{\Gamma \vdash m : P}$ are interpreted as morphisms
$\lrb{\p{\Gamma \vdash m : P}} \colon \lrb{\p \Gamma} \rightarrow \lrb{\p P}$ in \textbf{FHilb}
and mixed judgements $\Gamma \vdash m : A$ are interpreted as morphisms
$\lrb{\Gamma \vdash m : A} \colon \lrb{\Gamma} \rightarrow \lrb{A}$ in \textbf{FC*}.  
The denotational interpretation of judgements from both fragments is given in Figure~\ref{fig:t_d}.  
We sometimes write $\lrb{M}$ as a shorthand for
$\lrb{\Gamma \vdash M : T}$. For simplicity, we often suppress some of the coherent natural isomorphisms related
to the monoidal structure (e.g. the $\alpha$ monoidal associator).
\begin{figure}[t]
  \begin{align*}
    \lrb{x : S \vdash x : S} &\defeq id \\
    \lrb{\Gamma \vdash M \Plus N : S} &\defeq \lrb{M} \Plus \lrb{N}\\
    \lrb{\Gamma \vdash \alpha \sdot M : S} &\defeq \alpha \sdot \lrb{M} \\
    \lrb{\p{\Gamma \vdash \topintro : \top}} &\defeq {!} \\
    \lrb{\Gamma, \Delta \vdash \elimzero(m) : A} &\defeq {!} \circ (\lrb{m} \otimes id)\\
    \lrb{\vdash \ast : \one} &\defeq id \\
    \lrb{\Gamma, \Delta \vdash \elimone(M,N) : S} &\defeq \lambda \circ (\lrb{M} \otimes \lrb{N}) \\
    \lrb{\Gamma \vdash \lambda \abstr{x}M : {S \multimap T}} &\defeq \Phi(\lrb{M}) \\
    \lrb{\Gamma, \Delta \vdash M~N : T} &\defeq \eval \circ (\lrb{M} \otimes \lrb{N}) \\
    \lrb{\p{\Gamma \vdash \pair{m}{n} :} \p P \pwith \p Q} &\defeq \pair{\lrb{\p{m}}} {\lrb{\p{n}}}\\
    \lrb{\p{\Gamma, \Delta \vdash \elimwith^i(m,\abstr{x}n) : R}} &\defeq \lrb{\p{n}}\circ \sigma \circ 
      (\pi_i \otimes id) \circ (\lrb{\p{m}} \otimes id) \\
    \lrb{\Delta \vdash \inl(m) : A \oplus B} &\defeq i_1 \circ \lrb{m} \\
    \lrb{\Delta \vdash \inr(n) : A \oplus B} &\defeq i_2 \circ \lrb{n} \\
    \lrb{\Gamma, \Delta \vdash \elimplus(m,\abstr{x}n,\abstr{y}o) \colon C } &\defeq [\lrb{n}, \lrb{o}] \circ d
      \circ \sigma \circ (\lrb{m} \otimes id) \\
    \lrb{\Gamma, \Delta \vdash M \otimes N : S_1 \otimes S_2} &\defeq \lrb{M} \otimes \lrb{N} \\
    \lrb{\Gamma, \Delta \vdash \elimtens(M , \abstr{xy}N) : T} &\defeq \lrb{N} \circ (id \otimes \lrb{M})\\
    \lrb{\vdash \mathcal B(t) : \mathcal B (\p P)} &\defeq \mathcal B (\lrb{t})\\
    \lrb{\Gamma,\Delta\vdash m~n : \mathcal B(\p Q)} &\defeq \mathcal B(\eval) \circ \tau \circ (\lrb{m} \otimes \lrb{n}) \\
    \lrb{ \Gamma \vdash \tau(m) : \mathcal B(\p P \p \otimes \p Q)} &\defeq \tau \circ \lrb{m}\\
    \lrb{\Delta, x : {S_2}, y : {S_1}, \Gamma \vdash M : {T}} &\defeq \lrb{M}  \circ (id \otimes \sigma \otimes id)
  \end{align*}
  \caption{Interpretation of judgements.}
  \label{fig:t_d}
\end{figure}

Our interpretation is sound with respect to the cut-elimination process (Theorem~\ref{thm:soundnesshook}), and complete (Theorem~\ref{thm:adequacy}) with respect to a notion of contextual equivalence (Definition~\ref{def:contextual_equivalence}).  
Since this completeness is established with respect to a different relation than the one used for soundness, the result is usually referred to as \emph{adequacy}.
The omitted proofs can be found in Appendix~\ref{app:denotational}.

\begin{restatable}[Soundness]{theorem}{soundnesscor}
  \label{thm:soundnesshook}
If $M \hookrightarrow^* N$ then $\lrb{M} = \lrb{N}$.  
\end{restatable}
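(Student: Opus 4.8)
The plan is to prove soundness by reducing to the single-step case: since $\hookrightarrow^*$ is the reflexive-transitive closure of $\hookrightarrow$, it suffices to show that $M \hookrightarrow N$ implies $\lrb{M} = \lrb{N}$, and then chain the equalities. By Definition~\ref{def:hook}, a single step $M \hookrightarrow N$ arises in one of three ways: (i) $M \rightarrow N$; (ii) $M \leftrightarrows V$ with $N = V$ a value; or (iii) $M$ is $\rightarrow$-normal, $M \leftrightarrows \aform{M}$, and $\aform{M} \rightarrow N$. Cases (ii) and (iii) reduce to establishing that $\leftrightarrows$ and $\rightarrow$ are each sound, so the whole theorem comes down to two lemmas: $M \leftrightarrows N \implies \lrb{M} = \lrb{N}$ and $M \rightarrow N \implies \lrb{M} = \lrb{N}$, both of which should be stated and proved (in the appendix) before assembling Theorem~\ref{thm:soundnesshook}.

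For the $\leftrightarrows$ lemma, I would first handle the base rules of Figure~\ref{fig:bidir_rules} (top part) one by one: each algebraic identity ($1 \sdot M \leftrightarrows M$, commutativity and associativity of $\Plus$, $\alpha \sdot M \Plus \beta \sdot M \leftrightarrows (\alpha+\beta)\sdot M$, scalar composition, distributivity, the zero-absorption rule, and the four $\otimes$-distributivity rules) translates directly into a corresponding identity about morphisms in $\FHilb$ or $\FCstar$. Most of these are immediate from the pointwise definition of $\Plus$ and $\sdot$ on homsets and from Lemma~\ref{lem:pm_eq}, which was proved precisely to package the compatibility of linear combinations with $\Phi$, $\langle\cdot,\cdot\rangle$, $[\cdot,\cdot]$, and $\otimes$. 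The zero-absorption rule requires noting that $\lrb{0 \sdot N} = 0 \sdot \lrb{N}$ is the zero morphism, and that $f \Plus 0 = f$. Then I close under contexts: if $M \leftrightarrows N$ then $K[M] \leftrightarrows K[N]$, and by induction on $K$ using the compositional clauses of Figure~\ref{fig:t_d} together with Lemma~\ref{lem:pm_eq}, $\lrb{K[M]} = \lrb{K[N]}$ — the key point being that every term constructor is interpreted by a composition/tensor/pairing of the subterm denotations with fixed structural morphisms, so replacing one subterm denotation by an equal one preserves the result.

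For the $\rightarrow$ lemma, the $\beta$-style cut-elimination rules~\eqref{rule:one}--\eqref{rule:tensor} are the classical substitution-lemma arguments: I would prove a substitution lemma $\lrb{(N/x)M} = \lrb{M} \circ (\id \otimes \lrb{N})$ (suitably placed, up to the symmetries suppressed in Figure~\ref{fig:t_d}) and then each cut rule becomes an instance of a $\beta$-reduction identity in a closed symmetric monoidal category with biproducts: rule~\eqref{rule:arrow} uses $\eval \circ (\Phi(f) \otimes \id) = f$, rules~\eqref{rule:withleft}--\eqref{rule:withright} use $\pi_i \circ \langle f_1, f_2\rangle = f_i$, rules~\eqref{rule:oplusleft}--\eqref{rule:oplusright} use $[g_1,g_2] \circ (i_j) = g_j$ together with the distributor $d$, rule~\eqref{rule:tensor} unpacks $\elimtens$, and rule~\eqref{rule:one} uses the unitor $\lambda$ and the scalar clause. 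The commutation rules~\eqref{rule:commtop}--\eqref{rule:comminr} are again direct applications of Lemma~\ref{lem:pm_eq} (linear combinations commute with $\Phi$, $\langle\cdot,\cdot\rangle$, $[\cdot,\cdot]$, $i_j$, $\eval\circ(-\otimes\id)$, etc.), plus rule~\eqref{rule:commtop} using that there is a unique morphism into $0_{\FHilb}$. The genuinely new cases are the $\mathcal B$-rules: rule~\eqref{rule:casting} is soundness of the natural isomorphism $\tau$ together with strong monoidality of $\mathcal B$ ($\mathcal B(f)\otimes\mathcal B(g)$ transported along $\tau$ equals $\mathcal B(f\otimes g)$); rule~\eqref{rule:commtau} is naturality of $\tau$ plus the fact that $\tau$, being linear, commutes with $\Plus$ and $\sdot$; rule~\eqref{rule:commB} requires the computation $\mathcal B(a\sdot f) = (a\sdot f)(\cdot)(a\sdot f)^\dagger = \bar a a \, f(\cdot)f^\dagger = |a|^2 \sdot \mathcal B(f)$, which is exactly where the non-linearity of $\mathcal B$ and the conjugation appear; and rule~\eqref{rule:commBapp} combines functoriality of $\mathcal B$ on application with bilinearity bookkeeping ($\mathcal B(\sum_i p_i f_i) $ versus $\sum_i p_i \mathcal B(f_i)$ interacting through $\mathcal B(\eval)\circ\tau$). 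Finally I close $\rightarrow$ under contexts exactly as for $\leftrightarrows$.

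The main obstacle I expect is not any single rule but the bookkeeping of the suppressed coherence isomorphisms: Figure~\ref{fig:t_d} deliberately elides associators, unitors, and symmetries, and several reduction rules — especially the $\elimtens$/$\elimplus$/$\elimwith$ commutations, rule~\eqref{rule:commBapp}, and the substitution lemma underpinning the $\beta$-rules — only typecheck on the nose once these are reinstated in the right places. Getting the substitution lemma stated with the correct placement of $\sigma$'s (it must account for where the substituted variable sits in the context, matching the $\mathrm{ex}$-rule clause $\lrb{M}\circ(\id\otimes\sigma\otimes\id)$) is the delicate part; once that is pinned down, every individual rule is a short diagram chase using Mac Lane coherence together with Lemma~\ref{lem:pm_eq}. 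A secondary but routine subtlety is checking that the $\FCstar$ side only ever uses $\mathbb R_{\geq 0}$-linear combinations, so that all invocations of Lemma~\ref{lem:pm_eq} in the mixed fragment stay within the stated scalar restriction (which holds because mixed scalars are non-negative reals by construction).
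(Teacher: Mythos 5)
Your proposal is correct and follows essentially the same route as the paper: soundness of $\leftrightarrows$ via Lemma~\ref{lem:pm_eq}, soundness of $\rightarrow$ by cases, with a substitution lemma plus categorical $\beta$-identities for rules~\eqref{rule:one}--\eqref{rule:tensor}, Lemma~\ref{lem:pm_eq} for the commutation rules, the strong monoidal structure of $\mathcal B$ and the $|\p a|^2$ computation for rules~\eqref{rule:casting}--\eqref{rule:commBapp}, and closure under contexts. The coherence-isomorphism and $\mathbb R_{\geq 0}$-scalar caveats you raise are exactly the bookkeeping the paper's Lemmas~\ref{lem:subst_d} and~\ref{lem:subst_eq_p} carry out.
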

\vspace{-1.8\baselineskip}\qed

\begin{definition}[Elimination context]
  An elimination context $E$ is a term with exactly one free variable, denoted by $[.]$, defined by the following grammar:
  \begin{align*}
    E & \defeq [.] 
    \mid \elimzero(E) 
    \mid \elimone(E, M) 
    \mid E~M \\
    & \mid \p{\elimwith^1(}E\p{, \abstr{x}m)} 
    \mid \p{\elimwith^2(}E\p{, \abstr{x}m)} 
    \mid \elimplus(E, \abstr{x}m, \abstr{y}n) 
    \mid \elimtens(E, \abstr{xy}M)
  \end{align*}
  The substitution of $[.]$ by a term $M$ in $E$ is denoted by $E[M]$.
\end{definition}

\begin{definition}[Contextual equivalence]
  \label{def:contextual_equivalence}
  Two terms $M$ and $N$ are contextually equivalent ($M \sim N$) if, for every
  elimination context $[.] \vdash E : T$, with $T \in \{\one, 1, \mathcal B(\one)\}$, there
  exists a value $V$ such that $E[M] \hookrightarrow^* V$ iff 
  $E[N] \hookrightarrow^* V$.
\end{definition}

\begin{restatable}[Adequacy]{theorem}{adequacy}
  \label{thm:adequacy}
  If $\lrb{\vdash M:T} = \lrb{\vdash N:T}$ then $M \sim N$.
\end{restatable}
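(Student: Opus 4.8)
The plan is to prove the contrapositive-friendly direction directly: assuming $\lrb{\vdash M : T} = \lrb{\vdash N : T}$, show that for every elimination context $[.] \vdash E : T'$ with $T' \in \{\one, 1, \mathcal{B}(\one)\}$, the terminating behaviours of $E[M]$ and $E[N]$ coincide. The first step is to observe that elimination contexts compose with denotations functorially: since the grammar of $E$ is built only from the eliminators $\elimzero, \elimone, \cdot{}\cdot, \elimwith^i, \elimplus, \elimtens$, a straightforward induction on $E$ (using the clauses of Figure~\ref{fig:t_d}) yields a morphism-level statement $\lrb{E[M]} = \mathcal{E} \circ \lrb{M}$ for some morphism $\mathcal{E}$ depending only on $E$ (and the free-variable data of the hole), and similarly $\lrb{E[N]} = \mathcal{E} \circ \lrb{N}$. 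Hence the hypothesis $\lrb{M} = \lrb{N}$ gives $\lrb{E[M]} = \lrb{E[N]}$ for every such $E$. This reduces adequacy to the closed case at the three designated output types: it suffices to show that for closed terms $P, Q$ of type $T' \in \{\one, 1, \mathcal{B}(\one)\}$, if $\lrb{P} = \lrb{Q}$ then ($P \hookrightarrow^* V$ for some value $V$) iff ($Q \hookrightarrow^* V'$ for some value $V'$), and in fact they reach equivalent normal forms.

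The second, and main, step is this closed-term analysis. By Strong Normalisation (Theorem~\ref{cor:strong-norm}), Progress (Theorem~\ref{thm:prog}) and Confluence (Theorem~\ref{thm-confluence}), every closed term of type $T'$ reduces under $\hookrightarrow$ to a unique value; by Soundness (Theorem~\ref{thm:soundnesshook}) that value has the same denotation as the original term. So it suffices to classify the closed values at these types and show the denotation map is injective on them up to $\equiv$. The values of type $\one$ (or $1$) are, by the value grammar of \secref{ssec:syntax}, exactly the algebraic-form combinations $\sum_i \alpha_i \sdot v_{bi}$ where each $v_{bi}$ is a closed base value of unit type --- and the only such closed base value is $\ast$ (pure $\p\ast$ for $\one$, mixed $\ast$ for $1$); after applying $\leftrightarrows$ to collect like terms, every closed value of type $\one$ is $\equiv \alpha \sdot \ast$ for a unique scalar $\alpha$, whose denotation is the scalar $\alpha \in \FHilb(\mathbb{C},\mathbb{C}) = \mathbb{C}$ (resp.\ $\FCstar(\mathbb{C},\mathbb{C})$). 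For $\mathcal{B}(\one)$ the closed values are $\sum_i p_i \sdot \mathcal{B}(\p v_i)$ with the $\p v_i$ closed pure values of type $\p\one$, i.e.\ each $\p v_i \equiv \p a_i \sdot \p\ast$; applying Rule~\eqref{rule:commB} (the $|\p a|^2$ scaling) and then $\leftrightarrows$ collapses this to $r \sdot \mathcal{B}(\p\ast)$ for a unique $r \geq 0$, with denotation the scalar $r$. In all three cases the denotation is a single scalar that determines the value up to $\equiv$, so $\lrb{P} = \lrb{Q}$ forces $P$ and $Q$ to normalise to $\equiv$-equivalent values, and in particular one normalises to a value iff the other does. (Here one uses that $\equiv$ is preserved by elimination contexts, or simply re-runs the argument after plugging the normalised scalars back through $\mathcal{E}$.)

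The hard part is the injectivity-on-closed-values bookkeeping in the second step: one must be careful that the value grammar together with the $\leftrightarrows$-equations really does reduce every closed value of type $\one$, $1$, or $\mathcal{B}(\one)$ to a canonical one-scalar form, handling the degenerate $\sum 0 \sdot v_{bi}$ cases in the grammar and the non-linearity of $\mathcal{B}$ (the conjugate-modulus-squared in Rule~\eqref{rule:commB}) correctly, and that the denotation clause $\lrb{\mathcal{B}(\ast)}$ is a nonzero scalar so that distinct canonical values stay distinguishable. A minor additional subtlety is that an elimination context can observe a term only through these unit-like output types, so one should double-check that the classification genuinely captures contextual equivalence --- but since $\sim$ is defined exactly by quantification over such $E$ and such $T'$, the reduction in the first step is tight and no further cases arise. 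Full details are in Appendix~\ref{app:denotational}.
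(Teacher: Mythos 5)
Your proposal is correct and follows essentially the same route as the paper: the paper likewise factors the denotation through the elimination context (via its denotational substitution lemma, Lemma~\ref{lem:subst_d}), uses Progress/normalisation to reduce $E[M]$ and $E[N]$ to values of the form $\alpha\sdot\ast$ (resp.\ $p\sdot\mathcal B(\p\ast)$), and applies Soundness (Theorem~\ref{thm:soundnesshook}) to conclude the scalars coincide. Your extra bookkeeping (confluence, classification of closed unit-type values, injectivity of the scalar denotation) only spells out what the paper's terser argument leaves implicit.
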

\vspace{-1.8\baselineskip}\qed

\section{Encoding Quantum Computing}
\label{sssec:encoding}
The fact that our calculus is linear in a linear algebraic sense follows as a straightforward corollary of adequacy (Theorem~\ref{thm:adequacy}).  
\begin{corollary}[Linearity]
  \label{cor:linearity}
  Let $x : S \vdash M : T$,
  $\Delta \vdash N_1 : S$,
  $\Delta \vdash N_2 : S$,
  and $\Delta \vdash N : S$ (all in the same fragment), then
  $(\lambda x.M)~(\alpha\sdot N_1\Plus\beta\sdot N_2) \sim
  \alpha\sdot(\lambda x.M)~N_1 \Plus \beta\sdot (\lambda x.M)~N_2$.
\end{corollary}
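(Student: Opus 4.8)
The plan is to obtain the statement as a direct consequence of adequacy (Theorem~\ref{thm:adequacy}): it suffices to prove the single denotational identity $\lrb{(\lambda x.M)~(\alpha\sdot N_1\Plus\beta\sdot N_2)} = \lrb{\alpha\sdot(\lambda x.M)~N_1 \Plus \beta\sdot (\lambda x.M)~N_2}$ of morphisms $\lrb{\Delta}\to\lrb{T}$, and then conclude $\sim$ by adequacy. Since adequacy as stated concerns closed judgements, for a nonempty $\Delta = y_1 : A_1,\dots,y_k : A_k$ I would transport the problem along the closed $\lambda$-abstractions $\lambda y_1\ldots y_k.(-)$ of type $A_1\multimap\cdots\multimap A_k\multimap T$: their denotations are the iterated curryings $\Phi^k$ of the open denotations above, and since $\Phi$ is a natural isomorphism the identity transfers to the closed terms, to which adequacy applies; the open case then follows either from the open-judgement form of adequacy (proved by the same argument) or by instantiating the $y_i$ with themselves and using that elimination contexts are evaluation contexts.

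For the denotational identity, I would unfold the clauses of Figure~\ref{fig:t_d}. Setting $f := \lrb{x:S\vdash M:T}$, so that $\lrb{\vdash\lambda x.M : S\multimap T} = \Phi(f)$, the left-hand side unfolds, using the clauses for application, sum, and scalar product, to $\eval\circ\bigl(\Phi(f)\otimes(\alpha\sdot\lrb{N_1}\Plus\beta\sdot\lrb{N_2})\bigr)$, whereas $\lrb{(\lambda x.M)~N_i} = \eval\circ(\Phi(f)\otimes\lrb{N_i})$. I then push the linear combination outward in two moves. First, the identities $h\otimes(g\Plus g') = h\otimes g\Plus h\otimes g'$ and $h\otimes(a\sdot g) = a\sdot(h\otimes g)$ of Lemma~\ref{lem:pm_eq} give $\Phi(f)\otimes(\alpha\sdot\lrb{N_1}\Plus\beta\sdot\lrb{N_2}) = \alpha\sdot(\Phi(f)\otimes\lrb{N_1})\Plus\beta\sdot(\Phi(f)\otimes\lrb{N_2})$. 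Second, post-composition with $\eval$ commutes with $\Plus$ and $\sdot$, since composition is linear in each argument and the homsets of $\FHilb$ (resp.\ $\FCstar$) are closed under $\mathbb C$-linear (resp.\ $\mathbb R_{\geq0}$-linear) combinations. Reading the clauses for sum and scalar product backwards then identifies the result with $\lrb{\alpha\sdot(\lambda x.M)~N_1\Plus\beta\sdot(\lambda x.M)~N_2}$, completing the computation.

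The substantive ingredient is thus exactly the bilinearity of $\otimes$ and the linearity of composition in the model (Lemma~\ref{lem:pm_eq} and the remarks preceding it), which is what makes this a genuinely short corollary. The main obstacle is bookkeeping: keeping the coherence isomorphisms and the placement of the context $\lrb{\Delta}$ straight through $\Phi$, $\otimes$, and $\eval$, and handling the passage between the open judgements in the statement and the closed judgements required by adequacy. One should also check that restricting mixed-fragment scalars to $\mathbb R_{\geq0}$ is harmless here: in that fragment $\alpha$ and $\beta$ are non-negative reals, so every linear-combination manipulation above stays within $\FCstar$, where only $\mathbb R_{\geq0}$-combinations are available.
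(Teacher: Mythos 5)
Your proposal is correct and is essentially the paper's own argument: the paper offers no explicit proof, merely asserting the corollary follows from adequacy, and the intended content is exactly your computation that both sides have equal denotations (via the interpretation clauses, the bilinearity of $\otimes$ and linearity of composition from Lemma~\ref{lem:pm_eq}) followed by an appeal to Theorem~\ref{thm:adequacy}. Your extra care about the mismatch between the open judgements in the statement and the closed judgements in the adequacy theorem (handled by currying over $\Delta$ or an open-judgement variant of adequacy) is a point the paper silently glosses over, so it is a welcome refinement rather than a deviation.
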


Remark that Corollary~\ref{cor:linearity} excludes $\mathcal B$, since those terms are always closed.

Let $Q$ be the set of closed proof terms $\p m$ of $\one \pwith \one$, modulo
the equivalence relation $\equiv$. From now on, we write $\pqbit$ for $\one\pwith\one$.

\begin{theorem}[One-to-one correspondence~{\cite[Lemmas 3.7 and 3.8]{DiazcaroDowekMSCS24}}]
  \label{thm:v_space_type}
  The set $Q$ forms a finite-dimensional vector space, with vector addition and
  scalar multiplication given by $\Plus$ and $\sdot$, respectively.

  For every element $[\p m] \in Q$, there is a vector $\ul{[\p
  m]} \in \mathbb{C}^2$.  Conversely, for any vector $\vec{v} \in
  \mathbb{C}^2$, there is a closed term $\vdash \ov{\vec{v}} : \pqbit$.
  Moreover, this correspondence preserves the structure:
  $\ul{[\p{m_1 \Plus m_2}]} = \ul{[\p{m_1}]} + \ul{[\p{m_2}]}$
    and
  $\ul{[\p a \sdot \p m]} = \p a \cdot \ul{[\p m]}$.
    \qed
\end{theorem}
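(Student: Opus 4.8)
The plan is to exploit the one-to-one correspondence already established in \cite[Lemmas 3.7 and 3.8]{DiazcaroDowekMSCS24}, but to repackage it so that the vector-space structure on $Q$ is made explicit, and to check that the two translations $\ul{[\cdot]}$ and $\ov{\cdot}$ are mutually inverse homomorphisms. First, I would fix a concrete presentation of $\pqbit = \one \pwith \one$: the two ``basis'' terms $\ov{\ket 0} \defeq \pair{\p\ast}{\p 0 \sdot \p\ast}$ and $\ov{\ket 1} \defeq \pair{\p 0 \sdot \p\ast}{\p\ast}$ (suitably written in algebraic form), and more generally $\ov{(a,b)} \defeq \pair{\p a \sdot \p\ast}{\p b \sdot \p\ast}$. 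The key structural fact to record is that, by the commutation rule~\eqref{rule:commwith} together with the algebraic equivalences $\leftrightarrows$, any closed term of type $\one \pwith \one$ is $\equiv$-equivalent to a term of the form $\pair{\p a \sdot \p\ast}{\p b \sdot \p\ast}$ for unique scalars $\p a, \p b \in \mathbb C$: the pair constructor has only one introduction rule, a closed term of type $\one$ normalises (by Progress and Strong Normalisation) to a value of the form $\p\alpha \sdot \p\ast$, and uniqueness of the algebraic form (Theorem~\ref{lem:unique_alg}) pins down the scalar. This gives the bijection $\ul{[\cdot]} \colon Q \to \mathbb C^2$, $[\pair{\p a \sdot \p\ast}{\p b \sdot \p\ast}] \mapsto (a,b)$, with inverse $\ov{\cdot}$.

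Next I would verify that $\Plus$ and $\sdot$ descend to well-defined operations on $Q = \{\text{closed terms of } \pqbit\}/{\equiv}$: this is immediate since $\equiv$ is a congruence (the context rules for $\leftrightarrows$ and $\rightarrow$ in Figure~\ref{fig:red_rightarrow}), and the sum/prod typing rules keep us inside the type $\pqbit$. That $(Q, \Plus, \sdot)$ satisfies the vector-space axioms then follows directly from the left column of the $\leftrightarrows$ rules in Figure~\ref{fig:bidir_rules} (commutativity, associativity, scalar compatibility, distributivity, unit law $1 \sdot M \leftrightarrows M$) once transported along the bijection; equivalently, one can simply transport the vector-space structure of $\mathbb C^2$ back through $\ul{[\cdot]}$, which is the cleaner route. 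Finally, the homomorphism identities $\ul{[\p{m_1 \Plus m_2}]} = \ul{[\p{m_1}]} + \ul{[\p{m_2}]}$ and $\ul{[\p a \sdot \p m]} = \p a \cdot \ul{[\p m]}$ are checked by normalising $\p m_1, \p m_2$ (resp.\ $\p m$) to the canonical pair form and applying rule~\eqref{rule:commwith} and the distributivity/scalar rules of $\leftrightarrows$ to bring $\p{m_1 \Plus m_2}$ (resp.\ $\p a \sdot \p m$) into canonical form, reading off the scalars.

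The main obstacle is the normal-form analysis: one must be sure that \emph{every} closed proof term of $\one \pwith \one$—not merely those already in value form—is $\equiv$-equivalent to a canonical pair $\pair{\p a \sdot \p\ast}{\p b \sdot \p\ast}$. This requires combining Progress (Theorem~\ref{thm:prog}), Preservation (Theorem~\ref{thm:sub-red}), and Strong Normalisation (Theorem~\ref{cor:strong-norm}) to reduce to a value, then inspecting the grammar of pure values at type $\one\pwith\one$ (which forces the $\p{\langle v, v\rangle}$ shape with each component a closed value of type $\one$), and finally invoking rule~\eqref{rule:commwith} and Theorem~\ref{lem:unique_alg} to collapse a general linear combination of pairs into a single pair and to guarantee uniqueness of the extracted scalars. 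Since this is exactly the content of \cite[Lemmas 3.7 and 3.8]{DiazcaroDowekMSCS24}, the proof here can largely cite those lemmas and merely note that the added $\mathcal B$ modality and mixed fragment do not interfere, because the terms and reductions involved stay entirely within the pure $\one\pwith\one$ fragment.
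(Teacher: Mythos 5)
Your proposal is correct and matches the paper's treatment: the paper gives no in-house proof of Theorem~\ref{thm:v_space_type}, simply importing it from the cited Lemmas 3.7 and 3.8 of~\cite{DiazcaroDowekMSCS24}, exactly as you conclude. Your reconstruction of the underlying argument (normalise closed terms of $\one\pwith\one$ to a canonical pair $\p{\langle a}\sdot\p{\ast, b}\sdot\p{\ast\rangle}$ via Progress, Preservation, Strong Normalisation, rule~\eqref{rule:commwith} and Theorem~\ref{lem:unique_alg}, then read off the bijection and homomorphism laws) is a sound expansion of that citation, with the only minor refinement being that well-definedness on $\equiv$-classes is most cleanly secured by confluence (Theorem~\ref{thm-confluence}) or soundness of the semantics rather than uniqueness of algebraic forms alone.
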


\begin{example}[Qubit encoding]
  A single qubit $a \ket{0} + b \ket{1} \in \mathbb{C}^2$ is encoded by the term  
  $\p{\langle a} \sdot \p{\ast , b} \sdot \p{\ast \rangle}$.
  In particular, we define the basis vectors as:
    $\ket{0} \defeq \p{\langle} \p{\ast , 0} \sdot \p{\ast \rangle}$
    and
    $\ket{1} \defeq \p{\langle 0} \sdot \p{\ast , } \p{\ast \rangle}$.
  An $n$-qubit is a vector in $\mathbb{C}^{2^n}$, and it is encoded as a linear
  combination of $n$-fold tensor products of 1-qubit encodings.
  As usual, we write $\ket{b_1\cdots b_n}$ for the $n$-qubit $\ket{b_1} \otimes \cdots \otimes \ket{b_n}$.
  For example, the
  $2$-qubit entangled state $\frac{\ket{00} + \ket{11}}{\sqrt{2}}$ is
  encoded as: $\vdash\tfrac{\p 1}{\p{\sqrt{2}}} \sdot \ket{00}
  \Plus \tfrac{\p 1}{\p{\sqrt{2}}} \sdot \ket{11}:\pqbit\p{\otimes}\pqbit$,
  that is,
  \(
    \vdash\tfrac{\p 1}{\sqrt{\p 2}} \sdot (\p{\langle} \p{\ast , 0} \sdot \p{\ast \rangle} \otimes \p{\langle} \p{\ast , 0} \sdot \p{\ast \rangle})
    \Plus
    \tfrac{\p 1}{\sqrt{\p 2}} \sdot (\p{\langle 0} \sdot \p{\ast , } \p{\ast \rangle} \otimes \p{\langle 0} \sdot \p{\ast , } \p{\ast \rangle}) : (\one\pwith\one) \p{\otimes} (\one\pwith\one)
  \).
\end{example}

\begin{remark}
In~\cite{DiazcaroDowekMSCS24}, $2^n$-dimensional vectors are encoded as proof terms of $\one^{\pwith{2^n}}$.  
Here we prefer to use the tensor notation, which provides finer control over term structure---in particular, allowing for  
a more direct encoding of constructs such as the Quantum Switch (see Example~\ref{ex:qs}).
\end{remark}

We first show how to encode $2 \times 2$ matrices (Theorem~\ref{lem:matrix_app}), and then generalise the construction to $n \times n$ matrices (Corollary~\ref{cor:tens_encoding}).

\begin{theorem}[$2 \times 2$ matrices~{\cite[Theorem 3.10]{DiazcaroDowekMSCS24}}]
\label{lem:matrix_app}
Let $M$ be a $2 \times 2$ complex matrix.  
Then there exists a closed proof term $\p f$ of $\pqbit \pmultimap \pqbit$ such that for any vector $\vec{v} \in \mathbb{C}^2$, we have:
  $M\vec{v} = \ul{\p f(\ov{\vec{v}})}$.
\end{theorem}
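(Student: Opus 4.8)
The plan is to exhibit the term $\p f$ explicitly, check that it is a closed derivation of $\pqbit\pmultimap\pqbit$, and then reduce it applied to an arbitrary encoded vector. Write $M=(m_{ij})_{i,j\in\{0,1\}}$ for the entries of the matrix, and recall from the qubit encoding and Theorem~\ref{thm:v_space_type} that a vector $\vec v=(v_0,v_1)$ is encoded as $\ov{\vec v}=\pair{v_0\sdot\ast}{v_1\sdot\ast}$. In the pure fragment I take
\[
  \p f \;\defeq\; \lambda\,\abstr{x}\bigl(\elimwith^1(x,\abstr{y}\,\pair{m_{00}\sdot y}{m_{10}\sdot y})\;\Plus\;\elimwith^2(x,\abstr{z}\,\pair{m_{01}\sdot z}{m_{11}\sdot z})\bigr),
\]
the idea being that $\elimwith^1$ reads off the first coordinate $v_0$ and uses it to scale the first column $(m_{00},m_{10})$ of $M$, $\elimwith^2$ reads off $v_1$ and scales the second column, and $\Plus$ adds the two contributions. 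Typing is by the evident derivation: $y:\one\vdash\pair{m_{00}\sdot y}{m_{10}\sdot y}:\one\pwith\one$ using $\mathrm{ax}$, $\mathrm{prod}$ and $\pwith_i^p$, and similarly for the other body; $\pwith_{e_1}^p$ and $\pwith_{e_2}^p$ (applied against the axiom $x:\pqbit\vdash x:\pqbit$) give the two summands under the single hypothesis $x:\pqbit$; the $\mathrm{sum}$ rule combines them and $\multimap_i$ closes the term. The one point worth dwelling on is that $x$ occurs once in each summand, hence twice in all: this is legitimate because both occurrences sit under the additive rules $\mathrm{sum}$ and $\pwith_i^p$, whose premises share a common context, and it is precisely what lets a coordinate-mixing matrix be expressed in an otherwise linear calculus.

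For the behaviour, fix $\vec v=(v_0,v_1)$. Up to $\leftrightarrows$ the term $\ov{\vec v}=\pair{v_0\sdot\ast}{v_1\sdot\ast}$ is a value --- the only care needed is when some $v_j\in\{0,1\}$, which the interleaving in Definition~\ref{def:hook} absorbs --- so $\p f(\ov{\vec v})$ $\beta$-reduces by rule~\eqref{rule:arrow} to
\[
  \elimwith^1(\pair{v_0\sdot\ast}{v_1\sdot\ast},\abstr{y}\,\pair{m_{00}\sdot y}{m_{10}\sdot y})\;\Plus\;\elimwith^2(\pair{v_0\sdot\ast}{v_1\sdot\ast},\abstr{z}\,\pair{m_{01}\sdot z}{m_{11}\sdot z}).
\]
Rules~\eqref{rule:withleft} and~\eqref{rule:withright} together with the scalar identities of $\leftrightarrows$ turn this into $\pair{(m_{00}v_0)\sdot\ast}{(m_{10}v_0)\sdot\ast}\Plus\pair{(m_{01}v_1)\sdot\ast}{(m_{11}v_1)\sdot\ast}$, and then the commutation rule~\eqref{rule:commwith}, which pushes $\Plus$ through $\pair{\cdot}{\cdot}$, together with $\alpha\sdot\ast\Plus\beta\sdot\ast\leftrightarrows(\alpha+\beta)\sdot\ast$, yields $\pair{(m_{00}v_0+m_{01}v_1)\sdot\ast}{(m_{10}v_0+m_{11}v_1)\sdot\ast}$, which is exactly $\ov{M\vec v}$. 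Hence $\p f(\ov{\vec v})\equiv\ov{M\vec v}$, and applying the map $\ul{(\cdot)}$ of Theorem~\ref{thm:v_space_type} gives $\ul{\p f(\ov{\vec v})}=\ul{\ov{M\vec v}}=M\vec v$. If one prefers not to track the rewrite steps explicitly, the same conclusion follows from Progress (Theorem~\ref{thm:prog}), Confluence (Theorem~\ref{thm-confluence}) and strong normalisation (Theorem~\ref{cor:strong-norm}): $\p f(\ov{\vec v})$ is closed and not a value, so it has a unique normal-form value, and the steps above identify that value up to $\equiv$ as $\ov{M\vec v}$.

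The proof has no deep obstacle; the two spots needing attention are the ones flagged above: (i) seeing that the double occurrence of $x$ respects linearity --- it does, since the additive rules $\mathrm{sum}$ and $\pwith_i^p$ do not split the context; and (ii) the algebraic bookkeeping of $\leftrightarrows$, where scalars equal to $0$ or $1$ affect whether an intermediate term is literally a value and which equivalence clauses apply, all of which $\hookrightarrow$ is designed to absorb. An alternative organisation, trading this bookkeeping for an appeal to linearity, decomposes $M=\sum_{i,j}m_{ij}E_{ij}$ into the four elementary matrices, realises each $E_{ij}$ by a short term $\p{f_{ij}}$ with $\p{f_{ij}}(\ov{\vec v})\equiv\ov{v_j e_i}$, and sets $\p f\defeq\sum_{i,j}m_{ij}\sdot\p{f_{ij}}$, then concludes from linearity of application (rules~\eqref{rule:commarrow} and~\eqref{rule:arrow}, cf. Corollary~\ref{cor:linearity}) and the structure-preservation of $\ul{(\cdot)}$ in Theorem~\ref{thm:v_space_type}.
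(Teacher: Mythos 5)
Your proposal is correct and follows essentially the same construction as the paper: destruct the argument with $\elimwith^1$ and $\elimwith^2$, scale the two columns of $M$ by the extracted coordinates, and combine them with $\Plus$, then conclude by the reduction to $\ov{M\vec v}$ and Theorem~\ref{thm:v_space_type}. The only (harmless) difference is how the coordinate scalar is transported: the paper routes it through auxiliary terms $\p{f_1},\p{f_2}$ built with $\elimone$ and Rule~\eqref{rule:one}, whereas you reuse the $\one$-typed bound variable in both components of the additively shared $\pwith$-pair, which is equally well-typed and reduces to the same value.
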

\begin{proof}  
  Let $M \defeq a \ketbra 00 + b \ketbra 01 + c \ketbra 10 + d \ketbra 11$ be a $2\times 2$ matrix, and let
  $\p f \defeq \p{\lambda \abstr{x} \elimwith^1(x, \abstr{x_1}{f_1 x_1}) \Plus \elimwith^2(x, \abstr{x_2}{f_2 x_2})}$,
  where
  $\p{f_1} \defeq \p{\lambda \abstr{x}\elimone(x, } \p{\langle a} \sdot \p{\ast , c} \sdot \p{\ast \rangle})$
  and
  $\p{f_2} \defeq \p{\lambda \abstr{x}\elimone(x, } \p{\langle b} \sdot \p{\ast , d} \sdot \p{\ast \rangle})$.
  Let $\vec{v} \defeq a_1\ket 0+b_1\ket 1 \in \mathbb{C}^2$.  
  Notice that
  $\p f(\ov{\vec{v}}) \hookrightarrow^* \p{f_1}(\p a_1 \sdot \p \ast) \Plus \p{f_2}(\p b_1 \sdot \p \ast)
  \hookrightarrow^* 
  \p{\langle} (\p a_1\p a+\p b_1\p b) \sdot \p{\ast ,} (\p a_1\p c +\p b_1\p d) \sdot \p{\ast \rangle}$.
  Thus, $M\vec{v} = \ul{\p f(\ov{\vec{v}})}$.
  \qed
\end{proof}

\begin{corollary}
\label{cor:tens_encoding}
Let $M$ be a linear transformation on $\mathbb C^{2^n}$.  Then there exists a
closed term $\p f$ such that: $\underline{\p f(\ov{\vec{v}})} = M\vec{v}$ for
all $\vec{v} \in \mathbb{C}^{2^n}$.
\end{corollary}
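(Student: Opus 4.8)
The plan is to reduce the general $\mathbb C^{2^n}$ case to the $2\times 2$ case established in Theorem~\ref{lem:matrix_app}, using the tensor-product encoding of $n$-qubits and a decomposition of an arbitrary linear map on $\mathbb C^{2^n}$ into a sum of elementary operators. First I would recall that, under the encoding, an $n$-qubit is a linear combination of $n$-fold tensors of $1$-qubit encodings, and that the type $\pqbit^{\p\otimes n} := \pqbit \p\otimes \cdots \p\otimes \pqbit$ denotes the Hilbert space $(\mathbb C^2)^{\otimes n} \cong \mathbb C^{2^n}$. By Theorem~\ref{thm:v_space_type} and the compatibility of the encoding with $\Plus$, $\sdot$, and $\p\otimes$ (the latter coming from the monoidal structure of the denotational model and Corollary~\ref{cor:linearity}), it suffices to build a closed term $\p f : \pqbit^{\p\otimes n} \pmultimap \pqbit^{\p\otimes n}$ realizing $M$ on basis vectors, since linearity then extends the identity $\underline{\p f(\ov{\vec v})} = M\vec v$ to all of $\mathbb C^{2^n}$.

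The key construction is to write $M = \sum_{k} M_1^{(k)} \otimes \cdots \otimes M_n^{(k)}$ as a finite sum of tensor products of $2\times 2$ matrices --- this is always possible since the operators $\ketbra ij \otimes \cdots \otimes \ketbra{i'}{j'}$ span the space of linear maps on $\mathbb C^{2^n}$. For each factor $M_\ell^{(k)}$, Theorem~\ref{lem:matrix_app} provides a closed term $\p f_\ell^{(k)} : \pqbit \pmultimap \pqbit$. I would then define the term $\p g^{(k)} : \pqbit^{\p\otimes n} \pmultimap \pqbit^{\p\otimes n}$ acting as $\p f_1^{(k)} \otimes \cdots \otimes \p f_n^{(k)}$: concretely, $\p{\lambda\abstr{x}\,\elimtens(x, \abstr{x_1\cdots x_n}\, (\p f_1^{(k)}\,x_1) \otimes \cdots \otimes (\p f_n^{(k)}\,x_n))}$, iterating the $\p\otimes$-elimination rule to unpack the $n$-fold tensor. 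Finally I set $\p f := \p{\lambda\abstr{x}\, \sum_k \p g^{(k)}~x}$, i.e. the $\Plus$-sum of all the $\p g^{(k)}$ applied to the argument. One checks by the cut-elimination rules (in particular Rule~\eqref{rule:tensor} for the $\p\otimes$-redexes and Rule~\eqref{rule:arrow} for the $\pmultimap$-redexes) that on a tensor of $1$-qubit values $\p f$ computes $\sum_k (\p f_1^{(k)}\,v_1) \otimes \cdots \otimes (\p f_n^{(k)}\,v_n)$, which by Theorem~\ref{lem:matrix_app} and the structure-preservation of the encoding has underline equal to $\sum_k (M_1^{(k)} v_1) \otimes \cdots \otimes (M_n^{(k)} v_n) = M(v_1 \otimes \cdots \otimes v_n)$.

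The main obstacle I expect is bookkeeping rather than anything deep: one must carefully justify that the encoding $\ov{(\cdot)}$ and its inverse $\underline{(\cdot)}$ interact correctly with the tensor product --- that is, that $\underline{[\p m_1 \otimes \p m_2]} = \underline{[\p m_1]} \otimes \underline{[\p m_2]}$ and that $\ov{\vec u \otimes \vec w} \equiv \ov{\vec u} \otimes \ov{\vec w}$ up to $\equiv$ --- which is not stated explicitly in Theorem~\ref{thm:v_space_type} (that theorem is phrased only for single qubits) but follows from the soundness and adequacy of the denotational semantics together with the fact that $\lrb{\p P \p\otimes \p Q} = \lrb{\p P} \otimes \lrb{\p Q}$. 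A clean way to organize this is to prove the intermediate claim by induction on $n$: the base case $n=1$ is Theorem~\ref{lem:matrix_app}, and the inductive step splits $M$ on $\mathbb C^{2^{n+1}} \cong \mathbb C^2 \otimes \mathbb C^{2^n}$ into a sum $\sum_k A^{(k)} \otimes B^{(k)}$ with $A^{(k)}$ a $2\times 2$ matrix and $B^{(k)}$ a map on $\mathbb C^{2^n}$, applying Theorem~\ref{lem:matrix_app} to $A^{(k)}$ and the induction hypothesis to $B^{(k)}$, and combining them via one $\p\otimes$-elimination plus a $\Plus$-sum. This keeps the term manipulations local and the reduction-sequence verification routine.
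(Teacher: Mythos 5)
Your proposal follows essentially the same route as the paper's proof: decompose $M$ as a linear combination of tensor products of $2\times 2$ matrices, encode each factor via Theorem~\ref{lem:matrix_app}, and combine them with a $\p{\lambda\abstr{z}\elimtens(z,\abstr{xy}\,\cdots)}$ term plus $\Plus$ and $\sdot$. You simply spell out the $n$-fold iteration, the reduction-sequence check, and the tensor-compatibility bookkeeping that the paper leaves implicit, so the argument is correct and matches the intended one.
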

\begin{proof}
  Let $\p f_1$ and $\p g_1$ encode two matrices $M_1$ and $N_1$ respectively.
Thethe tensor product $M_1 \otimes N_1$ can be encoded by:
  $\p{\lambda z.\, \elimtens(z, \abstr{xy}{f_1 x \otimes g_1 y})}$.
Thus, the result follows from Theorem~\ref{lem:matrix_app} and the fact that any linear transformation  
on a finite-dimensional space $V \otimes W$ can be written as a linear combination of operators of the form $f \otimes g$,  
with $f : V \rightarrow V$ and $g : W \rightarrow W$.
\qed
\end{proof}

\begin{example}[CNOT gate]
  \label{ex:CNOT}
  The $CNOT$ gate can be written as a linear combination of tensor products of Pauli matrices $X$, $Z$, and the identity $I$:  
  \(
    \tfrac{1}{2}(I \otimes I + Z \otimes I + I \otimes X - Z \otimes X)
  \).
  Let $\p f_x$, $\p f_z$ be the proof terms representing $X$ and $Z$, respectively.  
  Then, by Corollary~\ref{cor:tens_encoding}, a proof term representing the $CNOT$ gate for  
  proposition $\pqbit \p\otimes \pqbit \p\multimap \pqbit \p\otimes \pqbit$ is:
  \\
 $\p{\lambda \abstr{z} \elimtens\Big(z, \abstr{xy}}
    \tfrac{\p 1}{\p 2} \sdot \p{(x \otimes y)} \Plus
    \tfrac{\p 1}{\p 2} \sdot \p{(f_z~x \otimes y)}
    \Plus 
    \tfrac{\p 1}{\p 2} \sdot \p{(x \otimes f_x~y)} \Plus
  \tfrac{\p{-1}}{\p 2} \sdot \p{(f_z~x \otimes f_x~y)\Big)}$.
\end{example}

\begin{example}[Quantum switch]
\label{ex:qs}
The quantum switch is a higher-order construction that applies two proofs $\p f,\p g : \p P\pmultimap\p P$ in opposite order depending on the value of a control qubit.  
It can be represented in \ocalc as the proof term:\\
$\p{\lambda \abstr{h} \lambda \abstr{z}
  \elimtens\Big(\!h, \abstr{fg}{
    \elimtens(z, \abstr{xy}{
      \elimwith^1(x,\abstr{x_1}{x_1 \otimes f(g\,y)})\! \Plus
      \elimwith^2(x,\abstr{x_2}{x_2 \otimes g(f\,y)})
    })
  }\!\Big)}$\\
proving the proposition 
$((\p{P \multimap P}) \p\otimes (\p{P \multimap P})) \p\multimap (\pqbit \p\otimes \p P) \p\multimap (\pqbit \p\otimes \p P)$.
\end{example}

The mixed-state fragment allows us to represent quantum measurements.
Let $m \colon M_2 \to M_2$ be the measurement map
\[
  \textit{m}(\rho) \defeq \textstyle\sum_{i=1}^n P_i \rho P_i^\dagger,
\]
determined by a choice of positive elements $P_i \in M_2$ that sum to the identity, i.e. a POVM.
Let $\p m_i$ denote the encoding in \ocalc\ of the linear map $P_i$. Then $\mathcal B(\p m_i)$ encodes
the map $P_i(\cdot)P_i^\dagger$ in a way that allows us to recover its action on positive elements
(which is sufficient considering our choice of morphisms).
Therefore the encoding of $m$ in $\ocalc$ is
\[
  \vdash\mathcal B(\p{m_1}) \Plus \mathcal B(\p{m_2}) \Plus \cdots \Plus \mathcal B(\p{m_n}):\mathcal B(\pqbit\pmultimap\pqbit) .
\]

This construction naturally generalises to measurements on finitely many
qubits by applying Corollary~\ref{cor:tens_encoding}.

\begin{example}[Measurement in the computational basis]
  \label{ex:meas}
  Consider the $2 \times 2$ projection matrices
  $\ketbra{0}{0}$ and $\ketbra{1}{1}$.  
  Let $\p{m_1}, \p{m_2}$ denote the encoding of these matrices in \ocalc.
  Then, the measurement of a single qubit in the computational basis is given by the
  term $\vdash \mathcal B(\p{m_1}) \Plus \mathcal B(\p{m_2}) : \mathcal B(\pqbit\pmultimap\pqbit)$.
  Consider, for example, the term $\p{\ket+} \defeq \p{\langle \frac{1}{\sqrt{2}}} \sdot \p{\ast, \frac{1}{\sqrt{2}}} \sdot \p{\ast \rangle}$.
  Then:
  \begin{align*}
    (\mathcal B(\p{m_1}) \Plus \mathcal B(\p{m_2}))\,\mathcal B(\p{\ket +})
    & \hookrightarrow \mathcal B(\p{m_1~\ket +}) \Plus \mathcal B(\p{m_2~\ket +})\\
  & \hookrightarrow^* \mathcal B\big(\p{\langle} \tfrac{\p 1}{\p{\sqrt{2}}} \sdot \p{\ast , 0} \sdot \p{\ast \rangle}\big)
\Plus \mathcal B\big(\p{\langle 0} \sdot \p{\ast , }\tfrac{\p 1}{\p{\sqrt{2}}} \sdot \p{\ast \rangle}\big).
  \end{align*}
  This is, as expected, the representation of the density matrix
  $\tfrac{1}{2} \cdot \ketbra{0}{0} + \tfrac{1}{2} \cdot \ketbra{1}{1}$.
\end{example}

\begin{example}[Bell measurement]
\label{ex:bell_m}
The Bell basis is given by $\{\beta_{00},\beta_{01},\beta_{10},\beta_{11}\}$, with $\beta_{ij} \defeq CNOT((H\ket i)\otimes\ket j)$.
Let $\p m_{ij}$ be the encoding of the projector $\ketbra{\beta_{ij}}{\beta_{ij}}$. Then, Bell measurement is represented by:
\[
  \vdash\mathcal B(\p m_{00}) \Plus \mathcal B(\p m_{01}) \Plus \mathcal B(\p m_{10}) \Plus \mathcal B(\p m_{11})
  : \mathcal B(\pqbit \p\otimes \pqbit \pmultimap \pqbit \p\otimes \pqbit)
\] \end{example}

\begin{example}[Teleportation]
\label{ex:tele}
Quantum teleportation transfers the state of a qubit $\ket{\psi}$ using shared entanglement and classical communication. Let $\p{f_i}$ be the proof terms representing the unitary corrections $U_i$, and $\p{h_i}$ the proof terms for the projectors $P_i$ used in a Bell measurement. Then, consider the proof term:
\begin{align*}
  \p{U} \defeq
  &\mathcal B(\p{\lambda \abstr{z} \elimtens(z, \abstr{xy} f_1~x \otimes h_1~y)}) \Plus
  \mathcal B(\p{\lambda \abstr{z} \elimtens(z, \abstr{xy} f_2~x \otimes h_2~y)}) \Plus\\
  &\mathcal B(\p{\lambda \abstr{z} \elimtens(z, \abstr{xy} f_3~x \otimes h_3~y)}) \Plus
  \mathcal B(\p{\lambda \abstr{z} \elimtens(z, \abstr{xy} f_4~x \otimes h_4~y)})
\end{align*}
of the proposition
$\mathcal B(\pqbit \p\otimes (\pqbit \p\otimes \pqbit)) \pmultimap \mathcal B(\pqbit \p\otimes (\pqbit \p\otimes \pqbit))$.
Teleportation is then encoded as the proof term for the proposition
$\mathcal B(\pqbit) \multimap \mathcal B(\pqbit \p\otimes \pqbit \p\otimes \pqbit)$ given by
\[
  \text{Telep} \defeq \lambda z.\, \p{U}~(\tau(\mathcal B(\beta_{00}) \otimes z)).
\]
Here, $\tau$ is used to encode the Bell state and the state to be teleported into a single density matrix.
\end{example}

\section{Conclusion}
\label{sec:conclusion}

We introduced a proof language for Intuitionistic Multiplicative Additive Linear Logic (IMALL), extended with a modality $\mathcal B$ to integrate both pure and mixed-state quantum computation in a unified setting. 
The language is equipped with a categorical model that serves two key roles in our development:
(1) the design of the logical system was inspired and largely extracted from the categorical/mathematical model;
(2) this model is relevant for the mathematical formulation of finite-dimensional quantum theory and we use it to justify design choices in the logical system.
Our logical system enables the expression of the pure-state quantum switch (Example~\ref{ex:qs}), which can, in principle, be applied in a mixed-state context subsequently.
This requires combining pure-state primitives with mixed-state primitives and cannot be easily achieved in most other logical/type systems.

Future work includes extending the model to full Intuitionistic Linear Logic (ILL) and clarifying the connection with Lambda-S, whose semantics is based on the same adjunction used to interpret the exponential modality of linear logic~\cite{DiazcaroMalherbeACS2020,DiazcaroMalherbeLMCS22}.

\paragraph{Acknowledgements.} We thank Cole Comfort, James Hefford, and Bert Lindenhovius for discussions. This work has been partially funded by the French National Research Agency (ANR) within the framework of ``Plan France 2030'', under the research projects EPIQ ANR-22-PETQ-0007, HQI-Acquisition ANR-22-PNCQ-0001 and HQI-R\&D ANR-22-PNCQ-0002, by the European Union through the MSCA SE project QCOMICAL (Grant Agreement ID: 101182520), and by the Uruguayan CSIC grant 22520220100073UD.

\bibliographystyle{splncs04}
\bibliography{refs}
\newpage
\appendix

\section{Omitted Proofs in \secref{sec:language}}
\label{app:proofs}
\Uniqueness*
\begin{proof}
  Proof is by induction on $M$:
  \begin{itemize}
    \item $M$ is $\p x$ or $x$: In these cases $\aform{M} = 1 \sdot M$.
    \item $M$ is $\p \topintro$: Again in this case $\aform{M} = 1 \sdot M$.
    \item $M$ is $\p\ast$ or $\ast$: In these cases, $\aform{M} = 1 \sdot M$. 
    \item $M$ is $\p{\elimone(m,n)}$: By induction hypothesis $\p m, \p n$ have unique algebraic forms $\aform{\p m}, \aform{\p n}$.
      We have, $\aform{M} = 1\sdot \p{\elimone(}\aform{\p m}\p ,\aform{\p n}\p)$.
    \item $M$ is $\p{\lambda \abstr{x}m}$: By induction hypothesis, $\p m$ has a algebraic form $\aform{\p m}$. We define,
      $\aform{M} = 1 \sdot \p{\lambda \abstr{x}}\aform{m}$. Since $\aform{\p m}$ is unique, so is $\aform{M}$.
    \item $M$ is $\p{m~n}$: We have, $\aform{\p{m~n}} = \p 1 \sdot \aform{\p m}~\aform{\p n}$.
    \item $M$ is $\p{m \otimes n}$: By the induction hypothesis, we have $\aform{\p m} = \textstyle\sum_{i=1}^m \p{a_i} \sdot \mathfrak{b}^i$,
      $\aform{\p n} = \textstyle\sum_{j=1}^n \p{a_i} \sdot \mathfrak{b}^j$. Hence from transitivity and contextual rules
      it follows that $M = (\textstyle\sum_{i=1}^m \p{a_i} \sdot \mathfrak{b}^i) \p \otimes (\textstyle\sum_{j=1}^n \p{a_i} \sdot \mathfrak{b}^j)$.
      Following the rules for rewriting under $=$ in Figure~\ref{fig:bidir_rules}, we have,
      $M = \textstyle\sum_{i=1}^m \textstyle\sum_{j=1}^n \p{a_ia_j} \sdot \p{\mathfrak{b}^i \otimes \mathfrak{b}^j}$. Now, renaming indices
      and rewriting if needed using the rule $\alpha \sdot M \Plus \beta \sdot M = (\alpha + \beta)\sdot M$, gives us the desired 
      result.
    \item $M$ is $\p{\elimtens(m, \abstr{xy}n)}$: In this case we have $\aform{M} = \p{1} \sdot \p{\elimtens(} \aform{\p m} \p{, \abstr{xy}} \aform{\p n}\p{)}$.
    \item $M$ is $\p{\pair{m}{n}}$: We have, $\aform{M} = \p 1 \sdot \p{\langle} \aform{\p m}\p , \aform{\p n}\p \rangle$.
    \item $M$ is $\p{\elimwith^i(m, \abstr{x}n)}$: Follows using the same argument as the cases above.
    \item $M$ is $\p{m \Plus n}$: By induction hypothesis we have $\aform{\p m} = \textstyle\sum_{i=1}^n \p{a_i}\sdot \mathfrak{b}^i$ and
      $\aform{\p n} = \textstyle\sum_{j=1}^k \p{a_j}\sdot \mathfrak{b}^j$. Hence, 
      $M = \textstyle\sum_{i=1}^n \p{a_i}\sdot \mathfrak{b}^i \Plus \textstyle\sum_{j=1}^k \p{a_j}\sdot \mathfrak{b}^j$. Now, rewriting using commutativity
      and the rule $\alpha \sdot M \Plus \beta \sdot M = (\alpha + \beta)\sdot M$ gives us the result.
    \item $M$ is $\p{a}\sdot \p m$: 
      We have $\aform{\p m} = \textstyle\sum_{i=1}^n \p{a_i}\sdot \mathfrak{b}^i$ by the induction hypothesis. Hence, 
      $M = \p a \sdot \textstyle\sum_{i=1}^n \p{a_i}\sdot \mathfrak{b}^i$. Now rewriting using distributivity, commutativity and the 
      rule $\alpha \sdot (\beta \sdot M) = \alpha \beta \sdot M$, gives us the result.
    \item $M$ is $\mathcal B(\p m)$: In this case we have, $\aform{M} = 1 \sdot \mathcal B(\aform{\p m})$.
    \item $M$ is $\tau(m)$: In this case we have, $\aform{M} = 1 \sdot \tau(\aform{m})$.
  \end{itemize}
  The remining cases for the mixed fragment follow using similar arguments as those above.
  \qed
\end{proof}

\progress*
\begin{proof} 
  The $\leftrightarrows$ is only used to rewrite terms when $\rightarrow$ gets
  stuck, if the reduct is not in its algebraic form. We prove progress by
  induction on $M$.  $M$ cannot be a variable since it is closed, and if If $M$
  is a value, we are done.
  Otherwise, we have the following cases:
  \begin{itemize}
    \item $M$ is $\p{m_1 \Plus m_2}$:  By induction hypothesis, either $\p{m_1}, \p{m_2}$ reduce further or they are values. If they reduce further then 
      so does $M$ by the contextual reduction rules. If they are values we do a case analysis based on the proposition $\p P$ of $M$. If 
      $\p P = \p{P_1 \with P_2}, \p{P_1 \multimap P_2}$, then $\p{m_1}, \p{m_2}$ are $ 
      \p{\pair{u_1}{u_2}}, \p{\pair{v_1}{v_2}}$ or $\p{\lambda \abstr{x}n_1}, \p{\lambda \abstr{x}n_2}$ respectively. In these cases $M$ reduces further under $\rightarrow$ 
      after a rewrite of $M$ to $\aform{M}$. Now, if
      $\p P = \one, \p{P_1 \otimes P_2}$, then they are of the form $\p \ast, \p a \sdot \p \ast$ or 
      $\textstyle\sum_{i=1}^n\p{a_i} \sdot \p{v_b}^i$ as in the syntax. In all these cases, $M$ does not reduce under 
      $\rightarrow$ and hence a rewrite of $M$ under $\leftrightarrows$ gives us the desired result. Hence, in all possible cases $M$ reduces under $\hookrightarrow$.

    \item $M$ is $\p a \sdot \p n$: Again by induction hypothesis, $\p n$ reduces further until it is a value, in which case so does $M$. Now we assume $\p n$ 
      is a value. We do a case analysis based on the proposition $\p P$ of $M$. 
      If $\p P = \p{P_1 \with P_2}, \p{P_1 \multimap P_2}$, then $\p n$ is $ 
      \p{\pair{v_1}{v_2}}$ or $\p{\lambda \abstr{x}n'}$ respectively. In these cases $M$ reduces further under $\rightarrow$ to take the form $\p v$ 
      defined in the syntax. Now, if
      $\p P = \one, \p{P_1 \otimes P_2}$, then it is of the form $\p \ast, \p a \sdot \p \ast$ or 
      $\textstyle\sum_{i=1}^n\p{a_i} \sdot \p{v_b}^i$ as in the syntax. In both these cases, $M$ does not reduce under
      $\rightarrow$ and hence a rewrite of $M$ under $\leftrightarrows$ gives us the desired result. Hence, in all possible cases $M$ reduces under $\hookrightarrow$.

    \item $M$ is $\p{\elimone(m_1,m_2)}$: By induction hypothesis, $\p{m_1}$ reduces until it is a value, in which case so does $M$. Now, we assume that $\p{m_1}$ is a value. It 
      follows that $\p{m_1}$ is $\p{\ast}$ or $\p a \sdot \p \ast$. 
      Hence, it follows that $\p m$ reduces further either through a potential rewrite of $\p{m_1}$ under $\leftrightarrows$ or by reduction under $\rightarrow$.

    \item $M$ is $\p{m_1~m_2}$: Again by induction hypothesis both $\p{m_1}, \p{m_2}$ reduce until they are values, in which case so does $M$. Now, we assume both 
      $\p{m_1}, \p{m_2}$ are values. Hence, by the  rules given in Figure~\ref{fig:derivation_rules} it follows that $\p{m_1}$ is $\p{\lambda \abstr{x}n}$. Therefore, 
      $M$ reduces under $\rightarrow$.

    \item $M$ is $\p{\pair{m_1}{m_2}}$: By induction hypothesis and contextual reduction rules it follows that either $M$ is of the form $\p{\pair{v_1}{v_2}}$ for values $\p{v_1}, \p{v_2}$ or it reduces further under $\rightarrow$. 

    \item $M$ is $\p{\elimwith^1(m_1, \abstr{x} m_2)}$ or $\p{\elimwith^2(m_1, \abstr{x} m_2)}$: By induction hypothesis, either $\p{m_1}$ reduces or it is a value. If 
      it reduces then so does $M$ by the contextual reduction rules. If it is a value, then by the  rules given in Figure~\ref{fig:derivation_rules} it 
      follows that $\p{m_1}$ is $\p{\pair{v_1}{v_2}}$. Hence, $M$ reduces further in both the cases.

    \item $M$ is $\p{m_1 \otimes m_2}$:  By induction hypothesis and contextual reduction rules it follows that either $M$ is of the form $\p{v_1 \otimes v_2}$ for values
      $\p{v_1}, \p{v_2}$ or it reduces further under $\rightarrow$. In the first case a rewrite using Lemma~\ref{lem:unique_alg} gives us a value of the form $\p v$ from the syntax.
    \item $M$ is $\p{\elimtens(m_1, \abstr{xy} m_2)}$: By induction hypothesis, $\p{m_1}$ reduces further until it is a value, in which case so does $M$. Now we assume 
      $\p{m_1}$ is a value. From the  rules in Figure~\ref{fig:derivation_rules} it follows
      that $\p{m_1}$ is of the form $\p{v_n^1 \otimes v_n^2}$ or a linear combination of such values. Hence, following a potential rewrite under $\leftrightarrows$, $M$
      reduces in both the cases under $\rightarrow$.

    \item $M$ is $m_1 \Plus m_2$: Again by induction hypothesis, $m_1, m_2$ reduce further until they are values, in which case so does $M$. Hence we assume that 
      both $m_1, m_2$ are values. Now this case follows using an argument similar to the above case and rewriting
      using the rules for $\leftrightarrows$ defined in Figure~\ref{fig:bidir_rules}.

    \item $M$ is $p \sdot n$: By induction hypothesis, $n$ reduces further until it is a value, in which case so does $M$. Now we assume $n$ is a value, then 
      $n$ takes one of the forms of $v$ from the syntax. In all
      the cases, depending on the form the value takes, $M$ reduces further under $\rightarrow$ through a rewrite under $\leftrightarrows$ (when $M$ is of proposition $A \multimap B$) 
      or it is itself equal a value.

    \item $M$ is $\elimzero(n)$: By induction hypothesis, $n$ reduces until it is a value. Hence, so does $M$ by the contextual rules. Now, we assume $n$ is a value. 
      But by induction hypothesis, $n$ can never be a value.

    \item $M$ is $\elimone(m_1,m_2)$: By induction hypothesis, either $m_1$ reduces or it is a value. If it reduces then so does $M$ by the contextual reduction rules.
      If it is a value, then by the  rules given in Figure~\ref{fig:derivation_rules} it follows that $m_1$ is either $\ast$, or $p \sdot \ast$. In all cases after a 
      potential rewrite to $\aform{m_1}$, $M$ reduces further.

    \item $M$ is $m_1~m_2$: Again by induction hypothesis $m_1, m_2$ reduce further until they are values, in which case so does $M$. Now, we assume both $m_1,m_2$ are values. 
      We do a case analysis on the proposition of $m_1$. 
      First we consider the case when $m_1$ is of the proposition $A \multimap B$. Then it follows that $m_1$ is $\lambda \abstr{x}n$, 
      In this $M$ reduces further.
      Now we consider the case when $m_1$ is of proposition $\mathcal B(\p P_1 \p \multimap \p P_2)$. Then it follows that $m_1$ is $\mathcal B(\p{m_1})$ or 
      $\textstyle\sum_{i=1}^np_i \sdot \mathcal B(\p{v_i})$. Similarly, $m_2$ is also of the form
      $\mathcal B(\p v)$, or $\textstyle\sum_{i=1}^np_i \sdot \mathcal B(\p{u_i})$. This gives us 4 possible cases
      for $M$. All cases reduce using the rules in Figure~\ref{fig:red_rightarrow} after potential rewrites under $\leftrightarrows$.

    \item $M$ is $\inl(n)$ or $\inr(n)$: By induction hypothesis, $n$ reduces further until it is a value, in which case so does $M$. Now we assume $n$ is a value. 
      If $n$ is of the form $v_b$, then in both the cases $M$ is a value.
      Otherwise, $n$ can be rewritten to its algebraic form as a linear combination of $v_b^i$s, so that $M$ reduces further.

    \item $M$ is $\elimplus(m, \abstr{x} n, \abstr{y} o)$: By induction hypothesis $m$ reduces further until it is a value, in which case so does $M$. Now we assume 
      $m$ is a value. Then by Lemma~\ref{lem:unique_alg} it
      follows that $m$ can be rewritten to the form $\textstyle\sum_{i=1}^n p_i\sdot v_b^i$. Hence, $M$ reduces further.

    \item $M$ is $m_1 \otimes m_2$:  Again by induction hypothesis, $m_1, m_2$ reduce further until they are values, in which case so does $M$. Now, we assume
      $m_1, m_2$ are values, in which case $M$ is of the form $v_1 \otimes v_2$ for values
      $v_1, v_2$. A potential rewrite under $\leftrightarrows$ gives us a value of the form $v$ from
      the syntax.
    \item $M$ is $\elimtens(m_1, \abstr{x} m_2)$: By induction hypothesis $m_1$ reduces further until it is value, in which case so does $M$. Now we assume 
      $m_1$ is a value. Then by Lemma~\ref{lem:unique_alg} it
      follows that $m_1$ can be rewritten to the form $\textstyle\sum_{i=1}^n p_i\sdot v_b^i$. Hence, $M$ reduces further.
    \item $M$ is $\mathcal B(\p m)$: Again by induction hypothesis, $\p m$ reduces further until it is a value, in which case so does $M$.
    \item $M$ is $\tau(n)$: By induction hypothesis, $n$ reduces further until it is a value, in which case so does $M$. Now, we assume that $n$ is a value. Then by 
      Lemma~\ref{lem:unique_alg} it
      follows that $n$ can be rewritten to the form $\textstyle\sum_{i=1}^n p_i\sdot v_b^i$. Hence, $M$ reduces further.
     \qed
  \end{itemize}
\end{proof}

\begin{lemma}[Substitution lemma]
  \label{lem:subst}
  If $\Gamma,x: T \vdash M : S$ and $\Delta \vdash N : T$, then $\Gamma , \Delta \vdash (N/x)M : S$.
  \qed
\end{lemma}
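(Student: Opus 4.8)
The plan is to argue by induction on the derivation of $\Gamma, x : T \vdash M : S$; since the typing rules of Figure~\ref{fig:derivation_rules} are syntax-directed up to uses of the exchange rule ex, this amounts essentially to an induction on the structure of $M$. Throughout I work modulo $\alpha$-equivalence, renaming the bound variables of $M$ so that they avoid the free variables of $N$ --- that is, the variables declared in $\Delta$ --- which is legitimate because $\Gamma$ and $\Delta$ are disjoint by hypothesis. The base case is $M = x$ (rule ax): then $\Gamma$ is empty, $S = T$, and $(N/x)x = N$, so the second hypothesis $\Delta \vdash N : S$ is already the desired judgement. If the last rule is ex, the induction hypothesis applies to the premise and ex is re-applied, the occurrence of $x$ being unaffected. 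The cases $\top_i^p$, $1_i$ and $\mathcal B_i^m$ are vacuous: each of them forces the context of $M$ to be empty (for $\mathcal B_i^m$ because the promoted pure term must be closed), so no judgement of the form $\Gamma, x : T \vdash M : S$ is derivable by them; in particular, no substitution is ever performed underneath a $\mathcal B(-)$.

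For the remaining term constructors I follow the associated rule. The key observation is that the variable $x$ is used linearly, hence declared in exactly one premise of the last rule. For the context-splitting and single-premise rules --- sum, prod$(\alpha)$, $\multimap_i$, $\multimap_e$, $1_e$, $\otimes_i$, $\otimes_e$, $\oplus_{i1}^m$, $\oplus_{i2}^m$, $\pwith_i^p$, $\pwith_{e_1}^p$, $\pwith_{e_2}^p$, $\mathcal B(\pmultimap)_e^m$, $\mathcal B(\otimes)^m$ --- I apply the induction hypothesis to the unique premise whose context contains $x$, substituting $N$ into the corresponding subterm, and then re-apply the same rule, adjoining $\Delta$ to exactly that part of the conclusion's context. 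Binders ($\lambda$, and the continuations in the elimination rules) are handled as usual: thanks to the $\alpha$-convention the bound variable differs from $x$, so substitution commutes past it. The rule $\zero_e^m$ carries a built-in weakening --- its conclusion context is $\Gamma', \Delta'$ with $\Delta'$ arbitrary --- so if $x$ lies in $\Delta'$ then $x \notin \elimzero(m)$ and the result follows by re-applying $\zero_e^m$ with $\Delta$ absorbed into the arbitrary slot; otherwise the previous argument applies.

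The only case needing slightly more care is $\oplus_e^m$, where $M = \elimplus(m, \abstr{y}n, \abstr{z}o)$ with premises $\Delta_0 \vdash m : A \oplus B$, $\Gamma_0, y : A \vdash n : C$, $\Gamma_0, z : B \vdash o : C$ and conclusion context $\Delta_0, \Gamma_0$. If $x$ is declared in $\Delta_0$, the substitution enters $m$ only and the standard argument applies. If $x$ is declared in the shared context $\Gamma_0$, it occurs simultaneously in both branches, so I invoke the induction hypothesis twice --- once on $n$, once on $o$ --- and reassemble with $\oplus_e^m$, the shared context now being $(\Gamma_0 \setminus \{x\}), \Delta$. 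I do not expect a genuine obstacle here: the proof is routine bookkeeping, and the only points demanding attention are keeping track of which premise the linear variable $x$ belongs to (and merging $\Delta$ into precisely that part of the context), the double use of the induction hypothesis in the $\oplus_e^m$ case, and the built-in weakening of $\zero_e^m$; the vacuity of the context-restricting rules $\top_i^p$, $1_i$, $\mathcal B_i^m$ disposes of the remaining subtlety at once.
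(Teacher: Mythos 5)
Your overall strategy is the same as the paper's: induction on the derivation, a case per last rule, locating the premise(s) in which $x$ is declared, pushing the substitution through the constructor, and reassembling with the same rule (with the extra care you note for $\oplus_e^m$ and $\zero_e^m$). However, your stated ``key observation''---that linearity puts $x$ in exactly one premise---is false for the \emph{additive} rules, and this breaks the cases exactly as you have written them. The rules sum and $\pwith_i^p$ have \emph{shared} contexts: both premises of sum are $\Gamma \vdash M_i : S$, and both premises of $\pwith_i^p$ are $\p\Gamma \vdash \p m : \p P$ and $\p\Gamma \vdash \p n : \p Q$ over the same $\p\Gamma$. So if $x$ lies in that context it is declared in \emph{both} premises, the substitution enters both subterms, $(N/x)(M_1 \Plus M_2) = (N/x)M_1 \Plus (N/x)M_2$ and $\p{(n/x)\pair{m_1}{m_2}} = \p{\pair{(n/x)m_1}{(n/x)m_2}}$, and you must invoke the induction hypothesis twice before re-applying the rule---precisely the treatment you reserve for the branches of $\oplus_e^m$, and precisely what the paper does for sum and $\pwith_i^p$. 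Your prescription of ``the unique premise whose context contains $x$'' for these two rules is therefore wrong as stated, even though the repair is the same double-IH argument you already carried out elsewhere.

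A second, smaller error: you declare the $\top_i^p$ case vacuous on the grounds that it forces an empty context. It does not---the rule concludes $\p{\Gamma \vdash \topintro : \top}$ for an \emph{arbitrary} context $\p\Gamma$, so the judgement $\p\Gamma, x : \p T \vdash \p\topintro : \p\top$ is derivable. The case is still immediate (the term has no free variables, so $(N/x)\p\topintro = \p\topintro$, and the rule re-applies with context $\Gamma, \Delta$), but it must be handled, not dismissed as impossible. Only $1_i$ and $\mathcal B_i^m$ genuinely force an empty context and are thus impossible in the presence of $x : T$. With these two corrections your proof goes through and coincides with the paper's.
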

\begin{proof}
  By induction on the derivation of $M$.
  \begin{itemize}
    \item $\p{\Delta , x : P, y : Q, \Gamma \vdash m : R}$: This case follows directly from the induction hypothesis.
    \item $\p{\Gamma \vdash \topintro : \top}$:
      This case follows vacuously.
    \item $\p{\Gamma , \Delta \vdash \elimzero(m') : P}$:
      We consider two cases. First case is when the variable being substituted is in $\p{\Gamma}$. Then the claim follows by using the induction
      hypothesis on the term $\p m'$ and the observation that $\p{(n/x)\elimzero(m')} = \p{\elimzero((n/x)m')}$. In the other case, the claim follows
      since $\p{(n/x)\elimzero(m')} = \p{\elimzero(m')}$.
    \item $\p{x : P \vdash x : P}$: In this case, $\p \Gamma = \emptyset$, $\p P = \p Q$ and $ \p{(n/x) m} = \p n$ and the claim follows.
    \item $\p{\vdash \ast : \one}$: This case follows vacuously.
    \item $\p{\Gamma, \Delta \vdash \elimone( m_1, m_2) : P}$: We consider two cases, one when the variable appears in $\p{m_1}$ and the 
      other when it appears in $\p{m_2}$. Both follow by application of induction hypothesis on $\p{m_1}, \p{m_2}$ respectively
      and the observation that in one case $\p{(n/x)\elimone(m_1, m_2)} = \p{\elimone((n/x)m_1, m_2)}$ and in the other
      case $\p{(n/x)\elimone(m_1, m_2)} = \p{\elimone(m_1, (n/x)m_2)}$.
    \item $\p{\Gamma \vdash \lambda \abstr{x}m' : P \multimap Q}$:  This case follows using the induction hypothesis
      on $\p m'$ and the observation that $\p{(n/x)\lambda \abstr{y} m'} = \p{\lambda \abstr{y}(n/x)m'}$. 
    \item $\p{\Gamma, \Delta \vdash m_1~m_2 : Q}$: Again we consider two cases, one when the variable appears in $\p{m_1}$ and the other when
      the variable appears in $\p{m_2}$. Now the claim follows by applying induction hypothesis on $\p{m_1}, \p{m_2}$ in both
      the cases respectively and the observation that in one case $\p{(n/x)(m_1~m_2)} = \p{((n/x)m_1)~m_2}$ and in the other
      $\p{(n/x)(m_1~m_2)} = \p{m_1~((n/x)m_2)}$.
    \item $\p{\Gamma \vdash \pair{m_1}{m_2} : P \with Q}$: Follows by application of induction hypothesis on $\p{m_1}, \p{m_2}$ respectively
      and the observation that $\p{(n/x)\pair{m_1}{m_2}} = \p{\pair{(n/x)m_1}{(n/x)m_2}}$.
    \item $\p{\Gamma, \Delta \vdash \elimwith^1(m_1,\abstr{x}m_2) : R}$ or $\p{\Gamma, \Delta \vdash \elimwith^2(m_1,\abstr{x}m_2) : R}$: Again we consider two cases, 
      one when the variable appears in $\p{m_1}$ and the other when it appears in $\p{m_2}$. The proof follows by 
      applying the induction hypothesis on $\p{m_1}, \p{m_2}$ respectively and the observation that 
      $\p{(n/x)\elimwith^i(m_1, \abstr{x}m_2)}$ is equal to $\p{\elimwith^i((n/x)m_1, \abstr{x}m_2)}$ in one case and
      $\p{(n/x)\elimwith^i(m_1, \abstr{x}m_2)}$ is equal to $\p{\elimwith^i(m_1, (n/x)\abstr{x}m_2)}$ in the other.
    \item $\p{\Delta, \Gamma \vdash m_1 \otimes m_2 : P \otimes Q}$: We consider two cases, one when the variable $\p x$ appears in $\p{m_1}$ and the other when $\p x$ appears
      in $\p{m_2}$ respectively. We apply induction hypothesis on the terms $\p{m_1}, \p{m_2}$ in both the cases. The claim follows using the
      observation that $\p{(n/x)(m_1 \otimes m_2)} = \p{(n/x) m_1 \otimes m_2}$ in one case and $\p{(n/x)(m_1 \otimes m_2)} = \p{m_1 \otimes (n/x)m_2}$
      in the other.
    \item $\p{\Delta, \Gamma \vdash \elimtens(m_1,\abstr{x y}m_2) : R}$: We consider two cases, one when the variable appears in $\p{m_1}$ and the other when it
      appears in $\p{m_2}$. In both the cases we apply the induction hypothesis on the terms $\p{m_1}, \p{m_2}$ respectively. Then again, the claim
      follows from the observation that $\p{(n/x)\elimtens(m_1, \abstr{yz}m_2)}$ is equal to $\p{\elimtens((n/x)m_1, \abstr{yz}m_2)}$ in one case and
      $\p{(n/x)\elimtens(m_1, \abstr{yz}m_2)}$ is equal to $\p{\elimtens(m_1, (n/x)\abstr{yz}m_2)}$ in the other.
    \item $\p{\Gamma \vdash a} \sdot \p{m' : P}$: This case follows from the induction hypothesis applied to $\p{m'}$ and the observation
      that $\p{(n/x)(a} \sdot \p{m')} = \p{a} \sdot \p{(n/x)m'}$.
    \item $\p{\Gamma \vdash m \Plus n : P}$: This case follows from the induction hypothesis applied to $\p{m_1}, \p{m_2}$ and the 
      observation that $\p{(n/x)(m_1 \Plus m_2)} = \p{(n/x)m_1 \Plus (n/x)m_2}$.
    \item $\Delta , x : B, y : A, \Gamma \vdash m : C$: This case follows directly from the induction hypothesis.
    \item $\vdash \ast : 1$: This case follows vacuously.
    \item $\Gamma, \Delta \vdash \elimone(m_1,m_2) : A$: We consider two cases, one when the variable appears in $m_1$ and the
      other when it appears in $m_2$. Both follow by application of induction hypothesis on $m_1, m_2$ respectively
      and the observation that in one case $(n/x)\elimone(m_1, m_2) = \elimone((n/x)m_1, m_2)$ and in the other
      case $(n/x)\elimone(m_1, m_2) = \elimone(m_1, (n/x)m_2)$.
    \item $\Delta \vdash \lambda \abstr{x}m' : A \multimap B$: We consider two cases. One when the variable being substituted is $x$ which
      is different from the one being abstracted as a lambda abstraction, and the other case when
      the variable being substituted is the one that is a part of the lambda abstraction. The latter case
      follows by the induction hypothesis applied on $m'$. The former case follows using the induction hypothesis
      on $m'$ and the observation that $(n/x)\lambda \abstr{y} m' = \lambda \abstr{y}(n/x)m'$.
    \item $\Gamma, \Delta \vdash m_1~m_2 : B$ or $\Gamma, \Delta \vdash m_1~m_2 : \mathcal B(\p Q)$: Again we consider two cases, 
      one when the variable appears in $m_1$ and the other when
      the variable appears in $m_2$. Now the claim follows by applying induction hypothesis on $m_1, m_2$ in both
      the cases respectively and the observation that in one case $(n/x)(m_1~m_2) = ((n/x)m_1)~m_2$ and in the other
      $(n/x)(m_1~m_2) = m_1~((n/x)m_2)$. The other case follows vacuously since in this case both $m_1, m_2$ are closed. 
    \item $\Delta, \Gamma \vdash m_1 \otimes m_2 : A \otimes B$: We consider two cases, one when the variable $\p x$ appears in $m_1$ and the other when $x$ appears
      in $m_2$ respectively. We apply induction hypothesis on the terms $m_1, m_2$ in both the cases. The claim follows using the
      observation that $(n/x)(m_1 \otimes m_2) = (n/x) m_1 \otimes m_2$ in one case and $(n/x)(m_1 \otimes m_2) = m_1 \otimes (n/x)m_2$
      in the other.
    \item $\Delta, \Gamma \vdash \elimtens(m_1,\abstr{x y}m_2) : C$: We consider two cases, one when the variable appears in $m_1$ and the other when it
      appears in $m_2$. In both the cases we apply the induction hypothesis on the terms $m_1, m_2$ respectively. Then again, the claim
      follows from the observation that $(n/x)\elimtens(m_1, \abstr{yz}m_2)$ is equal to $\elimtens((n/x)m_1, \abstr{yz}m_2)$ in one case and
      $(n/x)\elimtens(m_1, \abstr{yz}m_2)$ is equal to $\elimtens(m_1, (n/x)\abstr{yz}m_2)$ in the other.
    \item $\Delta \vdash \inl(m') : A \oplus B$ or $\Delta \vdash \inr(m') : A \oplus B$: This case follows by applying the induction hypothesis 
      on $m'$ and the observation that $(n/x)\inl(m') = \inl((n/x)m')$.
      The case for $\inr(m')$ follows similarly.
    \item $\Delta , \Gamma \vdash \elimplus(m_1,\abstr{x}m_2,\abstr{y}m_3) : C$: We consider two cases, one when the variable appears in $m_1$ and the other when it
      appears in the terms $m_2, m_3$. Both the cases follow by applying the induction hypothesis on the terms $m_1, m_2,m_3$ respectively and the
      observation that $(n/x)\elimplus(m_1, \abstr{y} m_2, \abstr{z} m_3)$ is equal to $\elimplus((n/x)m_1, \abstr{y} m_2, \abstr{z} m_3)$ in one case and
      $(n/x)\elimplus(m_1 , \abstr{y} m_2, \abstr{z} m_3)$ is equal to $\elimplus(m_1 , (n/x) \abstr{y} m_2, (N/x) \abstr{z} m_3)$ in the other.
    \item $\vdash \mathcal B(\p{m'}): \mathcal B(\p P)$: This case follows vacuously since $\p{m'}$ is a closed term.
    \item $\Gamma \vdash \tau(m'): \mathcal B(\p P \p \otimes \p Q)$: This case follows from the induction hypothesis and the observation that $(n/x)\tau(m') = \tau((n/x)m')$.
    \item $\Gamma \vdash p \sdot m' : A$: This case follows from the induction hypothesis applied to $m'$ and the observation
      that $(n/x)(p \sdot m') = p \sdot (n/x)m'$.
    \item $\Gamma \vdash m_1 \Plus m_2 : A$: This case follows from the induction hypothesis applied to $m_1, m_2$ and the
      observation that $(n/x)(m_1 \Plus m_2) = (n/x)m_1 \Plus (n/x)m_2$.
      \qed
  \end{itemize}
\end{proof}

\subred*
\begin{proof}
  Note that from the rules for $\leftrightarrows$ in Figure~\ref{fig:bidir_rules} and the rules sum and prod$(\alpha)$ from Figure~\ref{fig:derivation_rules} it follows that propositions are preserved by $\leftrightarrows$.
  Hence, it suffices to prove that $\rightarrow$ preserves propositions. We prove this using induction
  on the relation $\rightarrow$. We focus on the basic reduction rules in Figure~\ref{fig:red_rightarrow} and omit the proof for
  contextual rules since they follow easily from the induction hypothesis.
  \begin{itemize}
    \item $M=\p{\elimone( a} \sdot \p{\ast , n)}$ and $N=\p a \sdot \p n$.
      Since we have that $\p{\vdash \ast : \one}$, if we have $\p{\Gamma \vdash n : P}$,
      it follows from the derivation rules that $\p{\Gamma \vdash a} \sdot \p{n : P}$.
    \item $M=\elimone(p \sdot \ast, n)$ and $N=a \sdot n$.
      Since we have that $\vdash a \sdot \ast : 1$, if we have $\cdot \vdash n : A$,
      it follows from the derivation rules that $\cdot \vdash a\sdot n : A$.
    \item $M=\p{(\lambda \abstr{x}m) v}$ and $N=\p{(m_2/x)m_1}$, and the result
      holds by Lemma~\ref{lem:subst}.
    \item $M=(\lambda \abstr{x}m)~v$ and $N=(v/x)m$. The result
      holds by Lemma~\ref{lem:subst}.
    \item $M=\p{\elimwith^1(\pair{v_1}{v_2} , \abstr{x}n)}$ and
      $N=\p{(v_1/x)n}$. The result holds by
      Lemma~\ref{lem:subst}.
      The case for $\p{\elimwith^2(\pair{v_1}{v_2} , \abstr{x}n)}$ is similar.
    \item $M=\elimplus(\inl(v), \abstr{x}m, \abstr{y}n)$ or $M=\elimplus(\inr(v), \abstr{x}m, \abstr{y}n)$ and $N=(v/x)/m$
      or $N=(v/y)n$ respectively. The result holds in both the cases by an application of Lemma~\ref{lem:subst}.
    \item $M=\p{\elimtens(v_n^1 \otimes v_n^2, \abstr{x y}m')}$ and $N=\p{(v_n^1/x , v_n^2/y)m'}$. The result follows by a double application of Lemma~\ref{lem:subst}.
    \item $M=\elimtens(v_n^1 \otimes v_n^2, \abstr{xy}n)$ and $N=(v_n^1/x , v_n^2/y)n$. The result
      holds by a double application of Lemma~\ref{lem:subst}.
    \item $M=\p{a}\sdot \p{\topintro}$ and $N=\p{\topintro}$. The claim follows from the rules $\top_i^p$, sum$^p$,
      and prod$(a)^p$.
    \item $M=\textstyle\sum_{i=1}^n\p{a_i} \sdot \p{\lambda \abstr{x}m_i}$ and $N=\p{\lambda \abstr{x}}\textstyle\sum_{i=1}^n\p{a_i}\sdot \p{m_i}$. 
      The claim follows from the rules $\pmultimap_i^p$, sum$^p$, prod$(a)^p$.
    \item $M=\textstyle\sum_{i=1}^n{p_i} \sdot \lambda\abstr{x}m_i$ and $N=\lambda\abstr{x}\textstyle\sum_{i=1}^n{p_i} \sdot{m_i}$.
      The claim follows from the rules $\multimap_i^m$, sum$^m$, prod$(p)^m$.
    \item $M=\textstyle\sum_{i=1}^n \p{a_i}\sdot \p{\pair{u_{i1}}{u_{i2}}}$ and $N=\p{\langle} \textstyle\sum_{i=1}^n \p{a_i} \sdot \p{u_{i1},} \textstyle\sum_{i=1}^n \p{a_i} \sdot \p{u_{i2} \rangle}$.
      The claim follows from the rules $\pwith_i^p$, sum$^p$ and prod$(a)^p$.
    \item $M=\elimplus(\textstyle\sum_{i=1}^n p_i\sdot v_{bi},x.o,y.s)$ and $N=\textstyle\sum_{i=1}^n p_i\sdot \elimplus(v_{bi},x.o,y.s)$. The claim follows from the rules $\oplus_e^m$, sum$^m$, prod$(p)^m$.
    \item $M=\p{\elimtens(} \textstyle\sum_{i=1}^n \p{a_i} \sdot \p{v_b^i, \abstr{xy}n)}$ and $N=\textstyle\sum_{i=1}^n\p{a_i} \sdot \p{\elimtens(v_b^i, \abstr{xy}n)}$.
      The claim follows from the rules $\p \otimes_e^p$, sum$^p$, prod$(a)^p$.
    \item $M=\inl(\textstyle\sum_{i=1}^n p_i \sdot v_{bi})$ and $N=\textstyle\sum_{i=1}^n p_i \sdot \inl(v_{bi})$. The claim follows from the rules $\oplus_e^{i1}$, sum$^p$ and prod$(p)^p$. The case for $\inr(\textstyle\sum_{i=1}^n p_i \sdot v_{bi})$ is similar.
    \item $M=\tau(\mathcal B(\p v_1) \otimes \mathcal B(\p v_2))$ and $N=\mathcal B(\p{v_1 \otimes v_2})$. The claim follows from
      the rules $\mathcal B(\otimes)^m$, $\p \otimes_i^p$ and $\mathcal B^m$.
    \item $M=\tau(\textstyle\sum_{i=1}^n p_i \sdot v_{bi})$ and $N=\textstyle\sum_{i=1}^n p_i \sdot \tau(v_{bi})$. The claim follows from the rules $\mathcal B(\otimes)^m$, sum$^m$ and prod$(p)^m$.
    \item $M=\mathcal B(\p a \sdot \p \ast)$ and $N=|\p a|^2 \sdot \mathcal B(\p{\ast})$. The claim follows by applying the rules
      $\p \one_i^p$, $\mathcal B^m$, prod$(a)^p$ and prod$(|a|^2)^m$.
    \item $M=(\textstyle\sum_{i=1}^n p_i \sdot \mathcal B(\p{v_i}))~(\textstyle\sum_{j=1}^k q_j \sdot \mathcal B(\p{w_j}))$ and $N=\textstyle\sum_{i,j} p_i q_j \sdot \mathcal B(\p{v_i~w_j})$. The claim follows from the rules $\mathcal B(\pmultimap)_e^m$, sum$^m$ and prod$(p)^m$.
  \qed
  \end{itemize}
\end{proof}

\subsection{Proof of Strong Normalisation (Theorem~\ref{cor:strong-norm})}
\label{app:SN}
In this section, we prove that the reduction relation $\hookrightarrow$ is strongly normalising.  
To this end, we first extend the relation with additional rules, and show that strong normalisation holds for this extended relation.  
Since the original relation is included in the extended one, this immediately implies that $\hookrightarrow$ is also strongly normalising.

The following rules are added to form the extended relation:
\begin{align*}
  M \Plus N &\hookrightarrow M\\
  M \Plus N &\hookrightarrow N\\
  \alpha \sdot M &\hookrightarrow M
\end{align*}

The \emph{length} of a strongly terminating term $M$, denoted $\ell(M)$, is the maximal length of any reduction sequence starting from $M$.

\begin{lemma}
  \label{lem:plus_sn}
  If $M$ and $N$ are strongly normalising, then so is $M \Plus N$.
\end{lemma}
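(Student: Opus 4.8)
The plan is to prove Lemma~\ref{lem:plus_sn} by induction on $\ell(M) + \ell(N)$; this makes sense because $M$ and $N$ are assumed strongly normalising, so both lengths are well-defined natural numbers. I will use the standard fact that a term is strongly normalising exactly when each of its one-step $\hookrightarrow$-reducts is; hence it suffices to show that every $P$ with $M \Plus N \hookrightarrow P$ is strongly normalising, and I proceed by cases on this reduction step.

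The routine cases are as follows. If the step is one of the two added root rules $M \Plus N \hookrightarrow M$ or $M \Plus N \hookrightarrow N$, then $P \in \{M, N\}$ and we are done by hypothesis. If the step is contextual in the left summand, say $P = M' \Plus N$ with $M \hookrightarrow M'$, then $M'$ is strongly normalising with $\ell(M') < \ell(M)$, so $\ell(M') + \ell(N) < \ell(M) + \ell(N)$ and the induction hypothesis gives that $P$ is strongly normalising; the contextual step in the right summand is symmetric.

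The delicate case is a root step coming from the algebraic equivalence inside $\hookrightarrow$: either $M \Plus N \leftrightarrows V$ with $V$ a value, or $M \Plus N \not\rightarrow$ and $M \Plus N \leftrightarrows \aform{M \Plus N} \rightarrow P$. A naive measure does not help here, since $\leftrightarrows$ includes commutativity and associativity of $\Plus$, which leave $\ell(M) + \ell(N)$ unchanged. The plan is to route the argument through the canonical algebraic form: by Theorem~\ref{lem:unique_alg}, $\aform{M \Plus N}$ is uniquely determined, and, reading off its construction from the proof of that theorem, it is a linear combination — built only with $\Plus$ and $\sdot$ — of base forms each of which already occurs among the base forms of $\aform{M}$ or of $\aform{N}$. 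One then shows, as a companion statement proved by a simultaneous induction together with this lemma (and with the other strong-normalisation lemmas of this subsection, e.g.\ that $\alpha \sdot (-)$ preserves strong normalisation), that these base forms are built from strongly normalising material, hence that $\aform{M \Plus N}$ is strongly normalising; since $P$ is a $\hookrightarrow$-reduct of $\aform{M \Plus N}$ (as $\rightarrow$ is a subrelation of $\hookrightarrow$) and $V$ is $\leftrightarrows$-equivalent to it, both are then strongly normalising as well.

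The main obstacle is precisely this last step: proving that passing to the algebraic form preserves strong normalisability, i.e.\ that $\leftrightarrows$-related well-typed terms are simultaneously strongly normalising. This is exactly where the interleaved design of $\hookrightarrow$ is used — $\leftrightarrows$ is never iterated freely but only to reach a value or the canonical form $\aform{M \Plus N}$, after which only the deterministic, length-decreasing relation $\rightarrow$ can fire — and it is the reason this lemma cannot be proved in isolation but must be established inside the same simultaneous induction as the remaining structural strong-normalisation lemmas.
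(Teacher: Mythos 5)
Your routine cases (the two erasure rules and the contextual steps inside $M$ or $N$) match the paper, but the case you yourself flag as delicate is exactly where your argument stops being a proof: you reduce it to a ``companion statement'' --- that the base forms of $\aform{M \Plus N}$ come from $\aform{M}$ and $\aform{N}$ and are ``built from strongly normalising material'', hence $\aform{M \Plus N}$ is strongly normalising --- but that statement is essentially a rephrasing of the lemma itself (strong normalisation of a $\Plus$/$\sdot$-combination of strongly normalising pieces), and you neither state it precisely nor prove it. Worse, deferring it to a ``simultaneous induction with the remaining strong-normalisation lemmas'' risks outright circularity: the reducibility-style lemmas of this appendix (e.g.\ Lemma~\ref{lem:sn_plus}) are proved \emph{using} Lemma~\ref{lem:plus_sn}, so Lemma~\ref{lem:plus_sn} must stand on its own, and it does.

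The missing idea is that no appeal to ``$\leftrightarrows$ preserves strong normalisation'' is needed. By Definition~\ref{def:hook}, when $M \Plus N \not\rightarrow$, its one-step reducts are the one-step $\rightarrow$-reducts of $\aform{M \Plus N}$ (or a value it is $\leftrightarrows$-equal to), and since the head of this algebraic form is a sum, only a short, finite list of $\rightarrow$-rules can fire at the root: the commutation rules for $\lambda$ (Rule~\eqref{rule:commarrow}) and for pairs (Rule~\eqref{rule:commwith}), besides the added erasure rules. This is what the paper does: it enumerates these root reductions, e.g.\ $\lambda\abstr{x}m \Plus \lambda\abstr{x}n \hookrightarrow \lambda\abstr{x}(1 \sdot m \Plus 1 \sdot n)$ and $\p a \sdot \p{\pair{m_1}{m_2}} \Plus \p b \sdot \p{\pair{m'_1}{m'_2}} \hookrightarrow \p{\langle a}\sdot\p{m_1 \Plus b}\sdot\p{m'_1, a}\sdot\p{m_2 \Plus b}\sdot\p{m'_2\rangle}$, and observes that each reduct is either one of $M$, $N$ or is assembled from strongly normalising subterms of $M$ and $N$ under a constructor, so the combined induction on $\ell(M)+\ell(N)$ and on term size closes directly. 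To repair your proof you should replace the appeal to the unproven companion statement by this explicit case analysis of which $\rightarrow$-rules can apply at the root of an algebraic form headed by $\Plus$, and check each resulting reduct against your induction measure.
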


\begin{proof}
  We show that all one-step reducts of $M \Plus N$ are strongly normalising,  
  by induction on $\ell(M) + \ell(N)$, and then by structural induction on the size of the term.

  If the reduction applies to either $M$ or $N$, the claim follows directly from the induction hypothesis.  
  Otherwise, the reduction applies to the outermost $\Plus$. In that case, the possible reductions are:
  \begin{align*}
    \lambda \abstr{x}m \Plus \lambda \abstr{x}n &\hookrightarrow \lambda \abstr{x}(1 \sdot m \Plus 1 \sdot n)\\
    \p{\lambda \abstr{x}m \Plus \lambda \abstr{x}n} &\hookrightarrow \p{\lambda \abstr{x}(1} \sdot \p{m \Plus 1} \sdot \p{n)}\\
    \p{a} \sdot \p{\pair{m_1}{m_2}} \Plus \p{b} \sdot \p{\pair{m_1'}{m_2'}} &\hookrightarrow \p{\langle a} \sdot \p{m_1 \Plus b} \sdot \p{m_1', a} \sdot \p{m_2 \Plus b} \sdot \p{m_2' \rangle}\\
    M \Plus N &\hookrightarrow M\\
    M \Plus N &\hookrightarrow N
  \end{align*}
  In all cases, the reduct either decreases the size of the term or is built from strongly normalising subterms.  
  Hence, each reduct is strongly normalising by the induction hypothesis or by assumption.  
  \qed
\end{proof}

\begin{lemma}
  \label{lem:dot_sn}
  If $M$ is strongly normalising, then so is $\alpha \sdot M$.
\end{lemma}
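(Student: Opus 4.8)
The plan is to follow the pattern of Lemma~\ref{lem:plus_sn}. It suffices to show that every one-step $\hookrightarrow$-reduct of $\alpha \sdot M$ is strongly normalising, since a term is strongly normalising precisely when all of its one-step reducts are. I will argue by induction on the pair $(\ell(M), \mathrm{size}(M))$ ordered lexicographically, where $\ell(M)$ is finite by hypothesis. The induction hypothesis then provides, for every scalar $\gamma$ and every strongly normalising term $N$ whose measure $(\ell(N), \mathrm{size}(N))$ is strictly below $(\ell(M), \mathrm{size}(M))$, that $\gamma \sdot N$ is strongly normalising. I will also use the standard fact that a proper subterm $N$ of $M$ satisfies $\ell(N) \le \ell(M)$ (reductions of $N$ lift through the surrounding context) and $\mathrm{size}(N) < \mathrm{size}(M)$, so every such subterm has strictly smaller measure.

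First, if the step is internal, producing $\alpha \sdot M'$ with $M \hookrightarrow M'$, then $\ell(M') < \ell(M)$ and the induction hypothesis applies directly. Otherwise the contracted redex is at the root of $\alpha \sdot M$. Inspecting the rules defining $\hookrightarrow$ together with the extended erasure rule $\alpha \sdot M \hookrightarrow M$, the only root reducts are: $M$ itself (from the erasure rule, or from $1 \sdot M \leftrightarrows M$ when $\alpha = 1$), which is strongly normalising by hypothesis; the term $\alpha\beta \sdot N$ when $M = \beta \sdot N$; the term $\alpha \sdot N_1 \Plus \alpha \sdot N_2$ when $M = N_1 \Plus N_2$; and the ``scalar past a constructor'' reducts $\topintro$, $\lambda\abstr{x}(\alpha \sdot N)$, and $\pair{\alpha \sdot v}{\alpha \sdot w}$, obtained from Rules~\eqref{rule:commtop}, \eqref{rule:commarrow}, and \eqref{rule:commwith} (possibly after first passing to the algebraic form $\aform{\alpha \sdot M}$). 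For $\alpha\beta \sdot N$ the term $N$ is a proper subterm of $M$, so the induction hypothesis gives strong normalisation. For $\alpha \sdot N_1 \Plus \alpha \sdot N_2$, both $N_i$ are proper subterms of the strongly normalising term $M$, hence strongly normalising with smaller measure, so each $\alpha \sdot N_i$ is strongly normalising by the induction hypothesis and Lemma~\ref{lem:plus_sn} yields strong normalisation of the sum. Finally, $\topintro$ is a normal form; $\alpha \sdot N$, $\alpha \sdot v$, $\alpha \sdot w$ are strongly normalising by the induction hypothesis; and $\lambda\abstr{x}(-)$ and $\pair{-}{-}$ preserve strong normalisation because neither admits a root redex, so their reducts are governed entirely by those of their immediate subterms.

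The main obstacle I anticipate is the bookkeeping around the algebraic-form layer: one must check that rewriting $\alpha \sdot M$ to $\aform{\alpha \sdot M}$ and then firing a single $\rightarrow$-commutation never produces a reduct outside the list above, and that the subterms exposed this way remain controlled by the induction measure --- equivalently, that the purely algebraic rearrangements performed by $\leftrightarrows$ do not increase it. This is exactly the kind of reasoning already carried out for $\Plus$ in the proof of Lemma~\ref{lem:plus_sn}, so no genuinely new ingredient is required.
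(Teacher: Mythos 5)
Your proposal follows essentially the same route as the paper's proof: induct on $\ell(M)$ and then on term size, show every one-step reduct of $\alpha \sdot M$ is strongly normalising, splitting into internal steps (handled by the induction hypothesis) and root steps where the scalar commutes with a constructor or is merged/erased, with Lemma~\ref{lem:plus_sn} covering the distribution-over-$\Plus$ case. If anything, your enumeration of root reducts (including the erasure rule and the $\leftrightarrows$-generated scalar-merge and distribution cases) is more complete than the paper's own short list, so the argument is correct and matches the intended proof.
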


\begin{proof}
  As in the previous lemma, we prove that all one-step reducts of $\alpha \sdot M$ are strongly normalising,  
  by induction on $\ell(M)$, and then on the size of the term.

  If the reduction applies to $M$, the claim follows from the induction hypothesis.  
  Otherwise, the reduction applies to the outermost scalar multiplication. The possible reductions are:
  \begin{align*}
    p \sdot \lambda \abstr{x}m &\hookrightarrow \lambda \abstr{x}(p \sdot m)\\
    \p{a} \sdot \p{\lambda \abstr{x}m} &\hookrightarrow \p{\lambda \abstr{x}(a} \sdot \p{m)}\\
    \p{a} \sdot \p{\topintro} &\hookrightarrow \p{\topintro}\\
    \p{a} \sdot \p{\pair{m_1}{m_2}} &\hookrightarrow \p{\langle a} \sdot \p{m_1, a} \sdot \p{m_2 \rangle}
  \end{align*}
  In all cases, either the term decreases in size, or it is constructed from strongly normalising subterms.  
  Thus, the result follows by induction or by the assumption that $M$ is strongly normalising.  
  \qed
\end{proof}

We let $\mathbf{SN}$ denote the set of proof terms that terminate in finitely many steps.

\begin{definition}
We inductively define the interpretation $\sn{T}$ of each proposition $T$ as the set of strongly normalising proof terms of $T$, as follows:
\begin{align*}
  \sn{\p\top} &\defeq \{ \p m \in \mathbf{SN} \} \\
  \sn{\one} &\defeq \{ \p m \in \mathbf{SN} \} \\
  \sn{\p P \pmultimap \p Q} &\defeq \left\{ \p m \in \mathbf{SN} \mid \p m \hookrightarrow^* \lambda \abstr{x}{n} \Rightarrow \forall \p v \in \sn{\p P},\ (\p v / x)n \in \sn{\p Q} \right\} \\
  \sn{\p P \pwith \p Q} &\defeq \left\{ \p m \in \mathbf{SN} \mid \p m \hookrightarrow^* \pair{\p{v_1}}{\p{v_2}} \Rightarrow \p{v_1} \in \sn{\p P} \ \text{and}\ \p{v_2} \in \sn{\p Q} \right\} \\
  \sn{\p{P \otimes Q}} &\defeq \left\{ \p m \in \mathbf{SN} \mid \p m \hookrightarrow^* \p{v_1} \otimes \p{v_2} \Rightarrow \p{v_1} \in \sn{\p P} \ \text{and}\ \p{v_2} \in \sn{\p Q} \right\} \\
  \sn{\zero} &\defeq \{ m \in \mathbf{SN} \} \\
  \sn{1} &\defeq \{ m \in \mathbf{SN} \} \\
  \sn{A \multimap B} &\defeq \left\{ m \in \mathbf{SN} \mid m \hookrightarrow^* \lambda \abstr{x}{m'} \Rightarrow \forall v \in \sn{A},\ (v / x)m' \in \sn{B} \right\} \\
  \sn{A \oplus B} &\defeq \left\{ m \in \mathbf{SN} \mid \begin{array}{l}
      m \hookrightarrow^* \inl(v) \Rightarrow v \in \sn{A},\\
      m \hookrightarrow^* \inr(v) \Rightarrow v \in \sn{B}
    \end{array}
  \right\} \\
  \sn{A \otimes B} &\defeq \left\{ m \in \mathbf{SN} \mid m \hookrightarrow^* v \otimes w \Rightarrow v \in \sn{A} \ \text{and}\ w \in \sn{B} \right\} \\
  \sn{\mathcal B(\p P)} &\defeq \left\{ m \in \mathbf{SN} \mid m \hookrightarrow^* \mathcal B(\p v) \Rightarrow \p v \in \sn{\p P} \right\}
\end{align*}
\end{definition}

\begin{definition}[Definitions and notations]
~
\begin{itemize}
  \item A \emph{substitution function} is a map $\theta : \s{0.85}{\versalita{Vars}} \rightarrow \s{0.85}{\versalita{Pure Values}} \cup \s{0.85}{\versalita{Mixed Values}}$.
  Given a term $M$, we write $\theta(M)$ for the result of replacing all free variables in $M$ by their image under $\theta$.
  
  \item Given a context $\Gamma$ and a substitution function $\theta$, we say that $\theta$ is \emph{valid for} $\Gamma$, written $\theta \vDash \Gamma$,  
  if for all $x : T \in \Gamma$, we have $\theta(x) \in \sn{T}$.
  
  \item For a term $M$, we define the set of one-step reducts of $M$ as:
  \[
    \mathit{Red}(M) \defeq \{\, N \mid M \hookrightarrow N \,\}
  \]
  
  \item A term is said to be \emph{neutral} if the last rule used in its derivation is an elimination rule, or the rule $\mathcal B(\otimes)^m$.
\end{itemize}
\end{definition}

\begin{lemma}[CR3]
  \label{lem:forward}
  Let $M$ be a neutral proof term of $T$.  
  If $N \in \sn{T}$ for all $N \in \mathit{Red}(M)$, then $M \in \sn{T}$.
\end{lemma}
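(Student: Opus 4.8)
The plan is to argue by case analysis on the structure of the proposition $T$, after first establishing, uniformly, that $M$ is strongly normalising. For the latter, note that by construction every interpretation satisfies $\sn{T} \subseteq \mathbf{SN}$, so each element of $\mathit{Red}(M)$ is strongly normalising by hypothesis. If there were an infinite reduction sequence $M \hookrightarrow N_1 \hookrightarrow N_2 \hookrightarrow \cdots$, then $N_1 \in \mathit{Red}(M) \subseteq \mathbf{SN}$, contradicting the existence of the infinite tail $N_1 \hookrightarrow N_2 \hookrightarrow \cdots$. Hence $M \in \mathbf{SN}$.

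For the atomic propositions $\p \top$, $\one$, $\zero$, and $1$, the set $\sn{T}$ is exactly the set of strongly normalising proof terms of $T$, so $M \in \sn{T}$ and we are done. For every other proposition, $\sn{T}$ is carved out of $\mathbf{SN}$ by a clause of the form ``whenever $M \hookrightarrow^* V$ with $V$ an introduction form for $T$ (an abstraction $\lambda \abstr{x}{n}$ for $\multimap$, a pair for $\pwith$, $\inl$/$\inr$ for $\oplus$, a tensor for $\otimes$, or $\mathcal B(\p v)$ for $\mathcal B(\p P)$), then the immediate subterms of $V$ lie in the interpretations of the corresponding subpropositions'' (for $\multimap$ the clause is the usual one about $(\p v/x)n$ for all $\p v$ in the interpretation of the domain). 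Fix such a reduction $M \hookrightarrow^* V$. Since $M$ is neutral --- its last derivation rule is an elimination rule or $\mathcal B(\otimes)^m$ --- it is not syntactically of any of these introduction forms (indeed no value is neutral), so $M \neq V$ and the reduction has length at least one. It therefore factors as $M \hookrightarrow N \hookrightarrow^* V$ with $N \in \mathit{Red}(M)$. By hypothesis $N \in \sn{T}$, and applying the defining clause of $\sn{T}$ to $N \hookrightarrow^* V$ yields precisely the required membership of the subterms of $V$. This settles every case, and hence the lemma.

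I do not expect a serious obstacle: this is the standard CR3 property and the reasoning is purely combinatorial, not even requiring an induction hypothesis on the subpropositions. The only points that genuinely depend on the specifics of our calculus are the two facts about $\hookrightarrow$ used above --- that each $\sn{T}$ is contained in $\mathbf{SN}$, and that a term all of whose one-step $\hookrightarrow$-reducts are strongly normalising is itself strongly normalising --- together with the observation that, because $\hookrightarrow$ interleaves $\leftrightarrows$ with $\rightarrow$ so that every $\hookrightarrow$-step is a genuine step, a neutral term cannot reach an introduction form in zero steps. This last point is exactly what makes the factorisation of $M \hookrightarrow^* V$ through $\mathit{Red}(M)$ legitimate, and it is the place where one should be careful when writing the argument out in full.
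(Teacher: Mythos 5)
Your argument is correct and follows essentially the same route as the paper's proof: establish $M \in \mathbf{SN}$ from the strong normalisation of its one-step reducts, observe that a neutral term is not a value so any reduction $M \hookrightarrow^* V$ to an introduction form factors through some $N \in \mathit{Red}(M)$, and then transfer the defining clause of $\sn{T}$ from $N$ to $M$ by case analysis on $T$. Your version merely makes explicit the factorisation step that the paper leaves implicit, which is a welcome clarification rather than a deviation.
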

\begin{proof}
Since all one-step reducts of $M$ are strongly normalising, all reduction sequences originating from $M$ are finite.  
Thus, $M \in \mathbf{SN}$. Moreover, since $M$ is neutral, it is not a value.

We proceed by case analysis on the proposition $T$:

\begin{itemize}
  \item $T \in \{\one, 1, \p\top, \p\zero\}$: Immediate, as $M$ strongly terminates by assumption and no further structure is required.

  \item $T = \p{P \multimap Q}$ (or $T = A \multimap B$):  
    We show the pure case; the mixed case is analogous.  
    Let $M \hookrightarrow N \hookrightarrow^* \lambda \abstr{x}{m}$.  
    Since $N \in \sn{T}$, we know that for all $\p v \in \sn{\p P}$, we have $(\p v / x)m \in \sn{\p Q}$.  
    Thus, by definition of $\sn{\p{P \multimap Q}}$, we conclude $M \in \sn{T}$.

  \item $T = \p{P \with Q}$:  
    In this case, $M \hookrightarrow N \hookrightarrow^* \pair{\p{v_1}}{\p{v_2}}$.  
    Since $N \in \sn{T}$, we know that $\p{v_1} \in \sn{\p P}$ and $\p{v_2} \in \sn{\p Q}$.  
    Therefore, $M \in \sn{\p{P \with Q}}$.

  \item $T = A \oplus B$:  
    In this case, $M \hookrightarrow N \hookrightarrow^* \inl(v)$ or $\inr(v)$.  
    Since $N \in \sn{T}$, we have either $v \in \sn{A}$ or $v \in \sn{B}$, respectively.  
    Hence, $M \in \sn{A \oplus B}$.

  \item $T = \p{P \otimes Q}$ (or $T = A \otimes B$):  
    Again, we show the pure case.  
    Let $M \hookrightarrow N \hookrightarrow^* \p{v_1} \otimes \p{v_2}$.  
    Since $N \in \sn{T}$, we know $\p{v_1} \in \sn{\p P}$ and $\p{v_2} \in \sn{\p Q}$.  
    Thus, $M \in \sn{\p{P \otimes Q}}$.

  \item $T = \mathcal B(\p P)$:  
    In this case, $M \hookrightarrow N \hookrightarrow^* \mathcal B(\p v)$.  
    Since $N \in \sn{T}$, we have $\p v \in \sn{\p P}$.  
    Hence, $M \in \sn{\mathcal B(\p P)}$.
    \qed
\end{itemize}
\end{proof}

\begin{lemma}[Adequacy of sum]
  \label{lem:sn_plus}
  If $M, N \in \textbf{SN}(T)$, then $M \Plus N \in \textbf{SN}(T)$.
\end{lemma}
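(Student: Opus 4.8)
The plan is to combine Lemma~\ref{lem:plus_sn} with an induction on the proposition $T$. By Lemma~\ref{lem:plus_sn} the term $M \Plus N$ already belongs to $\mathbf{SN}$, so the only thing left is to verify the ``value clause'' in the definition of $\mathbf{SN}(T)$: whenever $M \Plus N$ reduces to a value of the shape dictated by $T$, its immediate subterms must lie in the reducibility sets of the corresponding smaller propositions. For the base cases $T \in \{\p\top, \one, \zero, 1\}$ there is nothing to check, since $\mathbf{SN}(T)$ is simply the set of strongly normalising terms of $T$, so these follow directly from Lemma~\ref{lem:plus_sn}.

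For a compound $T$ I first use strong normalisation of $M$ and $N$, together with the shape of $\hookrightarrow$ (or Theorem~\ref{thm-confluence}), to reduce to the situation $M \hookrightarrow^* V_M$ and $N \hookrightarrow^* V_N$ with $V_M, V_N$ values of type $T$, so that $M \Plus N \hookrightarrow^* V_M \Plus V_N$. The key structural observation is that the only rules that can erase the top-level $\Plus$ are the algebraic identities $\leftrightarrows$ and the commutation rules~\eqref{rule:commtop}--\eqref{rule:comminr},~\eqref{rule:commtau},~\eqref{rule:commB} and~\eqref{rule:commBapp}; consequently, any value $V$ with $V_M \Plus V_N \hookrightarrow^* V$ of the shape relevant to $T$ arises by pushing $\Plus$ inside the common head constructor of $V_M$ and $V_N$ (or by absorbing a zero-coefficient summand). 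Concretely, when $T = S_1 \multimap S_2$ one has $V_M = \lambda x. m_1$ and $V_N = \lambda x. m_2$ with $V = \lambda x. p$ and $m_1 \Plus m_2 \hookrightarrow^* p$; when $T = \p P \pwith \p Q$ one has $V_M = \pair{\p{v_1}}{\p{v_2}}$ and $V_N = \pair{\p{w_1}}{\p{w_2}}$ with the components of $V$ reducts of $\p{v_1} \Plus \p{w_1}$ and $\p{v_2} \Plus \p{w_2}$; and similarly for $\otimes$, $\oplus$ and $\mathcal B(\p P)$.

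In each such case I then invoke the hypotheses $M, N \in \mathbf{SN}(T)$: this places the relevant subterms of $V_M$ and $V_N$ (after substituting an arbitrary $v \in \mathbf{SN}(S_1)$, in the arrow case) into $\mathbf{SN}(\cdot)$ of the smaller proposition; the induction hypothesis then puts their $\Plus$ into that same set; and finally, closure of each $\mathbf{SN}(S)$ under $\hookrightarrow$ — which is immediate from the definition, since every clause of $\mathbf{SN}(S)$ is phrased in terms of the reducts of a term — transfers membership from $\p{v_1} \Plus \p{w_1}$ (respectively $(v/x)m_1 \Plus (v/x)m_2$, etc.) to the actual component of $V$. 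Collecting the cases yields $M \Plus N \in \mathbf{SN}(T)$.

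The main obstacle is the bookkeeping in the ``linear combination'' cases $\otimes$, $\oplus$ and $\mathcal B(\p P)$, where $V_M$ and $V_N$ need not be single base values but genuine sums $\sum_i \alpha_i \sdot V_{bi}$: one must use the $\leftrightarrows$-rules for $\otimes$ and for merging or absorbing scalars to show that $V_M \Plus V_N$ can collapse to a single tensor, a single $\inl$/$\inr$, or a single $\mathcal B(\cdot)$ only when all contributing base summands share a common factor, so that the reducibility data carried by the hypotheses on $M$ and $N$ genuinely applies to the components of the resulting value. The remaining cases — the ``$v_n$'' connectives $\p\top$, $\multimap$, $\pwith$ and the multiplicative units — are routine once the relevant commutation rule is identified.
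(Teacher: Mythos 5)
Your overall strategy---Lemma~\ref{lem:plus_sn} for termination, induction on the proposition $T$, a case analysis on how $M \Plus N$ can reach a value of the shape mentioned in $\sn{T}$, then the induction hypothesis on the smaller proposition together with closure of $\sn{\cdot}$ under reduction---is the same as the paper's. The gap is exactly at the point you yourself flag as ``the main obstacle''. First, your structural claim that the only rules able to erase a top-level $\Plus$ are the $\leftrightarrows$-identities and the commutation rules is not true for the relation this lemma is actually about: in Appendix~\ref{app:SN} the relation $\hookrightarrow$ has been extended (Girard's ultra-reduction) with $M \Plus N \hookrightarrow M$, $M \Plus N \hookrightarrow N$ and $\alpha \sdot M \hookrightarrow M$, and the sets $\sn{T}$ are defined with respect to this extended relation; your case analysis therefore omits reducts (though those omitted cases are the easy ones, following directly from the hypotheses).

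Second, and more seriously, in the $\otimes$, $\oplus$ and $\mathcal B(\p P)$ cases your key step---that when summands merge through scalar arithmetic ``the reducibility data carried by the hypotheses on $M$ and $N$ genuinely applies to the components of the resulting value''---is asserted rather than proved, and for the unextended relation it is false as stated. Take $M$ and $N$ both reducing to $\tfrac{\p 1}{\p 2} \sdot (\p{v_1 \otimes v_2})$: then $M \Plus N \hookrightarrow^* \p{v_1} \p\otimes \p{v_2}$, but neither $M$ nor $N$ reduces, without the extra rules, to a term of the exact shape $\p{v_1} \p\otimes \p{v_2}$, so the conditional clauses in the definition of $\sn{\p{P \otimes Q}}$ applied to $M$ and $N$ are vacuous and give you no information about $\p{v_1}$ and $\p{v_2}$. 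The paper's proof escapes this precisely because of the ultra-reduction rules: with $\alpha \sdot M \hookrightarrow M$ and the $\Plus$-projections available, any base value occurring in a normal form of $M \Plus N$ is already a reduct of $M$ or of $N$ alone, so the hypotheses apply directly---this is how the paper discharges its $\otimes$, $\oplus$ and $\mathcal B$ cases. To repair your argument you should either carry out the case analysis for the extended relation and use these projection/erasure rules, or prove an auxiliary lemma transferring the $\sn{}$-conditions across scalar multiples and sums of values; as written, the ``common factor'' step does not close these cases.
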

\begin{proof}
  Proof is again by induction on the proposition $T$. Again by Lemma~\ref{lem:plus_sn} we have that $M \Plus N$ always terminates.
  \begin{itemize}
    \item $T$ is $\one$ or $1$:
      Follows directly from Lemma~\ref{lem:plus_sn}.
    \item $T$ is $\p P \pwith \p Q$:
      Since $M \Plus N$ terminates and $T$ is $\p{P \with Q}$, we have $M \Plus N \hookrightarrow^{\ast} \p{\pair{v_1}{v_2}}$.
      Hence, either $M \hookrightarrow^{\ast} \p{\pair{u_1}{u_2}}$, $N \hookrightarrow^{\ast} \p{\pair{u_1'}{u_2'}}$ with
      $\p{u_1 \Plus u_1'} \hookrightarrow^{\ast} \p{v_1}$ and $\p{u_2 \Plus u_2'} \hookrightarrow^{\ast} \p{v_2}$ or $M \hookrightarrow^{\ast} \p{\pair{v_1}{v_2}}$
      or $N \hookrightarrow^{\ast} \p{\pair{v_1}{v_2}}$. The first case follows by the induction hypothesis and the assumption that $M, N \in \sn{T}$.
      The other two cases follow from the assumption $M, N \in \sn{T}$.
    \item $T$ is $\p P_1 \pmultimap \p P_2$ or $A \multimap B$:
      Since both the cases are similar, we present only one case here. Since $M\Plus N$ terminates and $T$ is $\p{P \multimap Q}$, we have
      $M \plus N \hookrightarrow^{\ast} \p{\lambda \abstr{x}m}$. Now, either $M \hookrightarrow \p{\lambda \abstr{x}m_1}$, $M \hookrightarrow \p{\lambda \abstr{x}m_2}$
      and $\p m$ is of the form $\p{m_1 \Plus m_2}$ or $M \hookrightarrow^{\ast} \p{\lambda \abstr{x}m}$ or $N \hookrightarrow^{\ast} \p{\lambda \abstr{x}m}$.
      The first case follows by an application of the induction hypothesis to $\p{(v/x)m_1 \Plus m_2}$ and the assimption that $M,N \in \sn{T}$. The other two 
      cases follow from the assumption $M,N \in \sn{T}$.
    \item $T$ is $\p{P \otimes Q}$ or $A \otimes B$:
      We present only one case here, the other case follows using a similar argument. Since, $M \Plus N$ terminates and $T$ is $\p{P \otimes Q}$ we have
      $M \Plus N \hookrightarrow^{\ast} \p{v_1 \otimes v_2}$. Then either $M \hookrightarrow^{\ast} \p{v_1 \otimes v_2}$ or $N \hookrightarrow^{\ast} \p{v_1 \otimes v_2}$.
      Both cases follow from the assumption that $M,N \in \sn{T}$.
    \item $T$ is $A \oplus B$:
      Since, $M \Plus N$ terminates and $T$ is $A \oplus B$ we have
      $M \Plus N \hookrightarrow^{\ast} \inl(v)$(or $\inr(v)$). Then either $M \hookrightarrow^{\ast} \inl(v)$(or $\inr(v)$) or $N \hookrightarrow^{\ast} \inl(v)$(or $\inr(v)$).
      Both cases follow from the assumption that $M,N \in \sn{T}$.
    \item $T$ is $\mathcal B(\p P)$:
      Since, $M \Plus N$ terminates and $T$ is $\mathcal B(\p P)$ we have
      $M \Plus N \hookrightarrow^{\ast} \mathcal B(\p v)$. Then either $M \hookrightarrow^{\ast} \mathcal B(\p v)$ or $N \hookrightarrow^{\ast} \mathcal B(\p v)$.
      Both cases follow from the assumption that $M,N \in \sn{T}$.
      \qed
  \end{itemize}
\end{proof}

\begin{lemma}[Adequacy of prod($\alpha$)]
  \label{lem:sn_bullet}
  If $M \in \sn{T}$, then $\alpha \sdot M \in \sn{T}$.
\end{lemma}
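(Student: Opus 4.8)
The plan is to argue exactly as in the proof of Lemma~\ref{lem:sn_plus}, by induction on the structure of the proposition $T$. Termination is immediate: by Lemma~\ref{lem:dot_sn}, $\alpha \sdot M \in \mathbf{SN}$ whenever $M \in \mathbf{SN}$, so in each case it only remains to check the connective-specific closure condition in the definition of $\sn{T}$.

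For $T \in \{\p\top, \one, \zero, 1\}$ nothing further is needed, since $\sn{T}$ is just $\mathbf{SN}$ there. For the other connectives I would analyse the shape of any value $V$ reachable from $\alpha \sdot M$: since $\alpha \sdot [\cdot]$ is an evaluation context, such a reduction first drives $M$ to a value $M_0$, and $\alpha \sdot M_0$ can then only progress by the $\leftrightarrows$-identities for $\sdot$ and by the commutation rules~\eqref{rule:commtop}, \eqref{rule:commarrow}, \eqref{rule:commwith} (and, inside $\mathcal B$, \eqref{rule:commB} and~\eqref{rule:commBapp}). Inspecting these shows that $V$ is an introduction form of the same connective as $M_0$, obtained by pushing $\alpha$ onto the immediate subterms of $M_0$ (in the $\mathcal B$-case as $|\p a|^2$). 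One then invokes the hypothesis $M \in \sn{T}$ to learn the required facts about those unscaled subterms, and the induction hypothesis at the structurally smaller component propositions to transfer those facts to the scaled subterms. Concretely, for $T = \p P \pmultimap \p Q$ (and symmetrically for $A \multimap B$): $\alpha \sdot M \hookrightarrow^* \lambda x.n$ forces $M \hookrightarrow^* \lambda x.m'$ with $n$ a reduct of $\alpha \sdot m'$; for $v \in \sn{\p P}$ the hypothesis gives $(v/x)m' \in \sn{\p Q}$, the induction hypothesis gives $\alpha \sdot (v/x)m' = (v/x)(\alpha \sdot m') \in \sn{\p Q}$, and $(v/x)n$, being a reduct of $\alpha \sdot (v/x)m'$, is itself in $\sn{\p Q}$ since the sets $\sn{\cdot}$ are closed under reduction directly from their definition. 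The remaining cases $\pwith$, $\otimes$, $\oplus$ and $\mathcal B(\p P)$ follow the same template, distributing $\alpha$ over the pair, the tensor, the injection, or $\mathcal B(\cdot)$ and applying the induction hypothesis to each component.

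The step I expect to take the most care is the bookkeeping of how the scalar migrates through an introduction form, given that $\hookrightarrow$ interleaves $\leftrightarrows$ and $\rightarrow$. I would handle this with a preliminary observation: the algebraic form $\aform{\alpha \sdot M}$ is obtained from $\aform{M}$ by scaling every coefficient by $\alpha$ (after the $1 \sdot (-) \leftrightarrows (-)$ and coefficient-merging identities), so that reasoning about a reduction of $\alpha \sdot M$ reduces to reasoning about one of $M$ followed by the single commutation step that moves $\alpha$ inward. A minor additional point is the $\mathcal B$-cases: rule~\eqref{rule:commB} squares the scalar and~\eqref{rule:commBapp} multiplies scalars across embedded applications, but this is harmless because $\sn{\mathcal B(\p P)}$ constrains only the pure value inside $\mathcal B(\cdot)$, not the ambient scalar.
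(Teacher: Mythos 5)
Your proposal is correct and follows essentially the same route as the paper: termination of $\alpha \sdot M$ via Lemma~\ref{lem:dot_sn}, then induction on $T$ with a case analysis on how $\alpha \sdot M$ reaches a value of the shape constrained by $\sn{T}$, using $M \in \sn{T}$ for the unscaled components and the induction hypothesis for the scaled ones (your extra observations about closure of $\sn{\cdot}$ under reduction and about $\aform{\alpha\sdot M}$ are harmless refinements). The only small inaccuracy is the claim that the scalar is pushed \emph{inside} $\otimes$, the $\oplus$-injections and $\mathcal B(\cdot)$: no such value forms exist (e.g.\ $(\alpha\sdot v_1)\otimes v_2$ is not a value, and rules~\eqref{rule:comminl}--\eqref{rule:commB} move scalars outward), so in those cases---as in the paper---one simply notes that a bare value $v_1\otimes v_2$, $\inl(v)$ or $\mathcal B(\p v)$ reached from $\alpha\sdot M$ is also reached from $M$, and $M\in\sn{T}$ alone suffices; this does not affect the correctness of your argument.
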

\begin{proof}
  Proof is by induction on the proposition $T$. Note that by Lemma~\ref{lem:dot_sn} we have that $\alpha \sdot M$ always terminates.
  \begin{itemize}
    \item $T$ is $\one$ or $1$:
      Directly follows from Lemma~\ref{lem:dot_sn}.
    \item $T$ is $\p P \pwith \p Q$:
      Since, $\alpha \sdot M$ terminates and $T$ is $\p{P \with Q}$ it follows that $\alpha \sdot M \hookrightarrow^{\ast} \p{\pair{v_1}{v_2}}$.
      Then either $M \hookrightarrow^{\ast} \p{\pair{u_1}{u_2}}$ with $\alpha \sdot \p{u_1} \hookrightarrow^{\ast} \p{v_1}$ and 
      $\alpha \sdot \p{u_2} \hookrightarrow^{\ast} \p{v_2}$ or $M \hookrightarrow^{\ast} \p{\pair{v_1}{v_2}}$. In the first case, the conclusion
      follows by an application of induction hypothesis to $\p{u_1}, \p{u_2}$ and $M \in\sn{T}$. In the second case, the conclusion follows
      from the fact that $M \in \sn{T}$.
    \item $T$ is $\p P \pmultimap \p Q$ or $A \multimap B$:
      We present only one case here, the other case follows similarly.
      Since $\alpha \sdot M$ terminates and $T$ is $\p{P \multimap Q}$, it follows that $\alpha \sdot M \hookrightarrow^{\ast} \p{\lambda \abstr{x}m}$.
      Now either $M \hookrightarrow^{\ast} \p{\lambda \abstr{x}n}$ and $m$ is $\alpha \sdot \p n$
      or $M \hookrightarrow^{\ast} \p{\lambda \abstr{x}m}$. The first case follows from the induction hypothesis applied to $\p{(v/x)n}$ for 
      $\p v \in \sn{\p P}$ and $M \in \sn{T}$. The latter case follows
      from $M \in \sn{T}$.
    \item $T$ is $\p{P \otimes Q}$ or $A \otimes B$:
      Again we present only one case here, the argument for the other case is similar. Since $\alpha \sdot M$ terminates and $T$ is $\p{P \otimes Q}$,
      we have $\alpha \sdot M \hookrightarrow^{\ast} \p{v_1 \otimes v_2}$ and $M \hookrightarrow^{\ast} \p{v_1 \otimes v_2}$. Now the claim follows
      since $M \in \sn{T}$.
    \item $T$ is $A \oplus B$:
      Since $\alpha \sdot M$ terminates and $T$ is $A \oplus B$,
      we have $\alpha \sdot M \hookrightarrow^{\ast} \inl(v)$(or $\inr(v)$) and $M \hookrightarrow^{\ast} \inl(v)$(or $\inr(v)$). Now the claim follows
      since $M \in \sn{T}$.
    \item $T$ is $\mathcal B(\p P)$:
      Since $\alpha \sdot M$ terminates and $T$ is $\mathcal B(\p P)$,
      we have $\alpha \sdot M \hookrightarrow^{\ast} \mathcal B(\p v)$ and $M \hookrightarrow^{\ast} \mathcal B(\p v)$. Now the claim follows
      since $M \in \sn{T}$.
  \end{itemize}
	\qed
\end{proof}

\begin{lemma}[Adequacy of $\zero_e^m$]
\label{lem:sn_elimzero}
        If $m \in \sn{\zero}$ then ${\elimzero(m)} \in \sn{A}$.
\end{lemma}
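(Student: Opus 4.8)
The plan is to invoke the CR3 lemma, Lemma~\ref{lem:forward}. The term $\elimzero(m)$ is neutral, since the last rule of its derivation is $\zero_e^m$, an elimination rule, so it suffices to show that every $\hookrightarrow$-reduct of $\elimzero(m)$ belongs to $\sn{A}$. As $m \in \sn{\zero} = \mathbf{SN}$, the length $\ell(m)$ is finite, and I will argue by induction on $\ell(m)$.

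The key point is that $\elimzero$ never heads a redex: there is no introduction rule for $\zero$, no commutation rule for $\zero$, and no rule of $\leftrightarrows$ acting at the head of $\elimzero(\cdot)$; moreover $\aform{\elimzero(m)} = 1 \sdot \elimzero(\aform{m})$, and $\elimzero(m)$ is never $\leftrightarrows$-equivalent to a value, since no value has $\elimzero$ at its head. Going through the three clauses of Definition~\ref{def:hook}, the second clause is therefore vacuous for $\elimzero(m)$, and every reduct $N$ is either of the form $\elimzero(m')$ with $m \rightarrow m'$ (first clause) or of the form $1 \sdot \elimzero(m')$ with $m \not\rightarrow$ and $\aform{m} \rightarrow m'$ (third clause). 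In both cases $m \hookrightarrow m'$, hence $m' \in \mathbf{SN} = \sn{\zero}$ and $\ell(m') < \ell(m)$.

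By the induction hypothesis, $\elimzero(m') \in \sn{A}$ in each case. When $N = \elimzero(m')$ this already gives $N \in \sn{A}$; when $N = 1 \sdot \elimzero(m')$, Lemma~\ref{lem:sn_bullet} (adequacy of $\mathrm{prod}(\alpha)$) yields $N \in \sn{A}$. Hence all $\hookrightarrow$-reducts of $\elimzero(m)$ lie in $\sn{A}$ (vacuously when $\ell(m) = 0$), and Lemma~\ref{lem:forward} delivers $\elimzero(m) \in \sn{A}$, which in particular reconfirms $\elimzero(m) \in \mathbf{SN}$.

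The main obstacle is the bookkeeping in the second paragraph: one must carefully enumerate the $\hookrightarrow$-reducts of $\elimzero(m)$ across all three clauses of Definition~\ref{def:hook}, handling the spurious scalar $1 \sdot (-)$ introduced by passage to the algebraic form and confirming that neither $\rightarrow$ nor $\leftrightarrows$ fires at the $\elimzero$ head. Everything else is a direct appeal to Lemmas~\ref{lem:forward} and~\ref{lem:sn_bullet}.
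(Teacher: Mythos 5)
Your proof is correct, but it takes a more machinery-heavy route than the paper. The paper's own argument is a direct two-step observation: since every reduction of $\elimzero(m)$ only acts inside $m$, termination of $\elimzero(m)$ is inherited from $m \in \mathbf{SN}$; and since $\elimzero(m)$ can never reduce to a value (its $\elimzero$ head persists), all the value-shaped conditions in the definition of $\sn{A}$ hold vacuously, whatever the shape of $A$, so plain termination already gives $\elimzero(m) \in \sn{A}$. You instead route the argument through the CR3 lemma (Lemma~\ref{lem:forward}) with an induction on $\ell(m)$, enumerating the $\hookrightarrow$-reducts across the three clauses of Definition~\ref{def:hook} and discharging the spurious $1 \sdot (-)$ wrapper from the algebraic form via Lemma~\ref{lem:sn_bullet}. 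Both arguments are sound; yours is more uniform with how the paper treats the other elimination rules (e.g.\ $\one_e^p$, $\oplus_e^m$), and your explicit handling of the clause-3 reduct $1 \sdot \elimzero(m')$ is actually more careful than the paper's informal claim that reductions ``can only reduce $m$'', while the paper's version buys brevity by exploiting that for $\zero_e^m$ the reducibility predicate degenerates to mere strong normalisation, so no induction or CR3 is needed.
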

\begin{proof}
 $m$ always terminates. Any reduction acting on ${\elimzero(m)}$ can only reduce $m$, hence it always terminates. Since, ${\elimzero(m)}$
can never reduce to a closed value, it follows that ${\elimzero(m)} \in \sn{A}$.
\qed
\end{proof}

\begin{lemma}[Adequacy of $\one_e^p$]
  \label{lem:sn_elimone_p}
  If $\p{m_1} \in \sn{\one}$ and $\p{m_2} \in \sn{\p P}$, then $\p{\elimone(m_1,m_2)} \in \sn{\p P}$.
\end{lemma}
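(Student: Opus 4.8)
The plan is to follow the standard reducibility (Girard) recipe used for the other adequacy lemmas. The term $\p{\elimone(m_1,m_2)}$ is \emph{neutral}, since its last derivation step is the elimination rule for $\one$; hence, by the $\mathrm{CR3}$ property (Lemma~\ref{lem:forward}), it suffices to show that every one-step $\hookrightarrow$-reduct of $\p{\elimone(m_1,m_2)}$ already lies in $\sn{\p P}$. This also takes care of strong normalisation: $\p{m_1}\in\sn{\one}\subseteq\mathbf{SN}$ and $\p{m_2}\in\sn{\p P}\subseteq\mathbf{SN}$, and since $\hookrightarrow$ can only act inside $\p{m_1}$ or fire the head cut, every reduction sequence out of $\p{\elimone(m_1,m_2)}$ is finite (this is exactly the first observation in the proof of Lemma~\ref{lem:forward}).

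I would then proceed by induction on $\ell(\p{m_1})$ and analyse the possible reducts $\p N$ of $\p{\elimone(m_1,m_2)}$. The evaluation contexts only provide the shape $\p{\elimone(K,M)}$, so either (a) the reduction takes place inside the first argument, $\p N=\p{\elimone(m_1',m_2)}$ with $\p{m_1}\hookrightarrow\p{m_1'}$: since $\sn{\one}$ is just $\mathbf{SN}$ it is closed under reduction, so $\p{m_1'}\in\sn{\one}$, and as $\ell(\p{m_1'})<\ell(\p{m_1})$ the induction hypothesis applied to $\p{m_1'}$ and $\p{m_2}$ gives $\p N\in\sn{\p P}$; or (b) the head cut fires, which (possibly after a $\leftrightarrows$-rewrite of $\p{m_1}$ to its algebraic form) requires $\p{m_1}$ to be of the form $\p a\sdot\p\ast$, and Rule~\eqref{rule:one} then gives $\p N=\p a\sdot\p{m_2}$ up to leading $\p 1\sdot(-)$ wrappers introduced by the algebraic form; since $\p{m_2}\in\sn{\p P}$, Lemma~\ref{lem:sn_bullet} (adequacy of $\mathrm{prod}(\alpha)$) yields $\p a\sdot\p{m_2}\in\sn{\p P}$, and further applications absorb the $\p 1\sdot(-)$ wrappers. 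A degenerate sub-case is when $\p{m_1}$ is a stuck neutral term, e.g.\ a free variable of type $\one$: then $\p{\elimone(m_1,m_2)}$ has no reduct at all and the claim holds vacuously by Lemma~\ref{lem:forward}.

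The step I expect to be the main obstacle is the bookkeeping forced by the composite definition of $\hookrightarrow$ (Definition~\ref{def:hook}): a $\rightarrow$-step may be preceded by a $\leftrightarrows$-normalisation to the algebraic form $\aform{\p{\elimone(m_1,m_2)}}=\p 1\sdot\p{\elimone(}\aform{\p{m_1}}\p,\aform{\p{m_2}}\p)$, so I must verify that this detour produces no reducts beyond cases (a)--(b). This amounts to two routine but slightly fiddly observations: that $\aform{\p{m_2}}$ remains in $\sn{\p P}$ (obtainable either from $\leftrightarrows$-invariance of the reducibility sets, or from the fact that $\aform{\p{m_2}}$ is reachable from $\p{m_2}$ without increasing reduction length), and that scalar prefixes of the form $\p 1\sdot(-)$ created along the way never take a term out of $\sn{\p P}$, which is again Lemma~\ref{lem:sn_bullet}. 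With those in hand the case analysis closes and the lemma follows.
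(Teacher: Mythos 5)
Your proposal is correct and follows essentially the same route as the paper's proof: induction on $\ell(\p{m_1})$ combined with Lemma~\ref{lem:forward}, splitting into the case where the reduction occurs inside $\p{m_1}$ (induction hypothesis) and the case where $\p{m_1}$ is $\p a\sdot\p\ast$ and Rule~\eqref{rule:one} fires, concluding via Lemma~\ref{lem:sn_bullet}. The extra bookkeeping you flag about the $\leftrightarrows$-detour to algebraic form is a finer-grained elaboration of what the paper leaves implicit, not a different argument.
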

\begin{proof}
  We use induction on $\ell(\p{m_1})$ and Lemma~\ref{lem:forward} to prove that $\p{\elimone(m_1,m_2)} \in \sn{\p P}$.
  From Lemma~\ref{lem:forward} it suffices to show that all one-step reducts of the term $\p{\elimone(m_1,m_2)}$
  are in $\sn{\p P}$. Now if a reduction applies to $\p{m_1}$, then by induction hypothesis the resulting
  term is in $\sn{\p P}$. Otherwise, $\p{m_1}$ is of the form $\p a \sdot \p \ast$, in which case,
  $\p{\elimone(m_1,m_2)} \hookrightarrow \p a \sdot \p{m_2}$. Now again from Lemma~\ref{lem:sn_bullet}
  it follows that $\p a \sdot \p{m_2} \in \sn{\p P}$. Hence, from Lemma~\ref{lem:forward} it follows
  that $\p{\elimone(m_1,m_2)} \in \sn{\p P}$.
  \qed
\end{proof}

\begin{lemma}[Adequacy of $1_e^m$]
  \label{lem:sn_elimone_m}
  If $m_1 \in \sn{1}$ and $m_2 \in \sn{A}$, then $\elimone(m_1,m_2) \in \sn{A}$.
\end{lemma}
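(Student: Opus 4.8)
The plan is to replay, almost verbatim, the proof of the pure counterpart Lemma~\ref{lem:sn_elimone_p}. I would argue by induction on $\ell(m_1) + \ell(m_2)$ and invoke Lemma~\ref{lem:forward} (CR3): the term $\elimone(m_1,m_2)$ is neutral, since the last rule of its derivation is the elimination rule $1_e$, so it suffices to show that every one-step $\hookrightarrow$-reduct of $\elimone(m_1,m_2)$ lies in $\sn{A}$. Since $m_1$ and $m_2$ are strongly normalising, there are only finitely many such reducts, so this also yields $\elimone(m_1,m_2) \in \mathbf{SN}$, and Lemma~\ref{lem:forward} then gives $\elimone(m_1,m_2) \in \sn{A}$ as required.

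For the case analysis on a one-step reduct there are three possibilities. If the redex lies inside $m_1$, the result $m_1'$ still belongs to $\sn{1}$ (reducts of terms in $\sn{1}$ are in $\sn{1}$, directly from the definition of $\sn{\cdot}$) and has strictly smaller $\ell$, so the induction hypothesis gives $\elimone(m_1', m_2) \in \sn{A}$; the case of a redex inside $m_2$ is symmetric. The remaining case is a head reduction: by rule~\eqref{rule:one} this forces $m_1$ to be of the form $p \sdot \ast$, and then $\elimone(p \sdot \ast, m_2) \rightarrow p \sdot m_2$, which lies in $\sn{A}$ by Lemma~\ref{lem:sn_bullet} (Adequacy of prod$(\alpha)$) since $m_2 \in \sn{A}$. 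The only slightly finer sub-case is when $m_1$ is the bare value $\ast$: then $\elimone(\ast,m_2)$ has no $\rightarrow$-redex, so $\hookrightarrow$ first rewrites it to its algebraic form $\aform{\elimone(\ast,m_2)} = 1 \sdot \elimone(1 \sdot \ast, \aform{m_2})$ and then fires rule~\eqref{rule:one}; this reduct is again a scalar multiple of (the algebraic form of) $m_2$, and Lemma~\ref{lem:sn_bullet} closes the case.

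The step I expect to require the most care is the bookkeeping around the interleaving of $\leftrightarrows$ and $\rightarrow$ inside $\hookrightarrow$ --- in particular checking that rewriting $\elimone(m_1,m_2)$, or its subterms, to algebraic form preserves membership in the relevant $\sn{\cdot}$ sets and does not inflate the $\ell(\cdot) + \ell(\cdot)$ measure. This is exactly the fiddly-but-routine reasoning already present in the proof of Lemma~\ref{lem:sn_elimone_p}; it transfers to the mixed fragment unchanged, the only difference being that the scalars $p$ range over $\mathbb{R}_{\geq 0}$ instead of $\mathbb{C}$, which is immaterial to the argument.
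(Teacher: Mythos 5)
Your proof is correct and follows essentially the same route as the paper, which simply replays the pure case (Lemma~\ref{lem:sn_elimone_p}): induction on the length measure, CR3 (Lemma~\ref{lem:forward}) applied to the neutral term, and Lemma~\ref{lem:sn_bullet} for the head reduct $p \sdot m_2$. The only (harmless) extra work is your case of a redex inside $m_2$, which in fact cannot arise because the evaluation contexts include $\elimone(K,M)$ but not $\elimone(V,K)$, so only $m_1$ is ever reduced in place.
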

\begin{proof}
  Proof for this lemma follows by the same argument as for Lemma~\ref{lem:sn_elimone_p}.
  \qed
\end{proof}

\begin{lemma}[Adequacy of $\pmultimap_i^p$]
  \label{lem:sn_arrow_p}
  If $\p v \in \sn{\p P}$ and $\p{(v/x)m} \in \sn{\p Q}$ then $\p{\lambda \abstr{x}m} \in \sn{\p P  \p \multimap \p Q}$.
\end{lemma}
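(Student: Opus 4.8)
The plan is to run the standard reducibility (Girard--Tait) argument for the abstraction case, i.e.\ to verify the two conditions defining $\sn{\p P \pmultimap \p Q}$: that $\p{\lambda \abstr{x}m}$ is strongly normalising, and that every $\lambda$-shaped reduct of it satisfies the substitution clause. Throughout I read the hypothesis as supplying $\p{(v/x)m} \in \sn{\p Q}$ for \emph{every} $\p v \in \sn{\p P}$, which is the form in which the lemma is invoked in the adequacy proof (the metavariable $\p v$ being universally quantified).

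First I would establish $\p{\lambda \abstr{x}m} \in \SN$. The key observation is that $\hookrightarrow$ cannot rewrite inside the body of an abstraction: the context grammar $K$ contains no clause of the form $\p{\lambda \abstr{x}}K$, and no top-level rule of $\rightarrow$ or $\leftrightarrows$ applies to a bare $\p{\lambda \abstr{x}m}$, which is already a value. Hence $\p{\lambda \abstr{x}m}$ is a $\hookrightarrow$-normal form and terminates trivially. The one point needing care is the third clause of Definition~\ref{def:hook}, which could in principle route through $\aform{\p{\lambda \abstr{x}m}} = \p 1 \sdot \p{\lambda \abstr{x}}\aform{\p m}$ and then fire Rule~\eqref{rule:commarrow}; I would check that this yields no genuine reduct here, or, failing that, that any reduct is again an abstraction with a strongly normalising body (using $\p m \in \SN$, which follows from $\p{(v/x)m} \in \sn{\p Q} \subseteq \SN$ since reductions of $\p m$ lift to reductions of $\p{(v/x)m}$).

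Second, I would discharge the substitution clause: if $\p{\lambda \abstr{x}m} \hookrightarrow^* \p{\lambda \abstr{x}n}$, then by the previous paragraph $\p n$ is $\p m$ up to $\alpha$-equivalence, so for every $\p v \in \sn{\p P}$ the required $\p{(v/x)n} \in \sn{\p Q}$ is exactly the hypothesis; therefore $\p{\lambda \abstr{x}m} \in \sn{\p P \pmultimap \p Q}$. (Should reduction under the binder turn out to be possible after all, this step instead becomes an induction on $\ell(\p m)$, using that substitution of a value commutes with $\hookrightarrow^*$ and that each $\sn{T}$ is closed under reduction --- immediate from its shape $\{M \in \SN \mid \ldots\}$ --- together with Lemma~\ref{lem:sn_bullet} for the scalar-wrapping case.) The main obstacle is exactly the bookkeeping of the first part: pinning down which $\hookrightarrow$-rewrites can reach a $\lambda$-abstraction and confirming that none of them escapes being a normal form; everything else is a direct appeal to the hypothesis and to closure of $\sn{-}$ under reduction.
\qed
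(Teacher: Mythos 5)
Your proposal is correct and takes essentially the same route as the paper, whose entire proof is ``Straightforward from the definition of $\sn{\p{P\pmultimap Q}}$'': you simply unfold that definition, checking that $\p{\lambda\abstr{x}m}$ is strongly normalising (no evaluation context enters a binder, and any rewrite routed through $\aform{\p{\lambda\abstr{x}m}}$ and Rule~\eqref{rule:commarrow} only returns an abstraction with a strongly normalising body) and that every $\lambda$-shaped reduct satisfies the substitution clause, which is exactly the hypothesis read with $\p v$ universally quantified. The only difference is that you make explicit the bookkeeping the paper leaves implicit, including the one genuinely delicate point about reducts reachable via the algebraic form, so nothing further is needed.
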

\begin{proof}
  Straightforward from the definition of $\sn{\p{P \multimap Q}}$.
  \qed
\end{proof}

\begin{lemma}[Adequacy of $\multimap_i^m$]
  \label{lem:sn_arrow_m}
  If $v \in \sn{A}$ and $(v/x)m \in \sn{B}$ then $\lambda \abstr{x}m \in \sn{A \multimap B}$.
\end{lemma}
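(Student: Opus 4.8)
The plan is to follow the proof of Lemma~\ref{lem:sn_arrow_p} almost verbatim, simply unfolding the definition of $\sn{A \multimap B}$. Recall that $m \in \sn{A \multimap B}$ means $m \in \mathbf{SN}$ together with the requirement that whenever $m \hookrightarrow^* \lambda \abstr{x}{m'}$, then $(v/x)m' \in \sn{B}$ for every $v \in \sn{A}$. So for $M = \lambda \abstr{x}m$ I must check two things: that $M$ is strongly normalising, and that it satisfies this implication.

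For strong normalisation I would observe that $\lambda \abstr{x}m$ is not merely a value (it is one of the clauses of the grammar for $v_n$) but in fact a $\hookrightarrow$-normal form: no rule of $\rightarrow$ applies at the root of a term of this shape --- Rule~\eqref{rule:arrow} requires an enclosing application, and Rule~\eqref{rule:commarrow} requires the redex to be a linear combination of abstractions --- no equation generating $\leftrightarrows$ matches it, and the grammar of contexts $K$ contains no clause $\lambda \abstr{x}K$, so nothing can be rewritten underneath the binder. Consequently $\lambda \abstr{x}m$ has no $\hookrightarrow$-reduct at all; it is therefore trivially in $\mathbf{SN}$, and any reduction $\lambda \abstr{x}m \hookrightarrow^* \lambda \abstr{x}{m'}$ is necessarily the empty one, so $\lambda \abstr{x}{m'}$ coincides with $\lambda \abstr{x}m$ up to $\alpha$-equivalence.

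With that settled, the implication in the definition of $\sn{A \multimap B}$ instantiates to exactly the hypothesis of the lemma, namely $(v/x)m \in \sn{B}$ for every $v \in \sn{A}$ (read, as in Lemma~\ref{lem:sn_arrow_p}, with the quantifier over the candidate $v$ made explicit). Hence $\lambda \abstr{x}m \in \sn{A \multimap B}$, which is the claim. I do not anticipate any real obstacle: the introduction rule for $\multimap$ interacts with neither the scalars of the mixed fragment nor the $\mathcal B$ modality, so the argument is word-for-word the one used in the pure case; the only points demanding any care are the standard convention of reading the hypothesis as universally quantified over $v \in \sn{A}$, and the small observation that abstraction is inert for $\hookrightarrow$.
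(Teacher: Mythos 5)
Your argument is correct and is essentially the paper's own proof, which simply declares the claim ``straightforward from the definition of $\sn{A \multimap B}$'': you unfold that definition, note that $\lambda\abstr{x}m$ is inert under $\hookrightarrow$ (no root rule applies and the context grammar $K$ does not descend under binders), so it is strongly normalising and its only abstraction reduct is itself, whence the hypothesis $(v/x)m \in \sn{B}$ for all $v \in \sn{A}$ gives exactly the required implication. The extra detail you supply (and the explicit reading of the hypothesis as universally quantified over $v$) is a faithful elaboration of the paper's one-line proof, not a different route.
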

\begin{proof}
  Straightforward from the definition of $\sn{A \multimap B}$.
  \qed
\end{proof}

\begin{lemma}[Adequacy of $\pmultimap_e^p$]
  \label{lem:sn_arrow_ep}
  If $\p n \in \sn{\p P}$ and $\p m \in \sn{\p P \p \multimap \p Q}$ then $\p{m~n} \in \sn{\p Q}$.
\end{lemma}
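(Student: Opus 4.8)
The plan is to run the standard reducibility-candidates argument for the elimination of implication. Both $\p m$ and $\p n$ are strongly normalising (they lie in $\sn{\p P \pmultimap \p Q}$ and $\sn{\p P}$, which are subsets of $\mathbf{SN}$), so $\ell(\p m)$ and $\ell(\p n)$ are well defined and I would argue by induction on $\ell(\p m) + \ell(\p n)$. Since $\p{m~n}$ is neutral — its derivation ends with the elimination rule $\pmultimap_e^p$ — Lemma~\ref{lem:forward} (CR3) reduces the goal to showing that every one-step $\hookrightarrow$-reduct $N$ of $\p{m~n}$ belongs to $\sn{\p Q}$.

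Next I would case-split on the reduction $\p{m~n} \hookrightarrow N$. If it takes place inside $\p m$, say $\p m \hookrightarrow \p m'$ and $N = \p m'~\p n$, then $\p m' \in \sn{\p P \pmultimap \p Q}$ — closure of the sets $\sn{-}$ under $\hookrightarrow$ is immediate, since they are defined through $\hookrightarrow^*$ — and $\ell(\p m') < \ell(\p m)$, so the induction hypothesis gives $N \in \sn{\p Q}$; the symmetric case of a step inside $\p n$ (possible only once $\p m$ is a value) is identical, using $\ell(\p n') < \ell(\p n)$. The remaining possibility is a head reduction by rule~\eqref{rule:arrow}: then $\p m = \p{\lambda \abstr{x}m_0}$, the argument has reduced to a value $\p v \in \sn{\p P}$ (again by closure under reduction), and $N = \p{(v/x)m_0}$; since trivially $\p m \hookrightarrow^* \p{\lambda \abstr{x}m_0}$, the defining clause of $\sn{\p P \pmultimap \p Q}$ applied to $\p v$ yields $N \in \sn{\p Q}$.

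The part needing care is the third clause of Definition~\ref{def:hook}: when $\p{m~n}$ is $\rightarrow$-stuck, the step fires on the algebraic form $\aform{\p{m~n}}$, so one also has to treat the reducts arising from a commutation rule such as~\eqref{rule:commarrow} acting on $\aform{\p m}$, which may turn $\p m$ into a $\lambda$-abstraction while wrapping the whole application in a $\p 1 \sdot (-)$ (and possibly a sum). I would handle these by first observing $\p m \hookrightarrow^* \p{\lambda \abstr{x}m_0'}$ for the resulting abstraction (so $\ell$ strictly drops) and $\p n \hookrightarrow^* \aform{\p n}$, applying the induction hypothesis to the inner application, and then closing under $\p 1 \sdot (-)$ via Lemma~\ref{lem:sn_bullet} and, if a sum appears, Lemma~\ref{lem:sn_plus}. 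Thus the main obstacle is purely the bookkeeping of enumerating all $\hookrightarrow$-reducts through the interleaving of $\leftrightarrows$, $\aform{-}$, and $\rightarrow$; no idea beyond the candidates method is required.
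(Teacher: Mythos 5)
Your proof is correct and follows essentially the same route as the paper: induction on $\ell(\p m)+\ell(\p n)$ combined with Lemma~\ref{lem:forward} (CR3), analysing the one-step reducts of the neutral term $\p{m~n}$ and using the defining clause of $\sn{\p P \pmultimap \p Q}$ for the head $\beta$-step. Your extra bookkeeping for the third clause of Definition~\ref{def:hook} (reduction via $\aform{-}$ and rule~\eqref{rule:commarrow}, closed under Lemmas~\ref{lem:sn_bullet} and~\ref{lem:sn_plus}) is a more careful treatment of a case the paper's proof passes over by simply asserting that both subterms are values.
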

\begin{proof}
  Proof is by induction on $\ell(m) + \ell(n)$ and Lemma~\ref{lem:forward}. If a reduction applies to either
  $\p m$ or $\p n$, then the resulting term is in $\sn{\p Q}$ by the induction hypothesis. Otherwise, both
  $\p m$ and $\p n$ are values and $\p{m~n}$ reduces to $\p{(n/x)m'}$ for some term $\p{m'}$. Since $\p m \in \sn{\p{P \multimap Q}}$,
  we have that $\p{(n/x)m'} \in \sn{\p Q}$. Hence, all one step reducts of $\p{m~n} \in \sn{\p Q}$. Therefore, by Lemma~\ref{lem:forward}
  it follows that $\p{m~n} \in \sn{\p Q}$.
  \qed
\end{proof}

\begin{lemma}[Adequacy of $\multimap_e^m$]
  \label{lem:sn_arrow_em}
  If $n \in \sn{A}$ and $m \in \sn{A \multimap B}$ then $m~n \in \sn{B}$.
\end{lemma}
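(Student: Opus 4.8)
The plan is to mirror, almost verbatim, the proof of the pure analogue Lemma~\ref{lem:sn_arrow_ep}. Since $m \in \sn{A \multimap B} \subseteq \mathbf{SN}$ and $n \in \sn{A} \subseteq \mathbf{SN}$, the quantity $\ell(m) + \ell(n)$ is well-defined, and I would argue by induction on it. The term $m~n$ is neutral, since its derivation ends with the elimination rule $\multimap_e$; hence, by Lemma~\ref{lem:forward} (CR3), it suffices to show that every $N \in \mathit{Red}(m~n)$ lies in $\sn{B}$.

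Next I would do a case analysis on the step $m~n \hookrightarrow N$. If it is a reduction or a $\leftrightarrows$-rewrite taking place inside $m$, then $N = m'~n$ with $m \hookrightarrow m'$; here $m' \in \sn{A \multimap B}$ directly from the definition (if $m' \hookrightarrow^* \lambda\abstr{x}n'$ then $m \hookrightarrow^* \lambda\abstr{x}n'$, so $(v/x)n' \in \sn{B}$ for all $v \in \sn{A}$), and $\ell(m') + \ell(n) < \ell(m) + \ell(n)$, so the induction hypothesis yields $N \in \sn{B}$; the symmetric case of a step inside $n$ is identical, using closure of $\sn{A}$ under $\hookrightarrow$ exactly as in Lemma~\ref{lem:sn_arrow_ep}. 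Otherwise neither $m$ nor $n$ reduces, so both are values; by inversion on the typing rules a value of type $A \multimap B$ must be a $\lambda$-abstraction $\lambda\abstr{x}m'$, and no other redex is available — in particular rule~\eqref{rule:commBapp} does not apply, because $A \multimap B$ is not of the form $\mathcal B(\p P \pmultimap \p Q)$. Thus the unique reduct is obtained from rule~\eqref{rule:arrow}: $m~n \hookrightarrow (n/x)m'$. Since $\lambda\abstr{x}m' = m \in \sn{A \multimap B}$ and $n \in \sn{A}$, the defining clause of $\sn{A \multimap B}$ gives $(n/x)m' \in \sn{B}$.

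As every element of $\mathit{Red}(m~n)$ belongs to $\sn{B}$, Lemma~\ref{lem:forward} delivers $m~n \in \sn{B}$, closing the induction. I expect no genuine obstacle: the only point requiring attention is that the enumeration of $\mathit{Red}(m~n)$ is exhaustive, which rests on the inversion lemma for values of arrow type together with the observation that, outside the $\mathcal B$-fragment, an application headed by a value of type $A \multimap B$ admits exactly the one redex coming from \eqref{rule:arrow}, so no case is overlooked.
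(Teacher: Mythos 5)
Your proposal is correct and takes essentially the same route as the paper, whose proof simply declares the mixed case ``analogous'' to Lemma~\ref{lem:sn_arrow_ep}: induction on $\ell(m)+\ell(n)$, closure of the reducts under steps inside $m$ or $n$, the top-level $\beta$-step handled by the defining clause of $\sn{A\multimap B}$, and Lemma~\ref{lem:forward} to conclude. Your added observation that rule~\eqref{rule:commBapp} cannot fire at type $A\multimap B$ merely spells out why the pure-case argument transfers verbatim.
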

\begin{proof}
  Analogous to the proof of Lemma~\ref{lem:sn_arrow_ep}.
  \qed
\end{proof}

\begin{lemma}[Adequacy of $\p\with_i^p$]
  \label{lem:sn_with}
  If $\p{m_1} \in \sn{\p P}$ and $\p{m_2} \in \sn{\p Q}$ then $\p{\pair{m_1}{m_2}} \in \sn{\p P \pwith \p Q}$.
\end{lemma}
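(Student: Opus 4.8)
The plan is to verify directly the two clauses that define membership in $\sn{\p P \pwith \p Q}$: that $\pair{\p{m_1}}{\p{m_2}} \in \mathbf{SN}$, and that whenever $\pair{\p{m_1}}{\p{m_2}} \hookrightarrow^{*} \pair{\p{v_1}}{\p{v_2}}$ we have $\p{v_1} \in \sn{\p P}$ and $\p{v_2} \in \sn{\p Q}$. From the hypotheses and the inclusions $\sn{\p P}, \sn{\p Q} \subseteq \mathbf{SN}$ we get $\p{m_1}, \p{m_2} \in \mathbf{SN}$, so the work is concentrated in lifting this to the pair and in relating the components of a pair-shaped reduct to $\p{m_1}, \p{m_2}$.

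First I would establish $\pair{\p{m_1}}{\p{m_2}} \in \mathbf{SN}$. The key observation is that no left-hand side of a $\rightarrow$- or $\leftrightarrows$-rule has a bare pair $\p{\langle -,-\rangle}$ at its head: in particular Rule~\eqref{rule:commwith} applies only to a non-trivial scalar-weighted sum of pair values, not to a single pair. Hence every $\hookrightarrow$-step out of $\pair{\p{m_1}}{\p{m_2}}$ either acts inside $\p{m_1}$, or acts inside $\p{m_2}$, or is an algebraic-normalisation step that momentarily floats a scalar in front via $\aform{\pair{\p{m_1}}{\p{m_2}}} = 1 \sdot \pair{\aform{\p{m_1}}}{\aform{\p{m_2}}}$ and is itself immediately terminating. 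So any reduction sequence from $\pair{\p{m_1}}{\p{m_2}}$ projects onto finite sequences in $\p{m_1}$ and $\p{m_2}$; formally this is a routine induction on $\ell(\p{m_1}) + \ell(\p{m_2})$ together with the size of the term, in exactly the style of Lemmas~\ref{lem:plus_sn} and~\ref{lem:dot_sn}.

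Next, suppose $\pair{\p{m_1}}{\p{m_2}} \hookrightarrow^{*} \pair{\p{v_1}}{\p{v_2}}$. Tracking this reduction through the pair-shaped terms occurring along it, and using the componentwise analysis above (the transient $1 \sdot \pair{-}{-}$ steps are undone by a $\leftrightarrows$-step before the next pair-shaped term), one extracts $\p{m_1} \hookrightarrow^{*} \p{v_1}$ and $\p{m_2} \hookrightarrow^{*} \p{v_2}$. I would then invoke the closure of the interpretation under reduction, i.e. that $\sn{\cdot}$ is closed under $\hookrightarrow$: this is an easy induction on the proposition, since a reduct of a strongly normalising term is strongly normalising and any observable head form ($\lambda$, a pair, an $\inl$ or $\inr$, a tensor, a $\mathcal B(\cdot)$) reachable from the reduct is a fortiori reachable from the original term, so the defining property transfers. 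Applying this to $\p{m_1} \in \sn{\p P}$ and $\p{m_2} \in \sn{\p Q}$ yields $\p{v_1} \in \sn{\p P}$ and $\p{v_2} \in \sn{\p Q}$, which is the remaining clause.

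Combining the two clauses gives $\pair{\p{m_1}}{\p{m_2}} \in \sn{\p P \pwith \p Q}$. The only step that genuinely needs an argument, rather than being immediate from the definition of $\sn{\p P \pwith \p Q}$, is the strong normalisation of $\pair{\p{m_1}}{\p{m_2}}$: because $\hookrightarrow$ interleaves the algebraic equivalence $\leftrightarrows$ with $\rightarrow$ and the former can expose a redex after factoring out a scalar, one has to check that these manoeuvres around the pair constructor do not introduce infinite reductions. This is the same bookkeeping already carried out for $\Plus$ and $\sdot$, so it is routine, but it is where the main (minor) obstacle lies.
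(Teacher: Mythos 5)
Your proposal is correct and follows essentially the same route as the paper's proof: reductions of $\p{\pair{m_1}{m_2}}$ only act componentwise (no rule has a bare pair at its head), so strong normalisation transfers from $\p{m_1},\p{m_2}$ to the pair, and any reduct $\p{\pair{v_1}{v_2}}$ has components reachable from $\p{m_1},\p{m_2}$, whence the value clause of $\sn{\p P \pwith \p Q}$ holds. Your explicit remark that $\sn{\cdot}$ is closed under $\hookrightarrow$ merely spells out a step the paper leaves implicit; it is not a different argument.
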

\begin{proof}
  Note that since a reduction can only apply to $\p{\pair{m_1}{m_2}}$ if it applies to either $\p{m_1}$ or $\p{m_2}$,
  it follows that since $\p{m_1}$ and $\p{m_2}$ strongly terminate, so does $\p{\pair{m_1}{m_2}}$. Moreover, we also have
  that $\p{\pair{m_1}{m_2}} \hookrightarrow^{\ast} \p{\pair{v_1}{v_2}}$ for $\p{v_1} \in \sn{\p P_1}$ and $\p{v_2} \in \sn{\p P_2}$,
  since $\p{\pair{m_1}{m_2}} \hookrightarrow \p{\pair{m_1'}{m_2'}} \implies \p{m_1} \hookrightarrow \p{m_1'}$ and $\p{m_2} \hookrightarrow \p{m_2'}$.
  \qed
\end{proof}

\begin{lemma}[Adequacy of $\p\with_{e1}^p$]
\label{lem:sn_with_e1}
	If $\p{m_1} \in \sn{\p P \pwith \p Q}$ and for all $\p v \in \sn{\p P}$, $\p{(v/x)m_2} \in \sn{\p R}$,
	then $\p{\elimwith^1(m_1 , \abstr{x}m_2)} \in \sn{\p R}$.
\end{lemma}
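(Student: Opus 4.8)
The plan is to mimic the argument used for the other elimination rules, in particular Lemmas~\ref{lem:sn_elimone_p} and~\ref{lem:sn_arrow_ep}. The term $\p{\elimwith^1(m_1,\abstr{x}m_2)}$ is neutral, since its last rule is the elimination rule $\pwith_{e_1}^p$, so by Lemma~\ref{lem:forward} it suffices to show that every element of $\mathit{Red}(\p{\elimwith^1(m_1,\abstr{x}m_2)})$ lies in $\sn{\p R}$. Since $\p{m_1} \in \sn{\p P\pwith\p Q} \subseteq \mathbf{SN}$, the quantity $\ell(\p{m_1})$ is well defined, and I would argue by induction on it.

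First I would classify the one-step reducts $\p N$ of $\p M := \p{\elimwith^1(m_1,\abstr{x}m_2)}$. Because the context grammar permits $\leftrightarrows$ and $\rightarrow$ to act only in the first argument of $\p{\elimwith^1}$, there are just two kinds: (i) $\p{m_1} \hookrightarrow \p{m_1'}$ with $\p N = \p{\elimwith^1(m_1',\abstr{x}m_2)}$ --- this includes the commutation rule~\eqref{rule:commwith}, which rewrites a linear combination of pairs $\p{m_1}$ into a single pair; and (ii) $\p{m_1}$ is already a pair of values $\p{\pair{v}{w}}$ and Rule~\eqref{rule:withleft} fires, so $\p N = \p{(v/x)m_2}$. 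For case (i), $\sn{\p P\pwith\p Q}$ is closed under $\hookrightarrow$ (a routine consequence of the definitions: a reduct of a terminating term terminates, and a reduction sequence to a value form extends), so $\p{m_1'} \in \sn{\p P\pwith\p Q}$, while $\ell(\p{m_1'}) < \ell(\p{m_1})$, and the induction hypothesis gives $\p N \in \sn{\p R}$. For case (ii), $\p{\pair{v}{w}} = \p{m_1} \in \sn{\p P\pwith\p Q}$ unfolds, by the definition of that set, to $\p v \in \sn{\p P}$, whence the hypothesis of the lemma gives $\p N = \p{(v/x)m_2} \in \sn{\p R}$. When $\p{m_1}$ has no redex and is not a pair, $\mathit{Red}(\p M)=\emptyset$ and the conclusion is vacuous. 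In all cases Lemma~\ref{lem:forward} yields $\p M \in \sn{\p R}$.

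The step I expect to be most delicate is not conceptual but the bookkeeping forced by the composite relation $\hookrightarrow$: one must check that the passage to the algebraic form $\aform{\p M}$ in the third clause of Definition~\ref{def:hook}, together with the scalar/sum commutation rule~\eqref{rule:commwith} for $\with$, are genuinely instances of case (i) --- that is, that they strictly decrease $\ell(\p{m_1})$ (so that the induction is well founded) and preserve membership in $\sn{\p P\pwith\p Q}$. Once closure of the interpretations $\sn{-}$ under reduction is in hand --- a fact already used implicitly in the analogous lemmas, worth stating once and reusing --- the argument is the standard CR3 pattern and brings in nothing new beyond the preceding adequacy lemmas.
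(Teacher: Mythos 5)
Your proposal is correct and follows essentially the same route as the paper's proof: induction on $\ell(\p{m_1})$ combined with Lemma~\ref{lem:forward}, splitting the one-step reducts into reductions inside $\p{m_1}$ (handled by the induction hypothesis and closure of $\sn{\p P \pwith \p Q}$ under reduction) and the firing of Rule~\eqref{rule:withleft} when $\p{m_1}$ is a pair of values (handled by the hypothesis on $\p{m_2}$). The extra bookkeeping you note about $\leftrightarrows$, algebraic forms, and Rule~\eqref{rule:commwith} is exactly what the paper silently subsumes under ``a reduction applies to $\p{m_1}$,'' so your treatment is, if anything, slightly more explicit.
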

\begin{proof}
  Proof is by induction on $\ell(\p{m_1})$ and Lemma~\ref{lem:forward}. If a reduction applies
  to $\p{m_1}$, then we have that the resulting term is in $\sn{\p P \pwith \p Q}$. If 
  $\p{m_1}$ is a value, then we have $\p{(m_1/x)m_2} \in \sn{\p R}$. Hence all one step reducts of 
  $\p{\elimwith^1(m_1 , \abstr{x}m_2)}$ are in $\sn{\p R}$. Therefore, the result follows
  from Lemma~\ref{lem:forward}.
  \qed
\end{proof}

\begin{lemma}[Adequacy of $\p\with_{e1}^p$]
  \label{lem:sn_with_e2}
  If $\p{m_1} \in \sn{\p P \pwith \p Q}$ and for all $\p v \in \sn{\p Q}$, $\p{(v/x)m_2} \in \sn{\p R}$,
  then $\p{\elimwith^2(m_1 , \abstr{x}m_2)} \in \sn{\p R}$.
\end{lemma}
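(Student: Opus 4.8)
The plan is to mirror the proof of Lemma~\ref{lem:sn_with_e1}, the only change being the use of Rule~\eqref{rule:withright} in place of Rule~\eqref{rule:withleft} and the corresponding switch from the first to the second component of the pair. Since $\p{\elimwith^2(m_1,\abstr{x}m_2)}$ is a neutral term (its last derivation step is the elimination rule $\pwith_{e_2}^p$), Lemma~\ref{lem:forward} reduces the goal to showing that every one-step reduct of $\p{\elimwith^2(m_1,\abstr{x}m_2)}$ lies in $\sn{\p R}$. I would argue this by induction on $\ell(\p{m_1})$, which is a well-defined natural number because $\p{m_1} \in \sn{\p P \pwith \p Q} \subseteq \SN$.

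For the inductive step I distinguish two cases according to where a $\hookrightarrow$-step acts. If the step rewrites $\p{m_1}$ to some $\p{m_1'}$, then $\p{m_1'} \in \sn{\p P \pwith \p Q}$ — the interpretation sets are closed under reduction, since any value reachable from $\p{m_1'}$ is also reachable from $\p{m_1}$ — and $\ell(\p{m_1'}) < \ell(\p{m_1})$, so the induction hypothesis gives $\p{\elimwith^2(m_1',\abstr{x}m_2)} \in \sn{\p R}$. Otherwise the step fires at the head, which is only possible when $\p{m_1}$ is a value; a value of type $\p P \pwith \p Q$ must be of the form $\p{\pair{v_1}{v_2}}$, since linear-combination values and $\p 0$-scaled values of a $\pwith$-type are already put into pair form by the $\leftrightarrows$/$\rightarrow$ rules, in particular Rule~\eqref{rule:commwith}. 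As $\p{m_1} = \p{\pair{v_1}{v_2}} \in \sn{\p P \pwith \p Q}$ and $\p{m_1}$ reduces to itself, the definition of $\sn{\p P \pwith \p Q}$ yields $\p{v_2} \in \sn{\p Q}$; the unique head reduct is then $\p{\elimwith^2(\pair{v_1}{v_2},\abstr{x}m_2)} \rightarrow \p{(v_2/x)m_2}$ by Rule~\eqref{rule:withright}, which lies in $\sn{\p R}$ by the hypothesis instantiated at $\p v \defeq \p{v_2}$. Having covered all one-step reducts, Lemma~\ref{lem:forward} lets us conclude $\p{\elimwith^2(m_1,\abstr{x}m_2)} \in \sn{\p R}$.

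Because the structure is identical to that of Lemma~\ref{lem:sn_with_e1}, I do not anticipate a genuine obstacle. The only point needing care is the head case: verifying that a well-typed value of a $\pwith$-type is literally a pair $\p{\pair{v_1}{v_2}}$, so that Rule~\eqref{rule:withright} is the sole applicable reduction. This is the same canonical-forms observation already used, implicitly, in the proof of Progress (Theorem~\ref{thm:prog}) and in Lemma~\ref{lem:sn_with_e1}, so it can be invoked directly.
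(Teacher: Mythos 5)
Your proposal is correct and follows essentially the same route as the paper: the paper proves this lemma by repeating the argument of Lemma~\ref{lem:sn_with_e1}, i.e.\ induction on $\ell(\p{m_1})$ combined with Lemma~\ref{lem:forward}, splitting on whether the step occurs inside $\p{m_1}$ or at the head, where Rule~\eqref{rule:withright} yields $\p{(v_2/x)m_2}\in\sn{\p R}$ by the hypothesis. Your write-up is in fact slightly more explicit than the paper's (closure of the interpretation under reduction, the canonical pair form of the value), but there is no difference in method.
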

\begin{proof}
  Proof follows by the same argument as that for the proof of Lemma~\ref{lem:sn_with_e1}.
  \qed
\end{proof}

\begin{lemma}[Adequacy of $\oplus_{i1}^m$]
  \label{lem:sn_oplus_l}
  If $m \in \sn{A}$, then $\inl(m) \in \sn{A \oplus B}$.
\end{lemma}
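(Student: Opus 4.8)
The plan is to follow the same pattern as the preceding adequacy lemmas (cf.\ Lemma~\ref{lem:sn_with}), establishing the two requirements in the definition of $\sn{A \oplus B}$: that $\inl(m)$ is strongly normalising, and that every reduct of $\inl(m)$ that is $\inl$-headed (resp.\ $\inr$-headed) carries an argument in $\sn A$ (resp.\ $\sn B$). I would do this directly, without invoking Lemma~\ref{lem:forward}, since $\inl(m)$ is an introduction and not a neutral term.

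First I would argue $\inl(m) \in \mathbf{SN}$. Since $m \in \sn A \subseteq \mathbf{SN}$, any reduction from $\inl(m)$ either takes place inside $m$---and only finitely many such steps can occur in a row---or is a top-level application of the commutation rule~\eqref{rule:comminl}, $\inl(\textstyle\sum_i p_i \sdot v_{bi}) \rightarrow \textstyle\sum_i p_i \sdot \inl(v_{bi})$, possibly after a $\leftrightarrows$-rewrite of $m$ to $\aform{m}$ (needed exactly when $m$ settles at a genuine linear combination with bare summands). That rule can only fire once the argument has become a value, and its right-hand side is itself a value---each $\inl(v_{bi})$ is a base value, since base values are also of the shape $v_n$ that $\inl$ accepts---hence strongly normalising. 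So no infinite reduction sequence issues from $\inl(m)$.

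Next I would characterise the $\inl$-headed reducts. The key observation is that once~\eqref{rule:comminl} (or the surrounding algebraic rewriting) is applied the head constructor is no longer $\inl$ and can never become $\inl$ again; so if $\inl(m) \hookrightarrow^* \inl(v)$ then every intermediate step is internal to the argument, whence $m \hookrightarrow^* v$. Since $m \in \sn A$, and each interpretation $\sn T$ is closed under $\hookrightarrow^*$---immediate from the definition, as a reduct of a term in $\sn T$ is again in $\mathbf{SN}$ and has a subset of the original term's reducts---we conclude $v \in \sn A$. The condition for $\inr$ is vacuous, because $\inl(m)$ never reduces to an $\inr$-headed term. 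Combining with $\inl(m) \in \mathbf{SN}$ gives $\inl(m) \in \sn{A \oplus B}$.

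I expect the main obstacle to be the careful bookkeeping around the commutation rule~\eqref{rule:comminl} and its interplay with the algebraic-form rewriting $\leftrightarrows$: one must check that applying it (together with the accompanying $\leftrightarrows$-steps) never yields a non-terminating term and never re-exposes an $\inl$ at the head with a worse argument, and one must handle the degenerate single-summand case---$\textstyle\sum_i p_i \sdot \inl(v_{bi})$ with one summand of coefficient $1$, which collapses back to $\inl(v_{b1})$---so that the claimed description of $\inl$-headed reducts still holds. The symmetric statement for $\oplus_{i2}^m$ will then follow by the same argument with $\inr$ and~\eqref{rule:comminr} in place of $\inl$ and~\eqref{rule:comminl}.
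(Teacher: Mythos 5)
Your proof is correct and takes essentially the same route as the paper's: the paper's one-line argument simply notes that reductions on $\inl(m)$ only happen through $m$, so strong normalisation and the condition on $\inl$-headed reducts are inherited from $m \in \sn{A}$. If anything, you are more careful than the paper, whose proof silently ignores the top-level commutation rule~\eqref{rule:comminl} that your bookkeeping around linear combinations and the degenerate single-summand case handles explicitly.
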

\begin{proof}
  Note that a reduction can only apply to $\inl(m)$ through $m$. Hence, $\inl(m)$ strongly terminates.
  Moreover, since $\inl(m) \hookrightarrow \inl(m') \implies m \hookrightarrow m'$, we have $\inl(m) \in \sn{A \oplus B}$.
  \qed
\end{proof}

\begin{lemma}[Adequacy of $\oplus_{i2}^m$]
  \label{lem:sn_oplus_r}
  If $m \in \sn{B}$, then $\inr(m) \in \sn{A \oplus B}$.
\end{lemma}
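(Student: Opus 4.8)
The plan is to mirror, with $\inl$ replaced by $\inr$ and $A$ by $B$, the short argument given for Lemma~\ref{lem:sn_oplus_l}. Unfolding the definition of $\sn{A \oplus B}$, I must check three things about $\inr(m)$: that it belongs to $\mathbf{SN}$, that $\inr(m) \hookrightarrow^* \inl(v)$ forces $v \in \sn{A}$, and that $\inr(m) \hookrightarrow^* \inr(v)$ forces $v \in \sn{B}$.

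For termination, I would observe that $\hookrightarrow$ can act on $\inr(m)$ in only two ways: a contextual step through the context $\inr(K)$, which reduces inside $m$, or the commutation rule~\eqref{rule:comminr}, which fires only once $m$ has been rewritten to a value $\sum_{i=1}^{n} p_i \sdot v_{bi}$ and produces $\sum_{i=1}^{n} p_i \sdot \inr(v_{bi})$. In the first case the step strictly descends into $m$, which is in $\mathbf{SN}$ by hypothesis, so such steps cannot be iterated indefinitely; in the second, each $\inr(v_{bi})$ is built from a value, and the resulting linear combination is strongly normalising by Lemmas~\ref{lem:dot_sn} and~\ref{lem:plus_sn}. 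Hence $\inr(m) \in \mathbf{SN}$.

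The two elimination clauses are then immediate. No reduction rule rewrites a term with head constructor $\inr$ into one with head constructor $\inl$, and contextual steps preserve the head constructor, so $\inr(m) \hookrightarrow^* \inl(v)$ never occurs and that clause is vacuous. If instead $\inr(m) \hookrightarrow^* \inr(v)$, every step in the sequence must be a contextual step inside the argument, so $m \hookrightarrow^* v$, and the hypothesis $m \in \sn{B}$ gives $v \in \sn{B}$. Together these three observations yield $\inr(m) \in \sn{A \oplus B}$.

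I do not anticipate any genuine difficulty: the only point requiring care is the interaction with the commutation rule~\eqref{rule:comminr}, and that bookkeeping is already absorbed into the auxiliary strong-normalisation lemmas for $\Plus$ and $\sdot$. As with Lemma~\ref{lem:sn_oplus_l}, the final write-up can simply state that the proof is analogous.
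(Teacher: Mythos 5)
Your proof is correct and takes essentially the same route as the paper, which simply declares this case analogous to Lemma~\ref{lem:sn_oplus_l}: reductions of $\inr(m)$ act inside $m$ (or via the commutation rule~\eqref{rule:comminr}), the head constructor $\inr$ is never turned into $\inl$, and $\inr(m) \hookrightarrow^* \inr(v)$ traces back to $m \hookrightarrow^* v$, giving $v \in \sn{B}$. If anything, your write-up is slightly more careful than the paper's two-line argument, since you explicitly handle rule~\eqref{rule:comminr} via Lemmas~\ref{lem:dot_sn} and~\ref{lem:plus_sn} rather than asserting that reductions only occur through $m$.
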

\begin{proof}
  This result follows using the same argument as the one for the proof of Lemma~\ref{lem:sn_oplus_l}.
  \qed
\end{proof}

\begin{lemma}[Adequacy of $\oplus_{e}^m$]
  \label{lem:sn_oplus_elim}
  If $m \in \sn{A \oplus B}$ and for all $m_1 \in \sn{A}, m_2 \in \sn{B}$, we have $(m_1/x)n \in \sn{C}$ and $(m_2/y)o \in \sn{C}$,
  then $\elimplus(m, \abstr{x}n, \abstr{y}o) \in \sn{C}$.
\end{lemma}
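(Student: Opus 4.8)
The plan is to reuse the reducibility-candidate method exactly as in the preceding adequacy lemmas. Write $M := \elimplus(m, \abstr{x}n, \abstr{y}o)$. Since $M$ is built with the elimination rule $\oplus_e^m$, it is \emph{neutral}, so by Lemma~\ref{lem:forward} (CR3) it suffices to show that every one-step $\hookrightarrow$-reduct of $M$ belongs to $\sn{C}$; the membership $M \in \mathbf{SN}$ then comes for free. Note that, by the context grammar, the branches $n$ and $o$ are not reduction positions, so every reduction of $M$ either acts inside the scrutinee $m$ or fires at the head; in particular the only piece on which we must induct is $m$, and $m \in \sn{A \oplus B} \subseteq \mathbf{SN}$ provides a measure $\ell(m)$.

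I would then induct on $\ell(m)$ and, appealing to Lemma~\ref{lem:forward}, classify the one-step reducts of $M$. (i) If the step is inside $m$, say $m \hookrightarrow m'$, then $m' \in \sn{A \oplus B}$ (closure of the interpretations under reduction, the standard CR2 property established jointly with these lemmas), $\ell(m') < \ell(m)$, and the induction hypothesis gives $\elimplus(m', \abstr{x}n, \abstr{y}o) \in \sn{C}$. (ii) If $m$ is $\inl(v)$ (resp. $\inr(v)$), the head step \eqref{rule:oplusleft} (resp. \eqref{rule:oplusright}) produces $(v/x)n$ (resp. $(v/y)o$); since $m = \inl(v) \in \sn{A \oplus B}$ we get $v \in \sn{A}$ (resp. $v \in \sn{B}$), so the hypothesis yields $(v/x)n \in \sn{C}$. (iii) If $m$ is a linear combination $\sum_i p_i \sdot v_{bi}$, the commutation step \eqref{rule:commoplus} produces $\sum_i p_i \sdot \elimplus(v_{bi}, \abstr{x}n, \abstr{y}o)$; each $v_{bi}$, being a base value of type $A \oplus B$, is $\inl(u_i)$ or $\inr(u_i)$ (or a variable, in which case the $\elimplus$ is already normal), so $\elimplus(v_{bi}, \abstr{x}n, \abstr{y}o)$ is neutral with sole reduct $(u_i/x)n$ or $(u_i/y)o$, which lies in $\sn{C}$ once $u_i \in \sn{A}$ or $u_i \in \sn{B}$; then Lemma~\ref{lem:forward} gives $\elimplus(v_{bi}, \abstr{x}n, \abstr{y}o) \in \sn{C}$, and Lemmas~\ref{lem:sn_bullet} and~\ref{lem:sn_plus} re-assemble the sum into $\sn{C}$. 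The bookkeeping caused by rewriting $M$ to its algebraic form $1 \sdot \elimplus(\aform{m}, \abstr{x}\aform{n}, \abstr{y}\aform{o})$ before a $\rightarrow$-step is routine and folds into the cases above via Lemmas~\ref{lem:sn_plus} and~\ref{lem:sn_bullet}.

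The main obstacle is case (iii): the definition of $\sn{A \oplus B}$ constrains only reducts literally of the form $\inl(v)$ or $\inr(v)$ and says nothing when the value $m \in \sn{A \oplus B}$ is a genuine linear combination $\sum_i p_i \sdot v_{bi}$, so one cannot directly conclude that the injected components $u_i$ of the summands lie in $\sn{A}$ or $\sn{B}$. The clean remedy is to establish, using Preservation (Theorem~\ref{thm:sub-red}) together with the value grammar, the auxiliary fact that whenever a value of type $A \oplus B$ has a summand $\inl(u)$ (resp. $\inr(u)$) then $u \in \sn{A}$ (resp. $u \in \sn{B}$); equivalently, one strengthens the $\oplus$ clause in the definition of $\sn{-}$ to quantify over such summands. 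With that in hand, cases (i)--(iii) all close and Lemma~\ref{lem:forward} completes the proof.
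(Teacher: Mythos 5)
Your argument follows essentially the same route as the paper: induction on $\ell(m)$, the CR3-style Lemma~\ref{lem:forward} applied to the neutral term $\elimplus(m,\abstr{x}n,\abstr{y}o)$, a case split on whether the step is internal to $m$, a head $\beta$-step via rules~\eqref{rule:oplusleft}/\eqref{rule:oplusright}, or the commutation step~\eqref{rule:commoplus}, with Lemmas~\ref{lem:sn_bullet} and~\ref{lem:sn_plus} reassembling the linear combination. The only substantive divergence is your third paragraph: the ``main obstacle'' you identify in case (iii) is not actually there, because throughout this appendix $\hookrightarrow$ is the \emph{extended} (ultra-reduction) relation, which includes the erasure rules $M \Plus N \hookrightarrow M$, $M \Plus N \hookrightarrow N$ and $\alpha \sdot M \hookrightarrow M$. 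Hence a value $m = \sum_i p_i \sdot v_{bi}$ with $v_{bi} = \inl(u_i)$ satisfies $m \hookrightarrow^* \inl(u_i)$, and the existing clause of $\sn{A \oplus B}$ applied to $m$ already gives $u_i \in \sn{A}$ (and symmetrically for $\inr$). So neither the proposed strengthening of the $\oplus$ clause of $\sn{-}$ nor an appeal to Preservation (Theorem~\ref{thm:sub-red}) is needed—indeed altering the definition of the reducibility sets would force you to revisit all the surrounding adequacy lemmas, whereas the point of adopting Girard's ultra-reduction at the start of the appendix is precisely to make the summands of an algebraic value reachable by reduction. With that observation, your cases (i)--(iii) close exactly as in the paper's proof.
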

\begin{proof}
  Proof is by induction on $\ell(m)$ and an application of Lemma~\ref{lem:forward}. If a reduction applies
  to $m$ then the resulting term is in $\sn{A \oplus B}$ as a result of the induction hypothesis. If,
  $m$ is a value, and the resulting term is of the form $(m_1/x)n$ or $(m_2/y)o$. Otherwise, $m$ is of the form
  $\textstyle\sum_{i=1}^n v_i$, where $v_i$ is of the form $\inl(v)$ or $\inr(v)$. Then $\elimplus(m, \abstr{x}n, \abstr{y}o)$
  reduces to $\textstyle\sum_{i=1}^n p_i \sdot \elimplus(v_i, \abstr{x}n, \abstr{y}o)$. This term is in $\sn{A \oplus B}$ by the
  induction hypothesis, Lemma~\ref{lem:sn_bullet} and Lemma~\ref{lem:sn_plus}. Hence, all one-step reducts of
  $\elimplus(m, \abstr{x}n, \abstr{y}o)$ are in $\sn{C}$. Therefore, the claim follows by Lemma~\ref{lem:forward}.
  \qed
\end{proof}

\begin{lemma}[Adequacy of $\p\otimes_i^p$]
  \label{lem:sn_tens_p}
  If $\p{m_1} \in \sn{\p P}$ and $\p{m_2} \in \sn{\p Q}$ then $\p{m_1 \otimes m_2} \in \sn{\p P \p \otimes \p Q}$.
\end{lemma}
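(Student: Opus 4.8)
The plan is to verify the two requirements in the definition of $\sn{\p{P \otimes Q}}$: that $\p{m_1 \otimes m_2}$ lies in $\mathbf{SN}$, and that every reduct of $\p{m_1 \otimes m_2}$ of the form $\p{v_1 \otimes v_2}$ has $\p{v_1} \in \sn{\p P}$ and $\p{v_2} \in \sn{\p Q}$. Unlike the corresponding statement for $\pwith$ (Lemma~\ref{lem:sn_with}), where a reduct of $\p{\pair{m_1}{m_2}}$ is always again a pair, $\otimes$ interacts with the algebraic layer: the equivalences $(\p\alpha \sdot \p{n}) \otimes \p{n'} \leftrightarrows \p\alpha \sdot \p{n \otimes n'}$, $\p{n} \otimes (\p\alpha \sdot \p{n'}) \leftrightarrows \p\alpha \sdot \p{n \otimes n'}$ and $(\p{n_1} \Plus \p{n_2}) \otimes \p{n} \leftrightarrows \p{n_1 \otimes n} \Plus \p{n_2 \otimes n}$ (with its right-hand twin) let scalars and sums migrate to the outside of the tensor, so a $\hookrightarrow$-reduct of $\p{m_1 \otimes m_2}$ need not itself be a tensor. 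The whole proof is organised around controlling these migrations, which enter $\hookrightarrow$ only through the algebraic-form clause of Definition~\ref{def:hook}.

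For strong normalisation I would argue by a lexicographic induction whose outer component is $\ell(\p{m_1}) + \ell(\p{m_2})$ and whose inner component is the combined size of $\p{m_1}$ and $\p{m_2}$, in the style of Lemmas~\ref{lem:plus_sn} and~\ref{lem:dot_sn}. Consider any $\p N$ with $\p{m_1 \otimes m_2} \hookrightarrow \p N$. Since $\otimes$-introduction has no $\rightarrow$-redex of its own, a $\rightarrow$-step can only fire inside $\p{m_1}$ or inside $\p{m_2}$, strictly decreasing the corresponding length, so the outer induction hypothesis applies (using that $\mathbf{SN}$ is closed under reduction). Otherwise the step passes through the algebraic form; the resulting $\p N$ is, up to the inner $\leftrightarrows$-bookkeeping made precise by Theorem~\ref{lem:unique_alg}, a linear combination $\textstyle\sum \p{\alpha_k} \sdot \p{n_1^k \otimes n_2^k}$ in which each $\p{n_1^k}$ is a subterm of (the algebraic form of) $\p{m_1}$ and each $\p{n_2^k}$ a subterm of $\p{m_2}$; each $\p{n_1^k \otimes n_2^k}$ is then handled by the inner (smaller-size) hypothesis, and $\p N \in \mathbf{SN}$ follows by finitely many applications of Lemmas~\ref{lem:dot_sn} and~\ref{lem:plus_sn}. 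Hence every one-step reduct of $\p{m_1 \otimes m_2}$ is strongly normalising, so $\p{m_1 \otimes m_2} \in \mathbf{SN}$.

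For the value condition, suppose $\p{m_1 \otimes m_2} \hookrightarrow^* \p{v_1 \otimes v_2}$ with $\p{v_1 \otimes v_2}$ a value, so $\p{v_1}$ and $\p{v_2}$ are $\p{v_n}$-values (in particular not linear combinations). I would show that such a reduction sequence must factor as $\p{m_1} \hookrightarrow^* \p{v_1}$ and $\p{m_2} \hookrightarrow^* \p{v_2}$: if at some point a scalar or a summand had been extracted to the outside of the tensor, the endpoint would carry an outer $\p\alpha \sdot (-)$ or $(-) \Plus (-)$ layer, since distinct summands remain distinct and scalars never disappear under $\leftrightarrows$, contradicting that the endpoint is the bare tensor $\p{v_1 \otimes v_2}$; hence every step acts within one of the two components. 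From the factorisation and the hypotheses $\p{m_1} \in \sn{\p P}$, $\p{m_2} \in \sn{\p Q}$ we get $\p{v_1} \in \sn{\p P}$ and $\p{v_2} \in \sn{\p Q}$, which is what was required.

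I expect the main obstacle to be exactly the factorisation claim in the last step, together with its counterpart in the strong-normalisation argument: because $\hookrightarrow$ interleaves the deterministic $\rightarrow$ with bursts of the symmetric $\leftrightarrows$ when passing to an algebraic form, one has to rule out transient ``detours'' in which $\p{m_1}$ (or $\p{m_2}$) becomes a linear combination, is distributed across the tensor, and only later collapses back; formally this amounts to a careful inspection of which $\hookrightarrow$-steps are available on a term whose head constructor is $\otimes$. Everything else is a routine adaptation of the arguments already used for $\Plus$, $\sdot$, and $\pwith$ in Lemmas~\ref{lem:plus_sn}, \ref{lem:dot_sn}, and~\ref{lem:sn_with}.
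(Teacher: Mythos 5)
Your overall route is the same as the paper's: the paper disposes of this lemma in two lines by asserting that any reduction step on $\p{m_1 \otimes m_2}$ acts inside $\p{m_1}$ or $\p{m_2}$ (mirroring the $\pwith$ case, Lemma~\ref{lem:sn_with}), so both termination and the value condition are inherited componentwise. You correctly sense that the algebraic layer makes this assertion delicate, but the argument you substitute for it contains a false step. Your factorisation claim rests on the assertion that ``scalars never disappear under $\leftrightarrows$''; they do. The rules $\alpha \sdot (\beta \sdot M) \leftrightarrows \alpha\beta \sdot M$, $1 \sdot M \leftrightarrows M$, $\alpha \sdot M \Plus \beta \sdot M \leftrightarrows (\alpha+\beta)\sdot M$, and the erasure of $0$-weighted summands let scalars extracted from the two components cancel. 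Concretely, $(\tfrac{\p 1}{\p 2} \sdot \p\ast) \p\otimes (\p 2 \sdot \p\ast) \leftrightarrows \p\ast \p\otimes \p\ast$, so by the second clause of Definition~\ref{def:hook} we get $(\tfrac{\p 1}{\p 2} \sdot \p\ast) \p\otimes (\p 2 \sdot \p\ast) \hookrightarrow \p\ast \p\otimes \p\ast$, even though $\tfrac{\p 1}{\p 2} \sdot \p\ast$ is $\hookrightarrow$-normal and in particular $\tfrac{\p 1}{\p 2} \sdot \p\ast \not\hookrightarrow^* \p\ast$ (algebraic forms are preserved by $\leftrightarrows$, Theorem~\ref{lem:unique_alg}). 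Hence neither ``every step acts within one of the two components'' nor the strict factorisation $\p{m_1} \hookrightarrow^* \p{v_1}$, $\p{m_2} \hookrightarrow^* \p{v_2}$ holds, and the final paragraph of your value-condition argument does not go through as written.

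The gap is repairable, but not by the persistence argument you give: one has to note that when $\p{v_1 \otimes v_2}$ is a value its components are $\p{v_n}$-values, and that at the types where $\sn{\cdot}$ imposes a non-trivial condition ($\pmultimap$, $\pwith$) the commutation rules \eqref{rule:commarrow} and \eqref{rule:commwith} push any residual scalars and sums \emph{inside} the component, while at $\one$ and $\p\otimes$ the required membership is vacuous or handled by recursion; alternatively, prove the weaker statement that $\p{m_1}$ reduces to a linear combination whose base summand is $\p{v_1}$, and argue membership in $\sn{\p P}$ from that. To be fair, the paper's own proof silently asserts exactly the componentwise property you tried to establish, so your instinct to scrutinise it was sound; it is your justification, not your goal, that fails. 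Your strong-normalisation half has a similar soft spot (you use, without proof, that the base summands of $\aform{\p{m_1}}$ are themselves strongly normalising), though that part is otherwise a routine induction in the style of Lemmas~\ref{lem:plus_sn} and~\ref{lem:dot_sn}.
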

\begin{proof}
  Note that since a reduction can only apply to $\p{m_1 \otimes m_2}$ if it applies to either $\p{m_1}$ or $\p{m_2}$,
  it follows that since $\p{m_1}$ and $\p{m_2}$ strongly terminate, so does $\p{m_1 \otimes m_2}$. Moreover, we also have
  that $\p{m_1 \otimes m_2} \hookrightarrow^{\ast} \p{v_1 \otimes v_2}$ for $\p{v_1} \in \sn{\p P_1}$ and $\p{v_2} \in \sn{\p P_2}$,
  since $\p{m_1 \otimes m_2} \hookrightarrow \p{m_1' \otimes m_2'} \implies \p{m_1} \hookrightarrow \p{m_1'}$ and $\p{m_2} \hookrightarrow \p{m_2'}$.
  \qed
\end{proof}

\begin{lemma}[Adequacy of $\otimes_i^m$]
  \label{lem:sn_tens_m}
  If $m \in \sn{A}$ and $n \in \sn{B}$ then $m \otimes n \in \sn{A \otimes B}$.
\end{lemma}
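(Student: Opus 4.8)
The plan is to adapt, almost verbatim, the argument given for the pure-fragment analogue Lemma~\ref{lem:sn_tens_p}. First I would note that $\sn{A} \subseteq \SN$ and $\sn{B} \subseteq \SN$, so both $m$ and $n$ are strongly normalising, and then establish $m \otimes n \in \SN$. The key observation is that a $\hookrightarrow$-step issuing from $m \otimes n$ has one of two shapes: it either fires strictly inside $m$ or strictly inside $n$ — in which case $\ell(m) + \ell(n)$ strictly decreases — or it uses one of the $\leftrightarrows$-rearrangements that pull a scalar out of a factor or distribute a $\Plus$ across $\otimes$, in which case the reduct is built out of strongly normalising subterms using only $\Plus$, $\sdot$, and strictly smaller tensors, hence is strongly normalising by Lemmas~\ref{lem:plus_sn} and~\ref{lem:dot_sn}. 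An induction on $\ell(m) + \ell(n)$, with a secondary induction on term size to absorb the algebraic rearrangements, then yields $m \otimes n \in \SN$.

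Next I would verify the defining clause of $\sn{A \otimes B}$: whenever $m \otimes n \hookrightarrow^{*} v \otimes w$ one has $v \in \sn{A}$ and $w \in \sn{B}$. As in the pure case, a tensor can reduce to a tensor only by reducing one of its two immediate factors — every $\leftrightarrows$-rule that touches a top-level $\otimes$ produces a scalar multiple or a sum, not a tensor — so the reduction sequence $m \otimes n \hookrightarrow^{*} v \otimes w$ splits into $m \hookrightarrow^{*} v$ and $n \hookrightarrow^{*} w$. Then $v$ and $w$ are values reachable by reduction from $m \in \sn{A}$ and $n \in \sn{B}$ respectively, and since each $\sn{-}$ interpretation is closed under $\hookrightarrow$ (a routine induction on the proposition, as every clause only constrains the reducts of a term), we conclude $v \in \sn{A}$ and $w \in \sn{B}$, which is exactly the membership condition for $m \otimes n \in \sn{A \otimes B}$.

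The step I expect to require the most care — and the only genuine divergence from the pure proof — is the bookkeeping around how $\otimes$ interacts with the algebraic equivalence $\leftrightarrows$ in the mixed fragment, in particular the rules commuting non-negative real scalars and $\Plus$ past $\otimes$ together with the interleaving in the definition of $\hookrightarrow$ (Definition~\ref{def:hook}), which can in a single step both rearrange a tensor into algebraic form and fire a $\rightarrow$-redex on one of the resulting summands. These situations should nonetheless reduce to the already-established closure properties of $\SN$ under $\Plus$ and $\sdot$ (Lemmas~\ref{lem:plus_sn} and~\ref{lem:dot_sn}) together with the secondary induction on term size, so no new technique is needed and the lemma follows, like its pure counterpart Lemma~\ref{lem:sn_tens_p}, by a short case analysis.
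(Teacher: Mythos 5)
Your proposal is correct and follows essentially the same route as the paper, which simply invokes the pure-fragment argument of Lemma~\ref{lem:sn_tens_p}: reductions on a tensor happen inside its factors, so strong normalisation and the value-clause of $\sn{A \otimes B}$ transfer from $m \in \sn{A}$ and $n \in \sn{B}$. Your additional bookkeeping for the $\leftrightarrows$-rearrangements (handled via Lemmas~\ref{lem:plus_sn} and~\ref{lem:dot_sn}) and the closure of $\sn{-}$ under $\hookrightarrow$ only makes explicit what the paper's terse proof leaves implicit.
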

\begin{proof}
  The result follows using the same argument as the one for Lemma~\ref{lem:sn_tens_m}.
  \qed
\end{proof}

\begin{lemma}[Adequacy of $\p\otimes_{e}^p$]
  \label{lem:sn_tenselim_p}
  If $\p{m_1} \in \sn{\p P \p \otimes \p Q}$ and for all $\p{v_1} \in \sn{\p P}, \p{v_2} \in \sn{\p Q}$
  we have $\p{(v_1/x,v_2/y)m_2} \in \sn{\p R}$ then $\p{\elimtens(m_1, \abstr{xy} m_2)}$ $\in \sn{\p R}$.
\end{lemma}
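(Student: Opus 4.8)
The plan is to follow the template of the companion elimination lemmas, in particular Lemmas~\ref{lem:sn_with_e1} and~\ref{lem:sn_oplus_elim}. Since $\p{\elimtens(m_1, \abstr{xy} m_2)}$ is a neutral proof term, by Lemma~\ref{lem:forward} it suffices to show that every one-step reduct of it lies in $\sn{\p R}$. I would argue by induction on $\ell(\p{m_1}) + \ell(\p{m_2})$, using first that $\p{m_1} \in \mathbf{SN}$ by hypothesis and that $\p{m_2} \in \mathbf{SN}$ as well --- otherwise an infinite reduction of $\p{m_2}$ would lift to one of $\p{(v/x, w/y)m_2} \in \sn{\p R}$, the sets $\sn{\p P}$ and $\sn{\p Q}$ being nonempty (they contain variables). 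I would also use the routine fact that each $\sn{T}$ is closed under $\hookrightarrow$: a reduct of a strongly normalising term is strongly normalising, and the ``reduces to a value'' conditions transfer backwards along $\hookrightarrow^{*}$.

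Next I would classify the one-step reducts of $\p{\elimtens(m_1, \abstr{xy} m_2)}$. First, a reduction inside $\p{m_1}$ gives $\p{m_1} \hookrightarrow \p{m_1'}$ with $\p{m_1'} \in \sn{\p P \p \otimes \p Q}$ and $\ell(\p{m_1'}) < \ell(\p{m_1})$, so the induction hypothesis applies; this also absorbs reductions occurring inside the components of $\p{m_1}$ when $\p{m_1}$ is already a value. Second, a reduction inside $\p{m_2}$ gives $\p{m_2} \hookrightarrow \p{m_2'}$; by stability of reduction under value substitution, $\p{(v/x, w/y)m_2} \hookrightarrow \p{(v/x, w/y)m_2'}$, so the premise of the lemma is inherited by $\p{m_2'}$ (using closure of $\sn{\p R}$ under $\hookrightarrow$), and again the induction hypothesis applies. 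Third, the head redex may fire, which forces $\p{m_1}$ to be a value; inspecting the pure-value grammar at type $\p P \p \otimes \p Q$, $\p{m_1}$ is either a tensor $\p{v_n^1 \otimes v_n^2}$ or a genuine linear combination $\sum_{i=1}^{n} \p{a_i} \sdot \p{v_{bi}}$ (a bare variable yields no head redex and has already been covered). If $\p{m_1} = \p{v_n^1 \otimes v_n^2}$, rule~\eqref{rule:tensor} produces $\p{(v_n^1/x, v_n^2/y)m_2}$, and from $\p{v_n^1 \otimes v_n^2} \in \sn{\p P \p \otimes \p Q}$ (it reduces to itself) we read off $\p{v_n^1} \in \sn{\p P}$ and $\p{v_n^2} \in \sn{\p Q}$, so the premise gives $\p{(v_n^1/x, v_n^2/y)m_2} \in \sn{\p R}$. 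If $\p{m_1} = \sum_{i=1}^{n} \p{a_i} \sdot \p{v_{bi}}$, rule~\eqref{rule:commtensor} produces $\sum_{i=1}^{n} \p{a_i} \sdot \p{\elimtens(v_{bi}, \abstr{xy} m_2)}$, and by Lemmas~\ref{lem:sn_bullet} and~\ref{lem:sn_plus} it suffices to put each $\p{\elimtens(v_{bi}, \abstr{xy} m_2)}$ in $\sn{\p R}$; since $\p{m_1} \hookrightarrow^{*} \p{v_{bi}}$ via the $\Plus$- and $\sdot$-elimination rules of the \emph{extended} relation, we obtain both $\p{v_{bi}} \in \sn{\p P \p \otimes \p Q}$ (by closure under $\hookrightarrow$) and $\ell(\p{v_{bi}}) < \ell(\p{m_1})$, so the induction hypothesis applies to $\p{\elimtens(v_{bi}, \abstr{xy} m_2)}$.

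The hard part is this last sub-case. Without the projection rules $M \Plus N \hookrightarrow M$, $M \Plus N \hookrightarrow N$ and $\alpha \sdot M \hookrightarrow M$ of the extended relation, one could neither drive the induction down to the summands $\p{v_{bi}}$ with a decreasing measure, nor conclude that each $\p{v_{bi}}$ --- and hence the components of any tensor among them --- belongs to the appropriate reducibility set, because the defining clause of $\sn{\p P \p \otimes \p Q}$ only constrains reductions directly to a tensor value and says nothing about the constituents of a linear combination. These extra rules are exactly what repairs the gap, so the whole proof has to be carried out with respect to the extended reduction relation.
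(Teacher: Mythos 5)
Your proof is correct and takes essentially the same route as the paper's: Lemma~\ref{lem:forward} plus induction on the length, with the tensor-value case discharged by the substitution hypothesis and the linear-combination case by rule~\eqref{rule:commtensor} together with Lemmas~\ref{lem:sn_bullet} and~\ref{lem:sn_plus}. The only differences are cosmetic: you make explicit the role of the extended relation's projection rules (which the paper leaves implicit), and you add a case for reductions inside $\p{m_2}$ that cannot in fact arise, since evaluation contexts never enter the second argument of $\elimtens$.
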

\begin{proof}
  Again proof is by induction on $\ell(\p{m_1})$ and an application of Lemma~\ref{lem:forward}. If a reduction applies
  to $\p{m_1}$ then the resulting term is in $\sn{\p{P \otimes Q}}$ as a result of the induction hypothesis. If,
  $\p{m_1}$ is a value, and the resulting term is of the form $\p{(v_1/x,v_2/y)m_2}$. Otherwise $\p{m_1}$ is of the 
  form $\textstyle\sum_{i=1}^n \p{a_i} \sdot \p{v_n}^1 \p \otimes \p{v_n}^2$. In this case $\p{\elimtens(m_1, \abstr{xy} m_2)}$ reduces to 
  $\textstyle\sum_{i=1}^n \p{a_i} \sdot \p{\elimtens(} \p{v_n}^1 \p \otimes \p{v_n}^2 \p{, \abstr{xy}m_2)}$. By induction hypothesis, Lemma~\ref{lem:sn_bullet} and
  Lemma~\ref{lem:sn_plus}, this term is in $\sn{\p R}$. Hence, all one-step reducts of
  $\p{\elimtens(m_1, \abstr{xy} m_2)}$ are in $\sn{\p R}$. Therefore, the claim follows by Lemma~\ref{lem:forward}.
  \qed
\end{proof}

\begin{lemma}[Adequacy of $\otimes_{e}^m$]
  \label{lem:sn_tenselim_m}
  If $m_1 \in \sn{A \otimes B}$ and for all $m \in \sn{A}, n \in \sn{B}$
  $(m/x,n/y)m_2 \in \sn{C}$ then $\elimtens(m_1, \abstr{xy} m_2) \in \sn{C}$.
\end{lemma}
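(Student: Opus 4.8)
The plan is to reprise the proof of Lemma~\ref{lem:sn_tenselim_p} almost verbatim, transposed to the mixed fragment. The term $\elimtens(m_1,\abstr{xy}m_2)$ is neutral (its last rule is the elimination $\otimes_e^m$), so by Lemma~\ref{lem:forward} it is enough to show that every one-step $\hookrightarrow$-reduct of $\elimtens(m_1,\abstr{xy}m_2)$ lies in $\sn{C}$; since $m_1\in\sn{A\otimes B}\subseteq\SN$, the length $\ell(m_1)$ is finite and I will run the reduct analysis by induction on $\ell(m_1)$.

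For the inductive step I split on which redex fires. If the step takes place strictly inside $m_1$, sending it to $m_1'$, then $m_1'\in\sn{A\otimes B}$ (the interpretations are closed under reduction) with $\ell(m_1')<\ell(m_1)$, so the induction hypothesis gives $\elimtens(m_1',\abstr{xy}m_2)\in\sn{C}$. Otherwise $m_1$ must already be a value of type $A\otimes B$, and I split on its shape. If $m_1$ has the form $v_n\otimes w_n$, the head step is Rule~\eqref{rule:tensor}, yielding $(v_n/x,w_n/y)m_2$; here $m_1\in\sn{A\otimes B}$ gives $v_n\in\sn{A}$ and $w_n\in\sn{B}$, so the hypothesis on $m_2$ delivers $(v_n/x,w_n/y)m_2\in\sn{C}$. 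If $m_1$ is a genuine linear combination, then $\rightarrow$ is blocked on $\elimtens(m_1,\abstr{xy}m_2)$ until $m_1$ is rewritten (via $\leftrightarrows$, as in the third clause of Definition~\ref{def:hook}) to its algebraic form $\sum_i\alpha_i\sdot v_{bi}$, each $v_{bi}$ being a base value of type $A\otimes B$ and hence of the shape $u_i\otimes w_i$; Rule~\eqref{rule:commtensor} then produces, up to the $\leftrightarrows$-identities, $\sum_i\alpha_i\sdot\elimtens(v_{bi},\abstr{xy}m_2)$. Each summand falls under the induction hypothesis (the $v_{bi}$ have strictly smaller length), and closure of $\sn{C}$ under $\sdot$ and $\Plus$ --- Lemmas~\ref{lem:sn_bullet} and~\ref{lem:sn_plus} --- reassembles the sum into $\sn{C}$; the degenerate value $\sum_i 0\sdot v_{bi}$ is handled in the same way. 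Since all reducts lie in $\sn{C}$, Lemma~\ref{lem:forward} concludes.

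The point demanding the most care --- and essentially the only obstacle over the pure case --- is the interaction with Definition~\ref{def:hook}: a head $\hookrightarrow$-step may first apply $\leftrightarrows$ to bring $m_1$ into algebraic form before Rule~\eqref{rule:tensor} or~\eqref{rule:commtensor} fires, so one must check that $\leftrightarrows$ preserves both termination and the proposition $A\otimes B$ (so that membership in $\sn{A\otimes B}$ is stable under $\leftrightarrows$, by the reduct-closed definition of the interpretation) and that the algebraic form of a value of type $A\otimes B$ is again either a single $u\otimes w$ or a linear combination of such base values. The mixed fragment adds the base forms $\mathcal B(\p v)$ and $\tau(\cdot)$, but neither can be the outermost constructor of a value of type $A\otimes B$, so the case split above is exhaustive and the argument coincides with that of Lemma~\ref{lem:sn_tenselim_p}.
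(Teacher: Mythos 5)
Your proof is correct and follows essentially the same route as the paper: the paper proves the pure case (Lemma~\ref{lem:sn_tenselim_p}) by induction on $\ell(m_1)$ with Lemma~\ref{lem:forward}, splitting into a reduction inside $m_1$, a tensor value handled by the hypothesis on $m_2$, and a linear combination handled via Rule~\eqref{rule:commtensor} together with Lemmas~\ref{lem:sn_bullet} and~\ref{lem:sn_plus}, and then states that the mixed case follows by the same argument. Your additional remarks on the $\leftrightarrows$ normalisation step and on the shape of mixed base values of type $A \otimes B$ only make explicit details the paper leaves implicit.
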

\begin{proof}
  The proof for this lemma follows using the same argument as the one for Lemma~\ref{lem:sn_tenselim_p}.
  \qed
\end{proof}

\begin{lemma}[Adequacy of $\mathcal B^m$]
  \label{lem:sn_l}
  If $\p m \in \sn{\p P}$, then $\mathcal B(\p m) \in \sn{\mathcal B(\p P)}$.
\end{lemma}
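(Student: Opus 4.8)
The hypothesis $\p m\in\sn{\p P}$ gives $\p m\in\SN$, and $\p m$ is closed (it is typed by $\mathcal B_i^m$). First I would establish $\mathcal B(\p m)\in\SN$. The key observation is that a $\hookrightarrow$-step from $\mathcal B(\p m)$ is of exactly one of two kinds: either it takes place inside $\p m$ --- mirroring, under the context $\mathcal B([.])$, a $\hookrightarrow$-step of $\p m$ (possibly after the algebraic rewriting recorded by $\aform{\mathcal B(\p m)}=1\sdot\mathcal B(\aform{\p m})$) --- or it is an instance of rule~\eqref{rule:commB}, $\mathcal B(\p a\sdot\p{v_b})\rightarrow|\p a|^2\sdot\mathcal B(\p{v_b})$, which is the only $\rightarrow$-rule whose redex can sit at the outermost $\mathcal B$. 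Steps of the first kind occur only finitely often since $\p m\in\SN$; a step of the second kind occurs at most once along any path, because its reduct $|\p a|^2\sdot\mathcal B(\p{v_b})$ admits no further useful reduction ($\p{v_b}$ is a value, hence in $\SN$, and $\mathcal B(\p{v_b})$ has no $\rightarrow$-redex, $\p{v_b}$ being a base value rather than a scalar multiple), and by Lemma~\ref{lem:dot_sn} it lies in $\SN$. Combining these bounds shows every reduction sequence from $\mathcal B(\p m)$ terminates.

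Next I would verify the defining clause of $\sn{\mathcal B(\p P)}$: that $\mathcal B(\p m)\hookrightarrow^*\mathcal B(\p v)$ with $\p v$ a pure value forces $\p v\in\sn{\p P}$. The outer shape $\mathcal B(-)$ can only be destroyed by rule~\eqref{rule:commB}, whose reduct has the form $|\p a|^2\sdot\mathcal B(-)$; this becomes $\mathcal B(-)$-shaped again (after the collapse $1\sdot M\leftrightarrows M$) only when $|\p a|=1$, in which case $\p v$ is a base value, lies in $\SN$, and --- having type $\one$ or a $\p\otimes$-type --- belongs to $\sn{\p P}$ for the corresponding $\p P$. In every other case the reduction is purely internal, so $\p m\hookrightarrow^*\p v$. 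Since $\p v$ is a value of type $\p P$ it has exactly the outermost shape dictated by $\p P$, and everything reachable from $\p v$ under $\hookrightarrow$ is reachable from $\p m$. A short case analysis on $\p P$ then transfers the condition defining $\sn{\p P}$ from $\p m$ to $\p v$: for $\p P\in\{\p\top,\one\}$ it is immediate from $\p v\in\SN$; for $\p P_1\pmultimap\p P_2$, $\p P_1\pwith\p P_2$, and $\p P_1\p\otimes\p P_2$ one applies the hypothesis $\p m\in\sn{\p P}$ to the relevant reducts $\lambda\abstr{x}n'$, $\pair{\p{v_1}}{\p{v_2}}$, or $\p{v_1}\otimes\p{v_2}$ of $\p m$ to obtain the required membership of the immediate subterms. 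Hence $\p v\in\sn{\p P}$, and so $\mathcal B(\p m)\in\sn{\mathcal B(\p P)}$.

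The main obstacle is rule~\eqref{rule:commB}: since $\mathcal B$ is not linear, this is the only reduction that can act at the outermost position of $\mathcal B(\p m)$, so one must check both that it cannot cause non-termination (it fires at most once, and its reduct is essentially terminal) and that it does not corrupt the reduct analysis --- in particular the corner case $|\p a|=1$, where collapsing $1\sdot(-)$ restores a $\mathcal B$-shaped term. A general ``CR2'' lemma asserting that each $\sn{T}$ is closed under $\hookrightarrow$, which follows directly from the uniform shape of the definitions, would make the second step routine; without it, the only point requiring care is that $\p v$, already being a value of type $\p P$, carries precisely the structure demanded by $\sn{\p P}$.
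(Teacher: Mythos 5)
Your overall strategy is the same as the paper's: decompose any reduction of $\mathcal B(\p m)$ into steps taking place inside $\p m$ plus the one rule that can act at the outermost $\mathcal B$, and then transfer the conditions defining $\sn{\p P}$ from $\p m$ to the inner value. You are in fact more careful than the paper, whose two-line argument (``any reduction that applies to $\mathcal B(\p m)$ is through $\p m$'') silently ignores rule~\eqref{rule:commB}; your termination analysis (the rule fires at most once, its reduct is handled via Lemma~\ref{lem:dot_sn}) is fine.

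However, the corner case you flag --- $|\p a|=1$ with $\p a\neq\p 1$ --- is exactly where the argument has a genuine gap, and you dispose of it too quickly. There $\mathcal B(\p m)\hookrightarrow^*\mathcal B(\p a\sdot\p{v_b})\hookrightarrow 1\sdot\mathcal B(\p{v_b})\hookrightarrow\mathcal B(\p{v_b})$, so you must show $\p{v_b}\in\sn{\p P}$. When $\p P=\one$ this is immediate, as you say; but when $\p{v_b}=\p{v_n^1}\otimes\p{v_n^2}$ and $\p P=\p{P_1}\p\otimes\p{P_2}$, membership in $\sn{\p{P_1}\p\otimes\p{P_2}}$ requires $\p{v_n^1}\in\sn{\p{P_1}}$ and $\p{v_n^2}\in\sn{\p{P_2}}$, and being a value is not enough for that (e.g.\ $\p{v_n^1}$ may be a $\lambda$-abstraction, whose reducibility demands the substitution condition). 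This does not follow from the hypothesis $\p m\in\sn{\p{P_1}\p\otimes\p{P_2}}$: the relevant reduct of $\p m$ is $\p a\sdot(\p{v_n^1}\otimes\p{v_n^2})$, which is itself a $\hookrightarrow$-normal value and, since $\leftrightarrows$ preserves the unique algebraic form (Theorem~\ref{lem:unique_alg}) whose coefficient is $\p a\neq\p 1$, it never reaches a term of the shape $\p{v_1}\otimes\p{v_2}$; hence the defining clause of $\sn{\p{P_1}\p\otimes\p{P_2}}$ is vacuous for this reduct and yields nothing about its components. So the phrase ``having type $\one$ or a $\p\otimes$-type --- belongs to $\sn{\p P}$'' is an unproved claim rather than a routine check; closing it would require either strengthening the definition of $\sn{\p P\p\otimes\p Q}$ so that it also constrains reducts of the form $\sum_i\p{a_i}\sdot(\p{v_i}\otimes\p{w_i})$, or an auxiliary lemma supplying reducibility of the tensor components. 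To be fair, the paper's own proof has the same hole, since it never confronts rule~\eqref{rule:commB} at all; but as written your argument does not establish the statement in this case.
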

\begin{proof}
  Note that any reduction that applies to $\mathcal B(\p m)$ is through $\p m$. Hence, $\mathcal B(\p m)$
  strongly terminates. Since, $\mathcal B(\p m) \hookrightarrow \mathcal B(\p m') \implies \p m \hookrightarrow \p m'$,
  we have $\mathcal B(\p m) \in \sn{\mathcal B(\p P)}$.
  \qed
\end{proof}

\begin{lemma}[Adequacy of $\mathcal B(\p\multimap)_e^m$]
  \label{lem:sn_arrow_l}
  If $m \in \sn{\mathcal B(\p{P \multimap Q})}$ and $n \in \mathcal B(\p P)$ then $m~n \in \sn{\mathcal B(\p Q)}$.
\end{lemma}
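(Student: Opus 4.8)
The plan is to mirror the proof of Lemma~\ref{lem:sn_arrow_ep}. Since $m~n$ is built using the elimination rule $\mathcal B(\pmultimap)_e^m$, it is neutral, so by Lemma~\ref{lem:forward} (CR3) it suffices to check that every one-step reduct of $m~n$ lies in $\sn{\mathcal B(\p Q)}$; I would organise this as an induction on $\ell(m) + \ell(n)$. The sets $\sn{\mathcal B(\p{P \multimap Q})}$, $\sn{\mathcal B(\p P)}$ and $\sn{\mathcal B(\p Q)}$ are closed under $\hookrightarrow$ directly from their definitions, and this will be used freely.

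First I would handle the contextual reducts: if the step occurs inside $m$ (resp. inside $n$), the reduct is $m'~n$ with $m' \in \sn{\mathcal B(\p{P \multimap Q})}$ and $\ell(m') < \ell(m)$ (resp. $m~n'$ with $n' \in \sn{\mathcal B(\p P)}$ and $\ell(n') < \ell(n)$), so the induction hypothesis applies. The substantive case is when both $m$ and $n$ are normal, hence values by Progress (Theorem~\ref{thm:prog}). Typing restricts their shape: a closed value of type $\mathcal B(\p R)$ must be $\mathcal B(\p v)$ or a linear combination of such, so $m \leftrightarrows \sum_{i} p_i \sdot \mathcal B(\p{v_i})$ and $n \leftrightarrows \sum_{j} q_j \sdot \mathcal B(\p{w_j})$ for pure values $\p{v_i}, \p{w_j}$. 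After this algebraic normalisation (which is exactly what the third clause defining $\hookrightarrow$ performs), Rule~\eqref{rule:commBapp} fires and the only reduct is $\sum_{i,j} p_i q_j \sdot \mathcal B(\p{v_i~w_j})$.

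The step I expect to be the main obstacle is showing this reduct is in $\sn{\mathcal B(\p Q)}$, because the definition of $\sn{\mathcal B(-)}$ only constrains reducts literally of the form $\mathcal B(\p v)$, whereas a proper linear combination $\sum_i p_i \sdot \mathcal B(\p{v_i})$ is not of that form, so it is not immediate that the $\p{v_i}$ and $\p{w_j}$ are themselves in the right interpretation sets. This is where working with the \emph{extended} reduction of Appendix~\ref{app:SN} is essential: the extra rules $M \Plus N \hookrightarrow M$, $M \Plus N \hookrightarrow N$ and $\alpha \sdot M \hookrightarrow M$ let one project $\sum_i p_i \sdot \mathcal B(\p{v_i}) \hookrightarrow^* \mathcal B(\p{v_{i_0}})$ for any chosen index $i_0$ (and similarly for $n$). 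Hence $m \hookrightarrow^* \mathcal B(\p{v_{i_0}})$, so $\p{v_{i_0}} \in \sn{\p{P \multimap Q}}$ by definition of $\sn{\mathcal B(\p{P \multimap Q})}$, and symmetrically $\p{w_{j_0}} \in \sn{\p P}$ for every $j_0$.

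Once the components are placed in the correct interpretation sets, the rest is routine assembly: Lemma~\ref{lem:sn_arrow_ep} gives $\p{v_i~w_j} \in \sn{\p Q}$; Lemma~\ref{lem:sn_l} (adequacy of $\mathcal B_i^m$) lifts it to $\mathcal B(\p{v_i~w_j}) \in \sn{\mathcal B(\p Q)}$; Lemma~\ref{lem:sn_bullet} absorbs the scalars $p_i q_j$; and iterating Lemma~\ref{lem:sn_plus} handles the finite sum, giving $\sum_{i,j} p_i q_j \sdot \mathcal B(\p{v_i~w_j}) \in \sn{\mathcal B(\p Q)}$. Since all one-step reducts of $m~n$ then lie in $\sn{\mathcal B(\p Q)}$, Lemma~\ref{lem:forward} yields $m~n \in \sn{\mathcal B(\p Q)}$. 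The only bits needing care are the $\leftrightarrows$-bookkeeping preceding the application of Rule~\eqref{rule:commBapp} and confirming the closure of the interpretation sets under $\hookrightarrow$, both of which are straightforward.
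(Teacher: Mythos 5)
Your proposal is correct and follows essentially the same route as the paper's proof: induction on $\ell(m)+\ell(n)$ with Lemma~\ref{lem:forward}, the contextual cases by the induction hypothesis, and the head case via Rule~\eqref{rule:commBapp} followed by Lemmas~\ref{lem:sn_arrow_ep}, \ref{lem:sn_l}, \ref{lem:sn_bullet} and~\ref{lem:sn_plus}. The only difference is that you spell out, via the projection rules of the extended relation, why the pure components of the linear combinations lie in $\sn{\p{P\multimap Q}}$ and $\sn{\p P}$ — a step the paper compresses into ``the definition of $\sn{\mathcal B(\p P)}$.''
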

\begin{proof}
  Again, proof is by induction on $\ell(m) + \ell(n)$ and Lemma~\ref{lem:forward}. We prove all one-step reducts of $m~n$ are in
  $\sn{\mathcal B(\p Q)}$. If a reduction applies to either $m$
  or $n$, then the resulting term is in $\sn{\mathcal B(\p Q)}$ by induction hypothesis. Otherwise $m$ is of the form $\textstyle\sum_{i=1}^n p_i \sdot \mathcal B(\p{\lambda \abstr{x}m_i'})$
  and $n$ is of the form $\textstyle\sum_{j=1}^k q_j \sdot \mathcal B(\p{v_j})$. In this case, $m~n$ reduces to $\textstyle\sum_{i=1}^n\textstyle\sum_{j=1}^kp_iq_j \sdot \mathcal B(\p{\lambda \abstr{x}m'~n})$.
  Now the claim follows
  by Lemma~\ref{lem:sn_arrow_ep}, Lemma~\ref{lem:sn_bullet}, Lemma~\ref{lem:sn_plus} and the definition of $\sn{\mathcal B(\p P)}$ for any proposition $\p P$.
  \qed
\end{proof}

\begin{lemma}[Adequacy of $\mathcal B(\otimes)^m$]
  \label{lem:sn_tau}
  If $m \in \sn{\mathcal B(\p P) \otimes \mathcal B(\p Q)}$, then $\tau(m) \in \sn{\mathcal B(\p{P \otimes Q})}$.
\end{lemma}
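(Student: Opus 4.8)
The plan is to mirror the structure of the preceding adequacy lemmas, and in particular Lemma~\ref{lem:sn_arrow_l}. The key structural observation is that $\tau(m)$ is neutral (its last derivation rule is $\mathcal B(\otimes)^m$), so Lemma~\ref{lem:forward} reduces the goal to showing that every element of $\mathit{Red}(\tau(m))$ lies in $\sn{\mathcal B(\p{P \otimes Q})}$. I would then argue by induction on $\ell(m)$, with a case analysis on the one-step reduct $N$ of $\tau(m)$.

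First, if the reduction takes place inside $m$, i.e.\ $N = \tau(m')$ with $m \hookrightarrow m'$, then $m' \in \sn{\mathcal B(\p P) \otimes \mathcal B(\p Q)}$ (the reducibility sets are closed under $\hookrightarrow$) and $\ell(m') < \ell(m)$, so the induction hypothesis gives $\tau(m') \in \sn{\mathcal B(\p{P \otimes Q})}$. Otherwise the reduction fires at the head $\tau$, which forces $m$ --- possibly after rewriting it to $\aform{m}$ via $\leftrightarrows$ --- to be a value; inspecting the value grammar at type $\mathcal B(\p P) \otimes \mathcal B(\p Q)$ shows $m$ is either $\mathcal B(\p v) \otimes \mathcal B(\p w)$, or a linear combination $\sum_i p_i \sdot v_{bi}$ whose base summands are themselves of the form $\mathcal B(\p v_i) \otimes \mathcal B(\p w_i)$ (the residual variable case is trivial, since $\tau(x)$ is normal and never reaches a term $\mathcal B(\p v)$).

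In the single-tensor case, $\tau(\mathcal B(\p v) \otimes \mathcal B(\p w)) \hookrightarrow \mathcal B(\p v \otimes \p w)$ by Rule~\eqref{rule:casting}. Unfolding the hypothesis $m \in \sn{\mathcal B(\p P) \otimes \mathcal B(\p Q)}$ (applied to $m$ itself, already in the form $v \otimes w$) yields $\mathcal B(\p v) \in \sn{\mathcal B(\p P)}$ and $\mathcal B(\p w) \in \sn{\mathcal B(\p Q)}$, and unfolding these in turn gives $\p v \in \sn{\p P}$ and $\p w \in \sn{\p Q}$; Lemma~\ref{lem:sn_tens_p} then gives $\p v \otimes \p w \in \sn{\p{P \otimes Q}}$, and Lemma~\ref{lem:sn_l} gives $\mathcal B(\p v \otimes \p w) \in \sn{\mathcal B(\p{P \otimes Q})}$. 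In the linear-combination case, $\tau(m) \hookrightarrow \sum_i p_i \sdot \tau(v_{bi})$ by Rule~\eqref{rule:commtau}; each $\tau(v_{bi})$ lies in $\sn{\mathcal B(\p{P \otimes Q})}$ by the single-tensor argument, so $\sum_i p_i \sdot \tau(v_{bi}) \in \sn{\mathcal B(\p{P \otimes Q})}$ by Lemmas~\ref{lem:sn_bullet} and~\ref{lem:sn_plus}. A final appeal to Lemma~\ref{lem:forward} concludes.

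I expect the main obstacle to be the interaction with the linear-combination value forms: the head $\tau$ is not simply erased but first duplicated across the summands by Rule~\eqref{rule:commtau} and only then consumed by Rule~\eqref{rule:casting}, so some care is needed to see that $\sum_i p_i \sdot \tau(v_{bi})$ is strongly normalising and, when it is a genuine multi-summand term, never reduces to a bare $\mathcal B(\p v)$ (making the defining condition of $\sn{\mathcal B(\p{P \otimes Q})}$ vacuous there); and one must extract reducibility of the pure components $\p v_i,\p w_i$ from the (possibly vacuously satisfied) defining condition of $\sn{\mathcal B(\p P) \otimes \mathcal B(\p Q)}$, exactly as in the proof of Lemma~\ref{lem:sn_arrow_l}. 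The remaining steps are routine rewriting bookkeeping.
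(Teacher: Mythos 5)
Your proposal is correct and follows essentially the same route as the paper's proof: induction on $\ell(m)$ combined with Lemma~\ref{lem:forward}, the inner-reduction case by the induction hypothesis, the head case via Rules~\eqref{rule:casting} and~\eqref{rule:commtau}, and closure under sums and scalars via Lemmas~\ref{lem:sn_bullet} and~\ref{lem:sn_plus}. If anything, you are slightly more explicit than the paper (citing Lemma~\ref{lem:sn_l} for the final step and resolving the linear-combination summands through the single-tensor case rather than a loose appeal to the induction hypothesis), but the argument is the same.
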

\begin{proof}
  Proof is by  induction on $\ell(\p m)$ and Lemma~\ref{lem:forward}. If a reduction applies to $m$ then the resulting term is in 
  $\sn{\mathcal B(\p{P \otimes Q})}$ by induction hypothesis. If $m$ is a value of the form $\mathcal B(\p{v_1}) \otimes \mathcal B(\p{v_2})$
  then the term reduces to $\mathcal B(\p{v_1 \otimes v_2})$ which is in $\sn{\mathcal B(\p{P \otimes Q})}$, from Lemma~\ref{lem:sn_tens_p}
  and $m \in \sn{\mathcal B(\p P) \otimes \mathcal B(\p Q)}$. Otherwise $m$ is of the form $\textstyle\sum_{i=1}^n p_i \sdot v_i$, in which case one of the
  following rules may apply:
  \begin{align*}
    \tau(m) &\hookrightarrow \textstyle\sum_{i=1}^n p_i \sdot \tau(v_i)\\
    \tau(m) &\hookrightarrow \tau(p_i \sdot v_i)
  \end{align*}
  The resulting term in both the cases is in $\sn{\mathcal B(\p{P \otimes Q})}$ by induction hypothesis,
  Lemma~\ref{lem:sn_bullet} and Lemma~\ref{lem:sn_plus}. Hence, all one step reducts of $\tau(m)$ are in $ \sn{\mathcal B(\p{P \otimes Q})}$.
  Now the result follows by Lemma~\ref{lem:forward}.
  \qed
\end{proof}

\begin{lemma}[Adequacy]
\label{lem:subst-strong-norm}
For the judgement $\Gamma \vdash M : T$, if $\theta \vDash \Gamma$ then $\theta(M) \in \sn T$.
\end{lemma}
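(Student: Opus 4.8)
The plan is to prove Lemma~\ref{lem:subst-strong-norm} by induction on the derivation of $\Gamma \vdash M : T$, routing each derivation rule to the matching ``Adequacy of $\cdot$'' lemma established above. Before the case analysis I would record three bookkeeping facts used uniformly. \emph{(i)} Since every context in the system is a concatenation of pairwise disjoint contexts, any $\theta \vDash \Gamma, \Delta$ restricts to $\theta|_{\Gamma} \vDash \Gamma$ and $\theta|_{\Delta} \vDash \Delta$, and $\theta(M)$ depends only on the restriction of $\theta$ to $\mathrm{fv}(M)$; in particular, for rules (such as $\zero_e^m$) whose conclusion carries an extra, unused context, the extra variables simply disappear under $\theta$. \emph{(ii)} For a bound variable chosen fresh by $\alpha$-conversion---so that it is not in $\mathrm{dom}(\theta)$ and does not occur in any value in the image of $\theta$---one has the substitution-composition identity $\theta[x \mapsto V](M) = (V/x)\,\theta(M)$. \emph{(iii)} Each $\sn{T}$ is a set of \emph{well-typed} terms of $T$, so at every step I must keep the terms produced well-typed; this is guaranteed by the Substitution Lemma~\ref{lem:subst} together with the Preservation theorem~\ref{thm:sub-red}, applied to the typing derivations supplied by the induction.

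Then I would run the case analysis. The axiom rule is immediate: $\theta(x) \in \sn{S}$ by $\theta \vDash x{:}S$. The rules $\mathrm{sum}$ and $\mathrm{prod}(\alpha)$ follow from the induction hypothesis on the premises together with Lemmas~\ref{lem:sn_plus} and~\ref{lem:sn_bullet}. The rule $\mathrm{ex}$ is trivial, since permuting declarations changes neither $\theta \vDash \Gamma$ nor $\theta(M)$. Each remaining non-modal rule is discharged by its Adequacy lemma applied to the induction hypotheses on its premises (after splitting $\theta$ via \emph{(i)}): $\top_i^p$ and $1_i$ are immediate; $\zero_e^m$ uses Lemma~\ref{lem:sn_elimzero}; $1_e$ uses Lemmas~\ref{lem:sn_elimone_p}/\ref{lem:sn_elimone_m}; $\multimap_i$ uses Lemmas~\ref{lem:sn_arrow_p}/\ref{lem:sn_arrow_m}; $\multimap_e$ uses Lemmas~\ref{lem:sn_arrow_ep}/\ref{lem:sn_arrow_em}; $\pwith_i^p$ uses Lemma~\ref{lem:sn_with}; $\pwith_{e_1}^p,\pwith_{e_2}^p$ use Lemmas~\ref{lem:sn_with_e1}/\ref{lem:sn_with_e2}; $\oplus_{i1}^m,\oplus_{i2}^m$ use Lemmas~\ref{lem:sn_oplus_l}/\ref{lem:sn_oplus_r}; $\oplus_e^m$ uses Lemma~\ref{lem:sn_oplus_elim}; $\otimes_i$ uses Lemmas~\ref{lem:sn_tens_p}/\ref{lem:sn_tens_m}; and $\otimes_e$ uses Lemmas~\ref{lem:sn_tenselim_p}/\ref{lem:sn_tenselim_m}. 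In every rule with a binder ($\multimap_i$, $\pwith_{e_i}^p$, $\oplus_e^m$, $\otimes_e$, $1_e$), the premise of the relevant Adequacy lemma---``$(V/x)n \in \sn{\cdot}$ for all $V$ in the candidate of the bound type''---is exactly what the induction hypothesis gives when applied to the corresponding sub-derivation under $\theta$ extended on the bound variable, using \emph{(ii)}.

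For the modality: in $\mathcal B_i^m$ the premise $\p{\vdash m : P}$ is closed, so applying the induction hypothesis with the empty substitution (vacuously valid for the empty context) gives $\p m \in \sn{\p P}$, and Lemma~\ref{lem:sn_l} then yields $\mathcal B(\p m) = \theta(\mathcal B(\p m)) \in \sn{\mathcal B(\p P)}$. The rule $\mathcal B(\pmultimap)_e^m$ is handled by Lemma~\ref{lem:sn_arrow_l} on the induction hypotheses for its two premises, and $\mathcal B(\otimes)^m$ by Lemma~\ref{lem:sn_tau}.

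The main obstacle is concentrated in the binder cases: I must choose the bound variables fresh so that the composition identity \emph{(ii)} holds, verify that the extended substitution is still valid for the extended context (so that $V$ ranges over precisely the candidate set demanded by the Adequacy lemma), and carry the well-typedness invariant \emph{(iii)} through, so that ``$\,\in \sn{T}$'' remains meaningful for the intermediate terms. All of the genuinely delicate reasoning---the interleaving of $\leftrightarrows$ and $\rightarrow$ inside $\hookrightarrow$, the passage to algebraic and base forms, and the backward-closure (\textsf{CR3}) argument---has already been isolated in the per-rule Adequacy lemmas above, so this final step is essentially an orchestration argument rather than a source of new difficulty.
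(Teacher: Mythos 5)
Your proposal is correct and takes essentially the same route as the paper's own proof: induction on the typing derivation, dispatching each rule to the corresponding per-rule adequacy lemma (sum/prod via Lemmas~\ref{lem:sn_plus} and~\ref{lem:sn_bullet}, the connective rules via their respective lemmas, and the modality cases via Lemmas~\ref{lem:sn_l}, \ref{lem:sn_arrow_l}, \ref{lem:sn_tau}), extending $\theta$ on the bound variable in binder cases and using the empty substitution for the closed premise of $\mathcal B_i^m$. The extra bookkeeping you make explicit (context splitting, freshness for the substitution-composition identity, well-typedness) is left implicit in the paper but does not change the argument.
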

\begin{proof}
  Proof is by induction on the derivation of $\Gamma\vdash M:T$.
  \begin{itemize}
    \item If the last rule is $ax$, $M=x$: Since $\theta\vDash\Gamma$, we have $\theta(x)\in\sn T$.
    \item If the last rule is sum, $M=M' \Plus N'$: Since $\theta(M' \Plus N') = \theta M' \Plus \theta N'$, from the induction
      hypothesis and Lemma~\ref{lem:sn_plus} it follows that $\theta M \in \sn T$.
    \item If the last rule is prod$(\alpha)$: $M$ is $\alpha \sdot N$: Since $\theta (\alpha \sdot N) = \alpha \sdot (\theta N)$, from the induction
      hypothesis and Lemma~\ref{lem:sn_bullet} it follows that $\theta M \in \textbf{SN}(T)$.
    \item If the last rule is $\topintro_i^p$, $M$ is $\p\topintro$: Since $M$ is closed and irreducible we have $\theta M = M$, and $M \in \sn{\p\top}$.
    \item If the last rule is $0_e^m$, $M$ is $\elimzero(m)$: Note that $\theta M = \elimzero(\theta m)$. Now the claim follows from the induction hypothesis 
      applied to $\theta  m $ and Lemma~\ref{lem:sn_elimzero}.
    \item If the last rule is $1_i$, $M$ is $\ast$: Then $M$ is closed irreducible proof term of $1$. Hence, $\theta M = M$, and we have $\theta M \in \sn{T}$.
    \item If the last rule is $1_e$, $M$ is $\elimone(m,n)$: Note that $\theta {\elimone(m,n)} = {\elimone(}\theta {m, } \theta {n)}$.
      Now the claim follows from the induction hypothesis and Lemmas~\ref{lem:sn_elimone_p} and \ref{lem:sn_elimone_m}.
    \item If the last rule is $\multimap_i$, $M$ is $\lambda \abstr{x}N$: 
      Note that $\theta ({\lambda \abstr{x}N}) = {\lambda \abstr{x}(}\theta {N)}$. 
      We need to show that $\theta({(V/x)N}) \in \sn{ T}$ for any $ V \in \sn{ S}$. However, note that $\theta({(V/x)N}) = \theta'( N)$ for a new substitution function,
      $\theta' = \theta \cup \{x \mapsto  V \}$. Since $ V \in \sn{ S}$, we have $\theta' \vDash {\Gamma, x : S}$.
      Hence, by the induction hypothesis, it follows that $\theta'( M)  \in \sn{ T}$. Hence, the claim follows from
      Lemmas~\ref{lem:sn_arrow_p} and~\ref{lem:sn_arrow_m}.
    \item If the last rule is $\multimap_e$, $M$ is $M'~N'$: Note that $\theta M = (\theta M')~(\theta N')$. Now the claim follows using the induction hypothesis on the terms
      $\theta M', \theta N'$ and applying Lemma~\ref{lem:sn_arrow_ep}, Lemma~\ref{lem:sn_arrow_em}.
    \item If the last rule is $\p\with_i^p$, $M$ is $\p{\pair{m_1}{m_2}}$: Note that $\theta M = \p{\langle} \theta \p{m_1 ,} \theta \p{m_2 \rangle}$. Now the claim follows from the
          induction hypothesis and Lemma~\ref{lem:sn_with}.
	\item If the last rule is $\p\with_{ei}^p$, $M$ is $\p{\elimwith^i(m_1, \abstr{x} m_2)}$: Thus, $\theta M = \p{\elimwith^i(}\theta \p{m_1, \abstr{x}}\theta \p{m_2)}$. Now 
	  the claim follows from the induction hypothesis and Lemma~\ref{lem:sn_with_e1}, Lemma~\ref{lem:sn_with_e2}.
    \item If the last rule is $\oplus_{i1}^m$, $M$ is $\inl(m)$: In this case $\theta M = \inl(\theta m)$. Now the claim follows from the induction hypothesis and Lemma~\ref{lem:sn_oplus_l}.
    \item If the last rule is $\oplus_{i2}^m$, $M$ is $ \inr(m)$: In this case $\theta M = \inr(\theta m)$. Now the claim follows from the induction hypothesis and Lemma~\ref{lem:sn_oplus_r}.
    \item If the last rule is $\oplus_e^m$, $M$ is $\elimplus(m, \abstr{x} n, \abstr{y} o)$: Thus, $\theta M = \elimplus(\theta m, \abstr{x} \theta n, \abstr{y} \theta o)$. Now the 
	  claim follows from the induction hypothesis and Lemma~\ref{lem:sn_oplus_elim}.
    \item If the last rule is $\otimes_i$, $M$ is  $\p{m \otimes n}$ or $m \otimes n$: We prove one case, the other follows analogously. 
          Note that $\theta M = \theta \p{m} \p \otimes \theta \p{n}$. Hence from the induction hypothesis
	  it follows that $\theta \p{m} \in \sn{\p P}$ and $\theta \p{n} \in \sn{\p Q}$. Now the claim follows from Lemmas~\ref{lem:sn_tens_p} and~\ref{lem:sn_tens_m}. 
    \item If the last rule is $\otimes_e$, $M$ is $\p{\elimtens(m , \abstr{xy}n)}$ or $\elimtens(m , \abstr{xy}n)$: Again we prove one case, the other follows analogously.
      In this case $\theta M = \p{\elimtens(}\theta \p{m, \abstr{xy}}\theta \p{n)}$. Hence the claim follows from the induction hypothesis and
      Lemmas~\ref{lem:sn_tenselim_p} and~\ref{lem:sn_tenselim_m}. 
    \item If the last rule is $\mathcal B^m$, $M$ is $\mathcal B(\p m)$: Since $\p m$ is a closed term $\theta \p{m} = \p m$, hence from the induction hypothesis and Lemma~\ref{lem:sn_l},
      $\theta M = \mathcal B(\p m) \in \sn{\p P}$.
    \item If the last rule is $\mathcal{B}(\p\multimap)^m_e$, $M$ is $m~n$: Note that $\theta M = (\theta m)~(\theta n)$. Now the claim follows using the induction hypothesis on the terms
	  $\theta m, \theta n$ and applying Lemma~\ref{lem:sn_arrow_l}.
    \item If the last rule is $\mathcal{B}(\p\otimes)^m$, $M$ is $\tau(m)$: Claim follows by the observation that $\theta \tau(m) = \tau(\theta m)$, an application of 
	  induction hypothesis and Lemma~\ref{lem:sn_tau}.
    \item If the last rule is $ex$, then the claim directly follows from the induction hypothesis.
  \end{itemize}
  \qed
\end{proof}

\strongnorm*
\begin{proof}
  The result follows from Lemma~\ref{lem:subst-strong-norm}, by using the identity function as the substitution function.
  \qed
\end{proof}

\section{Omitted Proofs in Section~\ref{sec:denotational}}\label{app:denotational}
\begin{lemma}[Substitution]
\label{lem:subst_d}
Let $\Gamma, x : S \vdash M : T$, $\Delta \vdash N : S$, and
$\Gamma , \Delta \vdash (N/x)M : T$ be valid judgements.  
Then:
\[
  \lrb{\Gamma , \Delta \vdash (N/x)M : T}
  = \lrb{\Gamma, x : S \vdash M : T}
  \circ (id_{\lrb{\Gamma}} \otimes \lrb{\Delta \vdash N : S}).
\]
\end{lemma}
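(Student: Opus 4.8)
The plan is to prove the identity by induction on the derivation of $\Gamma, x : S \vdash M : T$, unfolding the defining clauses of Figure~\ref{fig:t_d} and relying on naturality of the structural isomorphisms of $\FHilb$ and $\FCstar$ together with Lemma~\ref{lem:pm_eq}; the two fragments are handled uniformly, since $\FHilb$ and $\FCstar$ share the relevant categorical structure. Throughout, write $\iota \defeq id_{\lrb{\Gamma}} \otimes \lrb{\Delta \vdash N : S}$, so that the goal reads $\lrb{(N/x)M} = \lrb{M} \circ \iota$, where on the right-hand side $\lrb{M}$ is taken with respect to the context $\Gamma, x : S$ and on the left with respect to $\Gamma, \Delta$ (suppressing, as in Figure~\ref{fig:t_d}, the coherence isomorphisms presenting these contexts as iterated tensors). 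The case analysis runs in parallel with that of the syntactic substitution lemma, Lemma~\ref{lem:subst}; note that the well-typedness of $(N/x)M$ is given by hypothesis, so only the equation needs proof.

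For the base case ($ax$, $M = x$) the context $\Gamma$ is empty, $T = S$, and $(N/x)x = N$, so the claim reduces, via the left unitor identifying $\lrb{\emptyset} \otimes \lrb{\Delta}$ with $\lrb{\Delta}$, to $\lrb{\Delta \vdash N : S} = id_{\lrb{S}} \circ \lrb{N}$, which is immediate since $\lrb{x : S \vdash x : S} = id$. There are two families of cases in which $x$ does not literally occur in $M$ although $x : S$ is in the context: $M = \topintro$ and $M = \elimzero(m)$, with $x$ lying in the portion of the context absorbed by those rules. In both, the morphism in question factors through the zero object of the relevant category (for $\topintro$ the codomain $\lrb{\p\top}$ is the zero object; for $\elimzero(m)$ the factor $\lrb{m}$ maps into $\lrb{\zero} = 0$), so $\lrb{(N/x)M}$ and $\lrb{M}$ are the zero morphisms out of their respective contexts, and precomposing a zero morphism with $\iota$ again yields a zero morphism; hence the two sides agree. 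The cases $M = \ast$ and $M = \mathcal B(\p m)$ cannot arise with a non-empty context, by rules $1_i$ and $\mathcal B_i^m$.

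For the inductive step one treats each term former of Figure~\ref{fig:t_d}. Linearity guarantees that, in every multi-premise rule, the declaration $x : S$ lies in the context of exactly one premise; the induction hypothesis is applied to that premise, the interpretations of the remaining premises are left untouched, and one recomputes. For the algebraic formers $M = M_1 \Plus M_2$ and $M = \alpha \sdot M_1$, substitution distributes over the constructor and the conclusion follows from the induction hypothesis together with the fact that sums and scalar multiples on homsets, being defined pointwise, are preserved by precomposition. For each logical former, $\lrb{M}$ is a composite of the $\lrb{M_i}$ with one of the canonical maps — the symmetry $\sigma$, the associator, a projection $\pi_i$, an injection $i_j$, the distributor $d$, $\eval$, $\tau$, or currying $\Phi$ — applied to a rearrangement of the context tensor; the induction hypothesis replaces exactly one factor $\lrb{M_i}$ by $\lrb{M_i} \circ \iota_i$, where $\iota_i$ acts only on that premise's slice of the context, and naturality of the canonical map together with bifunctoriality of $\otimes$ lets one factor out a single $\iota$ on the whole new context. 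When $x$ occurs under a binder (as in $\multimap_i$, $\otimes_e$, $\pwith_{e_i}^p$, $\oplus_e^m$), the substitution commutes with the binder, the induction hypothesis is invoked with the context extended by the bound variable(s), the extra identity factors pass through unchanged, and for $\multimap_i$ one uses naturality of $\Phi$ in the form $\Phi(g \circ (k \otimes id)) = \Phi(g) \circ k$. The exchange rule $ex$ is handled by naturality of the symmetry $\sigma$.

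The main obstacle is the coherence bookkeeping rather than anything conceptual: Figure~\ref{fig:t_d} silently inserts associators and symmetries to present the context of a conclusion in a fixed shape, so in the cases $\multimap_e$, $\otimes_i$, $\otimes_e$, $\pwith_{e_i}^p$, $\oplus_e^m$, $\mathcal B(\pmultimap)_e^m$ and $\mathcal B(\otimes)^m$ one must make these isomorphisms explicit and verify that the morphism produced by the induction hypothesis genuinely commutes past them by naturality. Once this is carried out carefully in one representative case — say $M = M_1~M_2$ typed by $\multimap_e$, where $\lrb{M_1~M_2} = \eval \circ (\lrb{M_1} \otimes \lrb{M_2})$ and, without loss of generality, $x$ lies in the context of $M_1$, so that $\lrb{((N/x)M_1)~M_2} = \eval \circ ((\lrb{M_1} \circ \iota_1) \otimes \lrb{M_2}) = \eval \circ (\lrb{M_1} \otimes \lrb{M_2}) \circ (\iota_1 \otimes id) = \lrb{M_1~M_2} \circ \iota$ — the remaining cases are mechanical variations.
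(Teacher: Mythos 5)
Your overall strategy---induction on the typing derivation, unfolding the clauses of Figure~\ref{fig:t_d}, and pushing the precomposition $\iota = id_{\lrb{\Gamma}} \otimes \lrb{\Delta \vdash N : S}$ through by naturality of the canonical maps and bifunctoriality of $\otimes$---is exactly the paper's (its Lemma~\ref{lem:subst_d} is proved by the same induction), and most of your cases are handled correctly: the base case, the impossibility of $\ast$ and $\mathcal B(\p m)$ under a non-empty context, the multiplicative rules, the binders via naturality of $\Phi$, and exchange via $\sigma$. Your factor-through-zero argument for $\p\topintro$ and $\elimzero$ is in fact a little slicker than the paper's explicit computation.

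There is, however, one concrete flaw: the claim that ``linearity guarantees that, in every multi-premise rule, the declaration $x : S$ lies in the context of exactly one premise'' is false for the \emph{additive} rules of $\mathcal B$-IMALL. In $\pwith_i^p$ both premises carry the same context $\p\Gamma$, and in $\oplus_e^m$ the two branch continuations share $\Gamma$ (the sum rule also duplicates the context---you treat sum correctly, but that already contradicts the claim). So the recipe ``apply the induction hypothesis to that one premise and leave the others untouched'' breaks down precisely there: when $x : S$ lies in $\p\Gamma$, the substitution hits both components, $(N/x)\p{\pair{m}{n}} = \p{\pair{(N/x)m}{(N/x)n}}$, and similarly both branches of $\elimplus$. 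One must invoke the induction hypothesis for each premise sharing the context and then use $\langle f \circ h, g \circ h\rangle = \langle f, g\rangle \circ h$ (respectively the corresponding fact for copairing, together with naturality of the distributor $d$) to extract a single $\iota$---which is exactly what the paper's proof does in those cases. These identities follow from the (co)universal properties, so the repair is routine, but as written your inductive step fails for $\pwith_i^p$ and for $\oplus_e^m$ when the substituted variable occurs in the shared branch context.
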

\begin{proof}
  By induction on the derivation of $\Gamma,x:S\vdash M:T$.
  We consider the last rule in the derivation.
  \begin{description}
    \item[({\normalfont ax})] $x : S \vdash x : S$: In this case $(N/x)M=N$, $T=S$, and $\Gamma = \emptyset$.
    \item[({\normalfont sum})] $\Gamma , x : S \vdash M_1 \Plus M_2 : T$: This case follows directly from the induction hypothesis and the observation that $(N/x)(M_1 \Plus M_2) = (N/x)M_1 \Plus (N/x)M_2$.
    \item[({\normalfont prod}$(\alpha)$)] $\Gamma , x : S \vdash \alpha \sdot M : S$: This case also follows directly from the induction hypothesis and the observation that 
      $(N/x)\alpha \sdot M = \alpha \sdot (N/x)M$.
    \item[($\top_i^p$)] $\p{\Gamma \vdash \topintro : \top}$: This case follows from $!_{\lrb{\Gamma}\otimes\lrb{\Delta}} = {!}_{\lrb{\Gamma}\otimes\lrb{S}}\circ(id\otimes\lrb{\Delta \vdash N : \top})$.
    \item[($0_e^m$)] $\Gamma , \Delta \vdash \elimzero(m) : A$: In this case, first observe that $(n/x)\elimzero(m) = \elimzero((n/x)m)$. Now the claim follows from induction hypothesis applied to $m$
      and the following computation:
      \begin{align*}
	&\lrb{\Gamma, \Delta \vdash \elimzero((n/x)m) : A}\\
	&= !_{\zero\otimes\lrb{\Gamma}, A} \circ (\lrb{(n/x)m} \otimes id)\\
	&= !_{\zero\otimes\lrb{\Gamma}, A} \circ (\lrb{m} \otimes id) \circ (\lrb{n} \otimes id)
      \end{align*}
    \item[($1_i$)] $\vdash \ast : \one$: This case is impossible, since $\Gamma,x:S\neq\emptyset$.
    \item[($1_e$)] $\Gamma , x : S \vdash \elimone(M_1,M_2) : T$: In this case we have $(N/x)\elimone(M_1,M_2)=$ $\elimone((N/x)M_1, (N/x)M_2)$.
      We consider the case when $x$ appears in $M_1$, the other case follows using a similar   
      argument. Let $\Gamma = \Gamma_1, \Gamma_2$ so that $\Gamma_1 , x : S \vdash M_1 : \one$ and
      $\Gamma_2 \vdash M_2 : T$. Then we have,
      $\lrb{\Gamma, \Delta \vdash \elimone((N/x)M_1 , M_2) : T}$ $= \lambda_T \circ (\lrb{(N/x)M_1} \otimes \lrb{M_2})$.
      Applying the induction hypothesis, we have that $\lrb{\Gamma_1, \Delta \vdash (N/x)M_1 : \one}$
      $= \lrb{\Gamma_1, x : S \vdash M_1 : \one} \circ (id_{\lrb{\Gamma_1}} \otimes \lrb{\Delta \vdash N : \one})$.
      Hence we have,
      \begin{align*}
	&\lrb{\Gamma, \Delta \vdash \elimone((N/x)M_1, M_2) : T} \\
	&=\lambda_T \circ ((\lrb{\Gamma_1, x : S \vdash M_1 : \one}
	\circ (id_{\lrb{\Gamma_1}} \otimes \lrb{\Delta \vdash N : S})) \otimes \lrb{M_2}) \\
	&=\lambda_T \circ ((\lrb{\Gamma_1, x : S \vdash M_1 : \one} \otimes \lrb{M_2})
	\circ (id_{\lrb{\Gamma_1}} \otimes \lrb{\Delta \vdash N : S} \otimes id_{\lrb{\Gamma_2}})) \\
	&=\lrb{\Gamma, x : S \vdash \elimone(M_1,M_2) : T} \circ (id_{\lrb{\Gamma}} \otimes \lrb{\Delta \vdash N : S})
      \end{align*}
    \item[($\multimap_i$)] $\Gamma , x : S \vdash \lambda \abstr{y}M : {T \multimap R}$: We have $(N/x)\lambda \abstr{y}M = \lambda \abstr{y}(N/x)M$. We use the induction hypothesis
      on $(N/x)M$ to obtain the following:
      \begin{align*}
	&\lrb{\Gamma , \Delta \vdash \lambda \abstr{y}(N/x)M : {T \multimap R}} \\      
	&=\Phi (\lrb{\Gamma , \Delta , y : T \vdash (N/x)M : R}) \\
	&=\Phi (\lrb{\Gamma , y : T , x : S \vdash M : R}
	\circ (id_{\lrb{\Gamma, y : T}} \otimes \lrb{\p{\Delta \vdash N : S}})) \\
	&=\lrb{\Gamma, x : S \vdash \lambda \abstr{y}m : {T \multimap R}} \circ (id_{\lrb{\Gamma}} \otimes \lrb{\Delta \vdash n : S})
      \end{align*}
    \item[($\multimap_e$)] $\Gamma , x : S \vdash M_1~M_2 : T$:  We have $(N/x)(M_1~M_2) = (N/x)M_1~(N/x)M_2$. We consider the case when $x$ appears in $M_1$, the other
      case follows using a similar argument. Again we let $\Gamma = \Gamma_1 , \Gamma_2$ so that $\Gamma_1 , x : S \vdash M_1 : {R \multimap T}$, and
      $\Gamma_2 \vdash M_2 : R$. Let $f,g$ denote $\lrb{\Gamma_1 , x : S \vdash M_1 : {R \multimap T}},$ and $\lrb{\Gamma_2 \vdash M_2 : R}$ respectively.
      \begin{align*}
	&\lrb{\Gamma, \Delta \vdash (N/x)M_1~M_2 : T} \\
	&=\textit{eval}_{T, R}
	\circ (\lrb{\Gamma_1, \Delta \vdash (N/x)M_1 : {R \multimap T}} \otimes \lrb{\Gamma_2 \vdash M_2 : R}) \\
	&=\textit{eval}_{T, R} \circ ((f \circ (id_{\lrb{\Gamma_1}} \otimes \lrb{\Delta \vdash N : S})) \otimes g) \\
	&=\textit{eval}_{T, R} \circ (f \otimes g) \circ (id_{\lrb{\Gamma_1},\Gamma_2} \otimes \lrb{\Delta \vdash N : S})\\
	&=\lrb{\Gamma , x : S \vdash M_1~M_2 : T} \circ (id_{\Gamma} \otimes \lrb{\Delta \vdash N : S})
      \end{align*}
    \item[($\pwith_i^p$)] $\p{\Gamma , x : R \vdash \pair{m_1}{m_2} : P \with Q}$: Note that we have $\p{(n/x)\pair{m_1}{m_2}}=$ $\p{\pair{(n/x)m_1}{(n/x)m_2}}$.
      We use the induction hypothesis for both the terms $\p{(n/x)m_1} , \p{(n/x)m_2}$. We have:
      \begin{align*}
	&\lrb{\p{\Gamma \vdash (n/x)\pair{m_1}{m_2} : P \with Q}}\\
	&=\langle \lrb{\p{\Gamma \vdash (n/x)m_1 : P}} , \lrb{\p{\Gamma \vdash (n/x)m_2 : Q}}\rangle\\
	&= \langle \lrb{\p{\Gamma, x : R \vdash m_1 : P}}\circ (id_{\p \Gamma} \otimes \lrb{\p{\Delta \vdash n : R}}),\\
	&\hspace{2cm}\lrb{\p{\Gamma, x : R \vdash m_2 : Q}}\circ (id_{\p\Gamma} \otimes \lrb{\p{\Delta \vdash n : R}})\rangle\\
	&= \langle \lrb{\p{\Gamma, x : R \vdash m_1 : P}} , \lrb{\p{\Gamma, x : R \vdash m_2 : Q}} \rangle
	\circ (id_{\p \Gamma} \otimes \lrb{\p{\Delta \vdash n : R}})\\
	&= \lrb{\p{\Gamma , x : R \vdash \pair{m_1}{m_2} : P \with Q}} \circ (id_{\p\Gamma} \otimes \lrb{\p{\Delta \vdash n : R}})
      \end{align*}
    \item[($\pwith_{ei}^p$)] $\p{\Gamma , x : P' \vdash \elimwith^i(m', \abstr{y}n') : P}$: We present the case where $\p x$ appears in $\p m'$ first. Let $\p \Gamma = \p{\Gamma_1}, \p{\Gamma_2}$,
      so that $\p{\Gamma_1 , x : P' \vdash m' : Q \with R}$ and $\p{\Gamma_2 , y : Q \vdash n' : P}$ are valid. We have:
      \begin{align*}
	&\p{\lrb{\Gamma ,\Delta \vdash \elimwith^i((n/x)m', \abstr{y}n') : P}}\\
	&= \lrb{\p n'} \circ (\pi_i \otimes id_{\p \Gamma_2}) \circ (\lrb{\p{(n/x)m'}} \otimes id_{\p \Gamma_2})
	\circ  (id_{\lrb{\p \Gamma_1}} \otimes \sigma_{\lrb{\p \Gamma_2}, \lrb{\p \Delta}})\\
	&= \lrb{\p n'} \circ (\pi_i \otimes id_{\p \Gamma_2}) \circ (\lrb{\p m'} \otimes id_{\p \Gamma_2})
	\begin{aligned}[t]
	  &\circ (id_{\p \Gamma_1} \otimes \lrb{\p n} \otimes id_{\p \Gamma_2})\\
	  &\circ (id_{\lrb{\p \Gamma_1}} \otimes \sigma_{\lrb{\p \Gamma_2}, \lrb{\p \Delta}})
	\end{aligned}\\
	&= \lrb{\p n'} \circ (\pi_i \otimes id_{\p \Gamma_2}) \circ (\lrb{\p m'} \otimes id_{\p \Gamma_2}) \circ(id_{\p \Gamma} \otimes \lrb{\p n})\\
	&= \lrb{\p{\Gamma, x : P' \vdash \elimwith^i(m', \abstr{y}n') : P}} \circ (id_{\p \Gamma} \otimes \lrb{\p n})
      \end{align*}
      For the case when $\p x$ appears in $\p n'$, again let $\p \Gamma = \p \Gamma_1 , \p \Gamma_2$ so that
      $\p{\Gamma_1 \vdash m' : Q \with R}$, $\p{\Gamma_2 , \Delta \vdash (n/x)n' : P}$ are valid.
      We have $\p{\Delta \vdash n : P'}$. It suffices to show that $((\pi_1 \circ \lrb{\p n}) \otimes id_{\lrb{\p{\Gamma_2, P'}}}) \circ (id_{\lrb{\p \Gamma}} \otimes \lrb{\p n}) =$
      $(id_{\lrb{\p {Q, \Gamma_2}}} \otimes \lrb{\p n}) \circ ((\pi_1 \circ \lrb{\p m'}) \otimes id_{\lrb{\p{\Gamma_2}, \p \Delta}})$. But this holds,
      because $\p m',\p n$ are independent of each other, and act on disjoint contexts.
    \item[($\oplus_{i1}^m$)] $\Gamma , x : C \vdash \inl(m') : A \oplus B$: We have $(n/x)\inl(m') = \inl((n/x)m')$. We apply the induction hypothesis on $(n/x)m'$. Hence,
      \begin{align*}
	&\lrb{\Gamma , \Delta \vdash (n/x)\inl(m') : A \oplus B}\\
	&= i_1 \circ (\lrb{\Gamma, \Delta  \vdash (n/x)m' : A})\\
	&= i_1 \circ \lrb{\Gamma, x: C \vdash m' : A} \circ (id_{\Gamma} \otimes \lrb{\Delta \vdash n : C})\\
	&= \lrb{\Gamma, x: C \vdash \inl(m') : A \oplus B} \circ (id_{\Gamma} \otimes \lrb{\Delta \vdash n : C})
      \end{align*}
    \item[($\oplus_{i2}^m$)] $\Gamma , x : C \vdash \inr(m') : A \oplus B$: This case follows using the same argument as that for the above case.
    \item[($\oplus_{e}^m$)] $\Gamma , x : A' \vdash \elimplus(m' , \abstr{y}n', \abstr{z}o') : C$: Suppose $x$ appears in $m'$. In this case $(n/x)\elimplus(m' , \abstr{y}n', \abstr{z}o') =$
      $\elimplus((n/x)m', \abstr{y}n' , \abstr{z}o')$. Let $\Gamma = \Gamma_1, \Gamma_2$ so that
      $\Gamma_1 , x : A' \vdash m' : A \oplus B$, $\Gamma_2, y: A  \vdash n' : C$ and $\Gamma_2, z : B \vdash o' : C$ are valid. Let $f,g,h$ denote $\lrb{\Gamma_2 ,y : A \vdash n' : C},$
      $\lrb{\Gamma_2, z : B \vdash o' : C},$ $\lrb{\Gamma_1, x : A' \vdash m' : A\oplus B}$ respectively. We have:
      \begin{align*}
	&\lrb{\Gamma \vdash \elimplus((n/x)m', \abstr{y}n' , \abstr{z}o' : C)}\\
	&= [f,g] \circ d \circ \sigma \circ ((h \circ (id \otimes \lrb{n}))\otimes id)\\
	&= [f,g] \circ d \circ \sigma \circ (h \otimes id) \circ (id \otimes \lrb{n} \otimes id)\\
	&= [f,g] \circ d \circ \sigma \circ (h \otimes id) \circ (id \otimes \lrb{n})
      \end{align*}
      Suppose $x$ appears in $n',o'$. Then the term $(n/x)\elimplus(m', \abstr{y} n', \abstr{z} o')$ is equal to $\elimplus(m', \abstr{y}(n/x)n', \abstr{z}(n/x)o')$.
      Again let $\Gamma = \Gamma_1 , \Gamma_2$ so that $\Gamma_1 \vdash m : A \oplus B$, $\Gamma_2 , x : A' , y : A \vdash n : C$ and
      $\Gamma_2, x : A', z : B \vdash o :C$ are valid. Let $f,g,h$ denote $\lrb{\Gamma_2, y : A, x : A' \vdash n' : C}$, $\lrb{\Gamma_2 , z : B, x : A' \vdash o' : C}$,
      $\Gamma_1 \vdash m' : A \oplus B$ respectively. We use the induction hypothesis on $(n/x)n', (n/x)o'$. We have:
      \begin{align*}
	&\lrb{\Gamma_1,\Gamma_2,\Delta \vdash \elimplus(m',\abstr{y}(n/x)n', \abstr{z}(n/x)o') :C}\\
	&= [ f\circ (id \otimes \lrb{n}), g \circ(id \otimes \lrb{n}) ]
	\circ d \circ \sigma \circ (h \otimes id)\\
	&= [f,g] \circ (id \otimes \lrb{n})
	\circ d \circ \sigma \circ (h \otimes id)\\
	&= [f,g] \circ d \circ \sigma \circ (h \otimes id)
	\circ id \otimes \lrb{n}
      \end{align*}
      Note that the last equality holds because $m',n$ are independent terms and acting on disjoint contexts, therefore the corresponding morphisms commute.

    \item[($\otimes_i$)] $\Gamma , x : S \vdash M_1 \otimes M_2 : {T_1 \otimes T_2}$: Without loss of generality assume $x$ appears in $M$. The case when $x$ appears in $N$ follows using
      a similar argument. In this case we have, $(N/x)(M_1 \otimes M_2) = (N/x)M_1 \otimes M_2$. Let $\Gamma = \Gamma_1, \Gamma_2$ so that $\Gamma_1, x : S \vdash M_1 : {T_1}$
      and $\Gamma_2 \vdash M_2 : {T_2}$ are valid. We use the induction hypothesis for $(N/x)M_1$. Let $f,f',g$ denote $\lrb{\Gamma_1, \Delta \vdash (N/x)M_1 : {T_1}}$,
      $\lrb{\Gamma_1, x : S \vdash M_1 : {T_1}}$, $\lrb{\Gamma_2 \vdash M_2 : {T_2}}$ respectively.
      \begin{align*}
	&\lrb{\Gamma_1, \Gamma_2, \Delta \vdash (N/x)M_1 \otimes M_2 : {T_1\otimes T_2}}\\
	&=(f \otimes g) \circ (id_{\Gamma_1} \otimes \sigma_{\lrb{\Gamma_2}, \lrb{\Delta}})\\
	&= ((f' \circ (id_{\Gamma_1} \otimes \lrb{N})) \otimes g) \circ (id_{\Gamma_1} \otimes \sigma_{\lrb{\Gamma_2}, \lrb{\Delta}})\\
	&= (f' \otimes g) \circ (id_{\Gamma_1} \otimes \lrb{N}\otimes id_{\Gamma_2}) \circ (id_{\Gamma_1} \otimes \sigma_{\lrb{\Gamma_2}, \lrb{\Delta}})\\
	&= (f' \otimes g) \circ (id_{\Gamma} \otimes \lrb{N})
      \end{align*}
    \item[($\otimes_e$)] $\Gamma , x : S \vdash \elimtens(M' , \abstr{y z}N') : T$:
      Again there are two cases here. First we consider the case when $x$
	    appears in $M'$. Applying the induction hypothesis on $M'$, we have the following:
	    \begin{align*}
		    &\lrb{\Gamma , \Delta \vdash \elimtens((N/x)M', \abstr{y z}N') : T}\\
		    &= \lrb{N'} \circ (id \otimes \lrb{(N/x)M'})\\
		    &=\lrb{N'} \circ (id \otimes(\lrb{M'} \circ (id \otimes \lrb{N})))\\
		    &=\lrb{N'} \circ (id \otimes \lrb{M'}) \circ (id \otimes \lrb{N})\\
		    &=\lrb{\elimtens(M', \abstr{y z}N')} \circ (id \otimes \lrb{N})
	    \end{align*}
	    Now we consider the case when $x$ appears in $N'$. Applying the induction hypothesis on $N'$, we have the following:
	    \begin{align*}
		    &\lrb{\Gamma , \Delta \vdash \elimtens((N/x)M', \abstr{y z}N') : T}\\
		    &= \lrb{(N/x)N'} \circ (id \otimes \lrb{M'})\\
		    &= \lrb{N'} \circ (id \otimes \lrb{N}) \circ (id \otimes \lrb{M'})\\
		    &= \lrb{N'} \circ (id \otimes \lrb{M'}) \circ (id \otimes \lrb{N})
	    \end{align*}
	    Note that the last equality holds because $M',N$ are independent terms and acting on disjoint contexts, therefore the corresponding morphisms commute.
    \item[($\mathcal B^m$)] $\vdash\mathcal B(\p m) : \mathcal B(\p P)$: This case is impossible, since $\Gamma,x:S\neq\emptyset$.
    \item[($\mathcal B(\multimap)^m$)] $\Gamma , x : A \vdash m'~n' :\mathcal B(\p Q)$: Again we consider two cases, one when $x$ appears in $m'$ and the other when it appears in $n'$.
	  First we consider the case when it appears in $m'$. Applying the induction hypothesis to $m'$, we have the following:
	  \begin{align*}
	     &\lrb{\Gamma, \Delta \vdash (n/x)m'~n' :\mathcal B(\p Q)}\\
	     &= \mathcal B(eval) \circ \tau \circ (\lrb{(n/x)m'} \otimes \lrb{n'})\\
	     &= \mathcal B(eval) \circ \tau \circ ((\lrb{m'}\circ(id \otimes \lrb{n})) \otimes \lrb{n'})\\
	     &= \mathcal B(eval) \circ \tau \circ (\lrb{m'} \otimes \lrb{n'}) \circ (id \otimes \lrb{n})
	  \end{align*}
	  Similarly, for the case when $x$ appears in $n'$, we have the following:
	  \begin{align*}
                  &\lrb{\Gamma, \Delta \vdash (n/x)m'~n' :\mathcal B(\p Q)}\\
		  &= \mathcal B(eval) \circ \tau \circ (\lrb{m'} \otimes \lrb{(n/x)n'})\\
		  &= \mathcal B(eval) \circ \tau \circ (\lrb{m'} \otimes (\lrb{n'}\circ (id \otimes \lrb{n})))\\
                  &= \mathcal B(eval) \circ \tau \circ (\lrb{m'} \otimes \lrb{n'}) \circ (id \otimes \lrb{n})
	  \end{align*}
    \item[($\mathcal B(\otimes)^m$)] $\Gamma , x : A\vdash \tau(m') : B$: This case follows easily from the observation that $(n/x)\tau(m') = \tau((n/x)m')$ and the application of the induction hypothesis
      on $m'$.
    \item[({\normalfont ex})] $\Gamma,x:S_2,y:S_1\vdash M:T$. Direct from the induction hypothesis.
      \qed
  \end{description}
\end{proof}

\begin{restatable}{lemma}{psubsteq}
  \label{lem:subst_eq_p}
  \label{lem:subst_eq_m}
  The following equations hold for valid judgements:
  \begin{enumerate}
    \item$\lrb{\Gamma \vdash \elimone( \alpha\sdot \ast , M) : T} = \lrb{\Gamma \vdash \alpha \sdot M : T}$
    \item$\lrb{\Gamma \vdash (\lambda \abstr{x}M)~N : T} = \lrb{\Gamma \vdash (N/x)M : T}$
    \item$\lrb{\p{\Gamma \vdash \elimwith^i(\pair{m_1}{m_2} , \abstr{x}n) : P}} = \lrb{\p{\Gamma \vdash (m_i/x)n : P}}$
    \item $\lrb{\Gamma \vdash \elimplus(\inl(m), x.n, y.o) : A} = \lrb{\Gamma \vdash (m/x)n : A}$
    \item $\lrb{\Gamma \vdash \elimplus(\inr(m), x.n, y.o) : A} = \lrb{\Gamma \vdash (m/y)o : A}$
    \item$\lrb{\Gamma \vdash \elimtens(M_1 \otimes M_2, \abstr{xy}N) : T} = \lrb{\Gamma \vdash (M_1/x, M_2/y)N : T}$
  \end{enumerate}
\end{restatable}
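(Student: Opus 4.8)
The plan is to prove the six equations by direct computation in the categorical model rather than by induction, since each statement concerns a single fixed redex. For every equation I would unfold the denotation of the left-hand side according to Figure~\ref{fig:t_d}, rewrite using one universal property of $\CC$ (with $\CC$ ranging over $\FHilb$ and $\FCstar$), and then recognise the right-hand side via the Substitution Lemma (Lemma~\ref{lem:subst_d}). Because the interpretation is defined uniformly on the two fragments, each argument applies verbatim to the pure and the mixed cases, so it suffices to carry out the computations once.

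For the multiplicative redexes, equation~(2) is the core case: $\lrb{(\lambda \abstr{x} M)~N} = \eval \circ (\Phi(\lrb M) \otimes \lrb N)$, and factoring the tensor as $(\Phi(\lrb M) \otimes id) \circ (id \otimes \lrb N)$ lets me apply the $\beta$-law $\eval \circ (\Phi(f) \otimes id) = f$ of the closed monoidal structure to obtain $\lrb M \circ (id \otimes \lrb N)$, which is precisely $\lrb{(N/x)M}$ by Lemma~\ref{lem:subst_d}. Equation~(6) is the same pattern applied twice: unfolding gives $\lrb{\elimtens(M_1 \otimes M_2, \abstr{xy} N)} = \lrb N \circ (id \otimes \lrb{M_1} \otimes \lrb{M_2})$, and two applications of Lemma~\ref{lem:subst_d} (substituting $M_1$ for $x$, then $M_2$ for $y$), together with associativity of $\otimes$, yield $\lrb{(M_1/x, M_2/y)N}$. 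Equation~(1) is the easiest: $\lrb{\elimone(\alpha \sdot \ast, M)} = \lambda \circ ((\alpha \sdot id_{\mathbb C}) \otimes \lrb M)$, and bilinearity of $\otimes$ over $\sdot$ (Lemma~\ref{lem:pm_eq}) together with the left-unit law turns this into $\alpha \sdot \lrb M = \lrb{\alpha \sdot M}$.

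For the additive redexes, equation~(3) follows by collapsing $\lrb{\elimwith^i(\pair{m_1}{m_2}, \abstr{x} n)} = \lrb n \circ \sigma \circ (\pi_i \otimes id) \circ (\langle \lrb{m_1}, \lrb{m_2}\rangle \otimes id)$ via the biproduct identity $\pi_i \circ \langle \lrb{m_1}, \lrb{m_2}\rangle = \lrb{m_i}$ to $\lrb n \circ \sigma \circ (\lrb{m_i} \otimes id)$, and then recognising this as $\lrb{(m_i/x)n}$ using naturality of $\sigma$, Lemma~\ref{lem:subst_d}, and the exchange rule (the $\sigma$ in the interpretation of $\elimwith$ is exactly what reconciles the context orderings). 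Equations~(4) and~(5) are handled similarly: one uses that the distributor $d$ is compatible with the coproduct injections (it sends $id \otimes i_k$ to $i_k$) and the couniversal identities $[\lrb n, \lrb o] \circ i_1 = \lrb n$ and $[\lrb n, \lrb o] \circ i_2 = \lrb o$ to reduce $\lrb{\elimplus(\inl(m), \abstr{x} n, \abstr{y} o)}$ to $\lrb n \circ \sigma \circ (\lrb m \otimes id)$ (and the $\inr$ case to $\lrb o \circ \sigma \circ (\lrb m \otimes id)$), which then matches the right-hand side by Lemma~\ref{lem:subst_d} and the exchange rule.

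The main obstacle is not any single categorical rewrite --- each is one instance of a standard law --- but the bookkeeping: carrying along the coherence isomorphisms (associators, unitors, the symmetry $\sigma$ and the distributor $d$ that appear in the interpretation of the elimination rules), and correctly splitting the ambient context $\Gamma$ into the portion typing the introduced subterm and the portion typing the eliminator's continuation, exactly as in Lemma~\ref{lem:subst_d}. This is mechanical but tedious; once it is made explicit, each of the six equations reduces to a one-line application of a universal property together with the Substitution Lemma.
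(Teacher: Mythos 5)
Your proposal is correct and follows essentially the same route as the paper: each equation is handled by unfolding the interpretation from Figure~\ref{fig:t_d}, applying one categorical law (the closed-structure $\beta$-identity for $\multimap$, bilinearity and the unitor for $\one$/$1$, the biproduct projection and coprojection identities together with $\sigma$ and $d$ for $\pwith$ and $\oplus$), and then invoking Lemma~\ref{lem:subst_d} to recognise the substituted term, with the context-splitting bookkeeping you describe. No gaps.
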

\begin{proof}
  Most cases follow from Lemma~\ref{lem:subst_d}:
  \begin{enumerate}
    \item 
      $\begin{aligned}[t]
	\lrb{\Gamma \vdash \elimone(\alpha\sdot \ast, M) : T}
	&=\lambda_{T} \circ (\alpha \sdot id_I \otimes \lrb{\Gamma \vdash M : T})\\
	&=\lambda_{T} \circ (id_I \otimes \alpha \sdot \lrb{\Gamma \vdash M : T})\\
	&=\alpha \sdot \lrb{\Gamma \vdash M : T}\\
	&=\lrb{\Gamma \vdash \alpha \sdot M : T}
      \end{aligned}$
    \item Let $ \Gamma = {\Gamma_1} , {\Gamma_2}$ with ${\Gamma_1 \vdash \lambda \abstr{x}M : S \multimap T}$ and
      ${\Gamma_2 \vdash N : S}$. Let $f,g$ denote the morphisms $\lrb{{\Gamma_1, x : S \vdash M : T}}$, $\lrb{{\Gamma_2 \vdash N : S}}$ respectively.
      \begin{align*}
	\lrb{{\Gamma \vdash (\lambda \abstr{x}M)~N : T}} 
	&=\textit{eval}_{ T,  S} \circ (\Phi(f) \otimes g) \\
	&=\varepsilon_{\lrb{ S}} \circ (\Phi(f) \otimes id_{\lrb{ S}}) \circ (id_{\lrb{ \Gamma_1}} \otimes g) \\
	&=\varepsilon_{\lrb{ S}} \circ \_ \otimes \lrb{ S}(\Phi (f)) \circ (id_{\lrb{{\Gamma_1}}} \otimes g) \\
	&=f \circ (id_{\lrb{{\Gamma_1}}} \otimes g) \\
	&=\lrb{{\Gamma \vdash (N/x)M : T}}
      \end{align*}
    \item Let $\p \Gamma = \p{\Gamma_1}, \p{\Gamma_2}$ with $\p{\Gamma_1 \vdash \pair{m_1}{m_2} : Q \with R}$ and $\p{\Gamma_2 , x : Q \vdash n : P}$.
      \begin{align*}
	&\lrb{\p{\Gamma \vdash \elimwith^i(\pair{m_i}{m_2}, \abstr{x}o) : P}}\\
	&=\lrb{\p{\Gamma_2, x : Q \vdash n : P}} \circ (\pi_i \otimes id)
	\circ (\lrb{\p{\Gamma_1 \vdash \pair{m_1}{m_2} : Q \with R}} \otimes id)\\
	&=\lrb{\p{\Gamma_2 , x : Q \vdash n : P}} \circ (\lrb{\p{\Gamma_1 \vdash m_i : Q}} \otimes id) \circ
	\sigma_{\lrb{\p{\Gamma_1}}, \lrb{\p{\Gamma_2}}}\\
	&=\lrb{\p{\Gamma \vdash (m_i/x)o : P}}
      \end{align*}
    \item Let $\Gamma = \Gamma_1, \Gamma_2$, with $\Gamma_1 \vdash \inl(m) : B \oplus C$,
      $\Gamma_2 , x : B \vdash n : A$ and $\Gamma_2, y : C \vdash o : A$. Let $f,g$ denote the morphisms
      $\lrb{\Gamma_2 , x : B \vdash n : A}$, $\lrb{\Gamma_2, y : C \vdash o : A}$ respectively.
      \begin{align*}
	\lrb{\Gamma_1, \Gamma_2 \vdash \elimplus(\inl(m), \abstr{x}n , \abstr{y}o) : C}
	&= [f,g] \circ d \circ \sigma \circ (\lrb{\inl(m)}\otimes id)\\
	&= [f,g] \circ d \circ \sigma \circ (i_1 \otimes id) \circ (\lrb{m} \otimes id)\\
	&= [f,g] \circ d \circ (id \otimes i_1) \circ \sigma \circ (\lrb{m} \otimes id)\\
	&= f \circ \sigma \circ (\lrb{m} \otimes id)\\
	&= \lrb{\Gamma_1, \Gamma_2 \vdash (m/x) n : A}
      \end{align*}
    \item This case is analogous to the previous one.
    \item Let $\Gamma = \Gamma_1, \Delta_1, \Delta_2$ with
      $\Gamma_1 , x : S_1, x : S_2 \vdash N : T$, $\Delta_1 \vdash M_1 : S_1$ and $\Delta_2 \vdash M_2 : S_2$.
      \begin{align*}
	      &\lrb{\Gamma_1, \Delta_1, \Delta_2 \vdash \elimtens(M_1 \otimes M_2, \abstr{xy}N) : T}\\
	      &=\lrb{\Gamma_1 , x : S_1, x : S_2 \vdash N : T} \circ (id \otimes \lrb{M_1} \otimes \lrb{M_2})\\
	      & = \lrb{\Gamma \vdash (M_1/x, M_2/y)N : T}
	      \tag*{\qed}
      \end{align*}
  \end{enumerate}
\end{proof}

\soundnesscor*
\begin{proof}
  By induction on the reduction relation.
  First, notice that if $M \leftrightarrows N$, then $\lrb{M} = \lrb{N}$, by Lemma~\ref{lem:pm_eq}.
  Thus, we only need to show the theorem for the case $M\rightarrow N$ (see Figure~\ref{fig:red_rightarrow}).
  \begin{itemize}
    \item Rules~\eqref{rule:one}
      to~\eqref{rule:tensor}, are covered by
      Lemma~\ref{lem:subst_eq_p}
    \item Rules~\eqref{rule:commtop} to~\eqref{rule:comminr}
      are direct consequence of Lemma~\ref{lem:pm_eq}
\item Rule \eqref{rule:casting}:
	We have $\lrb{\cdot \vdash \tau(\mathcal B(\p{v \otimes w}))} = \tau \circ \lrb{\cdot \vdash \mathcal B(\p{v \otimes w})} =
		  \lrb{\cdot \vdash \mathcal B(\p v)} \otimes \lrb{\cdot \vdash \mathcal B(w)}
		  =\lrb{\cdot\vdash\mathcal B(\p v)\otimes\mathcal B(w)}$.
	  \item Rule \eqref{rule:commtau}:
		We have $\lrb{\Gamma \vdash \tau (\textstyle\sum_{i=1}^n p_i \sdot v_{bi})} = \tau \circ \textstyle\sum_{i=1}^n p_i \sdot \lrb{v_{bi}} = \textstyle\sum_{i=1}^n p_i \tau \circ \lrb{v_{bi}}
		=\lrb{\textstyle\sum_{i=1}^n p_i\sdot \tau (v_{bi})}$,
		because $\tau$ is a completely positive isomorphism.
    \item Rule~\eqref{rule:commB}:
      $\begin{aligned}[t]
	\lrb{\vdash \mathcal B(\p{a}\sdot \p{v_b}) : \mathcal B(\p P)}
	&= \mathcal B(\p a \sdot \lrb{\p{v_b}})\\
	&= \p a \sdot \lrb{\p{v_b}} \circ \_ \circ (\p a \sdot \lrb{\p{v_b}})^*\\
	&= \p a \sdot \lrb{\p{v_b}} \circ \_ \circ \p{a^*} \sdot \lrb{\p{v_b}}^*\\
	&= \p{aa^*} \sdot \lrb{\p{v_b}} \circ \_ \circ \lrb{\p{v_b}}^*\\
	&= |\p a|^2 \sdot \mathcal B(\p{v_b})\\
	&=\lrb{\vdash |\p a|^2 \sdot \mathcal B(\p{v_b}) : \mathcal B(\p P)}
      \end{aligned}$

    \item Rule~\eqref{rule:commBapp}:
      Since the rule $\mathcal B^m$ applies only to closed terms, the terms
      in this rule are necessarily closed.  Moreover, by
      Theorem~\ref{thm:prog}, the first values
      must be lambda terms.  To simplify the proof, we first consider the case
      $n = k = 1$ and $p_1 = q_1 = 1$, and then generalise the result.
      \begin{itemize}
	\item 
	  Let $f = \lrb{\p{\vdash \lambda \abstr{x}m : P \multimap Q}}$, and $g = \lrb{\p{\vdash v : P}}$.
	  We have,
	  \begin{align*}
	    \lrb{\vdash \mathcal B(\p{\lambda \abstr{x}m})~\mathcal B(\p v) : \mathcal B(\p Q)}
	    &= \mathcal B (eval_{\lrb{\p P}, \lrb{\p Q}}) \circ \tau\circ (\mathcal B(f) \otimes\mathcal B(g))\\
	    &= \mathcal B (eval_{\lrb{\p P}, \lrb{\p Q}} \circ (f \otimes g))\\
	    &= \mathcal B (\lrb{\p{\vdash (\lambda \abstr{x}m)~v : Q)}}
	  \end{align*}
	\item In the general case, we have $M=(\textstyle\sum_{i=1}^n p_i \sdot
	  \mathcal B(\p{m_i}))~(\textstyle\sum_{j=1}^k q_j \sdot \mathcal
	  B(\p{m'_j}))$ and $N= \textstyle\sum_{i,j} \mathcal B(\p{m_i~m'_j})$.
	  The equality of the denotations of these two terms follows from the
	  case above and Lemma~\ref{lem:pm_eq}.
      \end{itemize}
    \item The contextual rules are straightforward and we omit them.
      \qed
  \end{itemize}
\end{proof}

\adequacy*
\begin{proof} 
  Let $[.] \vdash E : T$, with $T \in \{\one, 1, \mathcal B(\one)\}$. Case analysis.
  \begin{itemize}
    \item Let $T = \one$ or $1$:
      By Theorem~\ref{thm:prog}, we have $E[M] \hookrightarrow^* \alpha\sdot \ast$ and $E[N] \hookrightarrow^* \beta\sdot \ast$. 
      Then, by Theorem~\ref{thm:soundnesshook}
      and Lemma~\ref{lem:subst_d},
      \[
	\lrb{\vdash\alpha\sdot\ast:1}
	= \lrb{\vdash E[M]:1}
	= \lrb{\vdash E[N]:1}
	= \lrb{\vdash\beta\sdot\ast:1}
      \]
      Thus, $\alpha = \beta$.

    \item  Let $T = \mathcal B(\one)$:
      By Theorem~\ref{thm:prog}, we have $E[M] \hookrightarrow^* p\sdot\mathcal B(\p\ast)$ and $E[N] \hookrightarrow^* q\sdot \mathcal B(\p\ast)$. 
      Then, by Theorem~\ref{thm:soundnesshook} and Lemma~\ref{lem:subst_d},
      \begin{align*}
	\lrb{\vdash p\sdot\mathcal B(\p\ast):\mathcal B(\one)}
	&= \lrb{\vdash E[M]:\mathcal B(\one)}\\
	&= \lrb{\vdash E[N]:\mathcal B(\one)}\\
	&= \lrb{\vdash q\sdot\mathcal B(\p\ast):\mathcal B(\one)}
      \end{align*}
      Thus, $p = q$.
  \end{itemize}
  Therefore, $M\sim N$.
  \qed
\end{proof}
\end{document}